\title{Twin-width one}
\author{Jungho Ahn}{Korea Institute for Advanced Study (KIAS), Seoul, South~Korea \and \url{https://www.junghoahn.com/} }{junghoahn@kias.re.kr}{https://orcid.org/0000-0003-0511-1976}{Supported by the KIAS Individual Grant (CG095301) at Korea Institute for Advanced Study.}
\author{Hugo Jacob}{LIRMM, CNRS, Université de Montpellier, France}{hugo.jacob@lirmm.fr}{https://orcid.org/0000-0003-1350-3240}{}
\author{Noleen K\"ohler}{
University of Leeds, UK}{n.koehler@leeds.ac.uk}{https://orcid.org/0000-0002-1023-6530}{}
\author{Christophe Paul}{LIRMM, CNRS, Université de Montpellier, France}{paul@lirmm.fr}{https://orcid.org/0000-0001-6519-975X}{}
\author{Amadeus Reinald}{LIRMM, CNRS, Université de Montpellier, France}{amadeus.reinald@lirmm.fr}{https://orcid.org/0000-0002-8108-4036}{}
\author{Sebastian Wiederrecht}{School of Computing, KAIST, Daejeon, South Korea}{wiederrecht@kaist.ac.kr}{https://orcid.org/0000-0003-0462-7815}{}
\authorrunning{J. Ahn, H. Jacob, N. K\"{o}hler, C. Paul, A. Reinald, and S. Wiederrecht} 
\keywords{Twin-width, Hereditary graph classes, Intersection model} 
\newenvironment{subproof}[1][\proofname]{
    
    \begin{proof}[#1]}{\end{proof}
}
\newcommand{\tww}{\operatorname{tww}}
\begin{document}
\maketitle

\begin{abstract}
    We investigate the structure of graphs of twin-width at most $1$, and obtain the following results:
    \begin{itemize}
        \item Graphs of twin-width at most $1$ are permutation graphs. In particular they have an intersection model and a linear structure.
        \item There is always a $1$-contraction sequence closely following a given permutation diagram.
        \item Based on a recursive decomposition theorem, we obtain a simple algorithm running in linear time that produces a $1$-contraction sequence of a graph, or guarantees that it has twin-width more than $1$.
        \item We characterise distance-hereditary graphs based on their twin-width and deduce a linear time algorithm to compute optimal sequences on this class of graphs.
    \end{itemize}
\end{abstract}

\section{Introduction}

Twin-width is a graph invariant introduced by Bonnet, Kim, Thomassé, and Watrigant \cite{twin-width1} as a generalisation of a parameter on permutations introduced by Guillemot and Marx \cite{GuillemotM14}.
The main result of the seminal paper \cite{twin-width1} is an FPT algorithm for FO model-checking on graphs of bounded twin-width, when given a \emph{contraction sequence} certifying their width. Graphs of bounded twin-width capture a wide variety of classes such as bounded rank-width and proper minor-closed classes, unifying many results on tractable FO model-checking. Furthermore, it gives an exact dichotomy between tractability and intractability of FO model-checking for hereditary classes of ordered structures \cite{OrderedGraphs} and of tournaments \cite{Tournaments}.

While these results are sufficient to consider twin-width  an important graph parameter, we are still lacking an FPT algorithm for computing `good' contraction sequences, i.e. contraction sequences of width bounded by a function of the twin-width. Such an algorithm is required for FO model-checking to be FPT on bounded twin-width classes. 
Regarding exact algorithms for twin-width, it is known that computing twin-width is hard even for constant values: distinguishing graphs of twin-width $4$ from graphs of twin-width $5$ is NP-hard \cite{twwNPhard}.
Still, a polynomial-time algorithm for recognizing graphs of twin-width $1$ was given in \cite{BonnetKRTW22}, where the worst-case time complexity is not explicitly given but corresponds to a polynomial of degree~$4$ or~$5$ depending on implementation. Meanwhile, the hardness of recognizing twin-width~$2$ and~$3$ graphs remains open. Nevertheless, it is known that sparse graphs of twin-width~$2$ have bounded treewidth \cite{BergougnouxGGHP23}.
As for approximation, there is an algorithm to compute contraction sequences of approximate width for ordered structures \cite{OrderedGraphs}. It is still wide open whether there is an XP, let alone FPT, approximation algorithm in the general case. Regarding parameterized algorithms, some results are known with parameters that are still quite far from twin-width \cite{BalabanGR24,TwwVertexIntegrity}. Interestingly, the more general parameter flip-width introduced by Torunczyk \cite{Torunczyk23} has an XP approximation algorithm but the status of FO model-checking for this parameter is unknown.

In this paper, we are interested in the structure of graphs of twin-width $1$, and in the complexity of their recognition.
Let us briefly recall the definition of $k$-contraction sequences, which are witnesses for twin-width at most $k$. Given a graph $G$, a contraction sequence of $G$ starts with $G$, and consists in a sequence of identifications of pairs of vertices (contractions) recording `neighbourhood errors' as \emph{red edges}, ending with a graph on a single vertex. Specifically, at each step, we create red edges from the contracted vertex to vertices that were not homogeneous to the pair. That is, vertices which were not both adjacent or non-adjacent to the pair. In particular, red edges remain red. Then, a \emph{$k$-contraction sequence} is one where at each step the vertices have most $k$ incident red edges.

Towards understanding graphs of twin-width one, it will be useful to look at the structure and behaviour of classes of graphs which are related. There is a rich literature on hereditary classes of graphs related to perfect graphs. It follows from the following observation and the strong perfect graph theorem \cite{strongperfectthm} that twin-width $1$ graphs are perfect graphs. It is then natural to wonder how they compare to other subclasses of perfect graphs.
\begin{observation}[See~{\cite[Lemma~2.3]{gridtww}}]
    Every cycle of length at least~$5$ has twin-width~$2$. This also implies that complements of cycles of length at least~$5$ have twin-width~$2$.
\end{observation}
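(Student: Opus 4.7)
The plan is to establish $\tww(C_n) = 2$ for every $n \geq 5$ by matching upper and lower bounds, and then transfer to $\overline{C_n}$ via the fact that twin-width is invariant under complementation.

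For the upper bound, I would exhibit an explicit $2$-contraction sequence. Writing $V(C_n) = \{v_1, \dots, v_n\}$ in cyclic order, first contract $v_1$ and $v_2$ into a vertex $u_1$; since exactly $v_n$ and $v_3$ are adjacent to only one of the pair, $u_1$ obtains two red edges and every other vertex has red degree at most one. One then iteratively contracts the current merged vertex $u_i$ with $v_{i+2}$: the new merged vertex $u_{i+1}$ inherits the red edge from $u_i$ to $v_n$, loses the internal red edge $u_i v_{i+2}$, and creates one new red edge towards $v_{i+3}$, so its red degree remains $2$ while all other red degrees stay at most $1$. When only the triangle on $u_{n-3}, v_{n-1}, v_n$ remains, with both red edges incident to $u_{n-3}$, I contract $v_{n-1}$ and $v_n$ (producing a single red edge to $u_{n-3}$) and then the final pair, yielding a valid $2$-contraction sequence.

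For the lower bound, it suffices to show that no $1$-contraction sequence can even start. For any two vertices $u,v$ of $C_n$ with $n\ge5$, the red neighbours of the merged vertex after the first contraction are precisely the vertices in $V(C_n)\setminus\{u,v\}$ adjacent to exactly one of $u,v$, and this set has size at least $2$ in all cases: two if $u,v$ are adjacent (the two external neighbours of the pair), two if they lie at distance $2$ (the two non-common neighbours), and four if they lie at distance at least $3$ (since then $N(u)$ and $N(v)$ are disjoint). Hence the very first contraction produces a vertex of red degree $\geq 2$, giving $\tww(C_n)\geq 2$.

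Finally, for any graph $G$ and any pair of vertices $u,v$, the set of vertices outside $\{u,v\}$ adjacent to exactly one of $u,v$ is the same in $G$ and in $\overline{G}$, so performing the same sequence of contractions produces the same red edges at every step. Therefore twin-width is preserved under complementation, and $\tww(\overline{C_n})=\tww(C_n)=2$. I expect the only mildly delicate point to be the bookkeeping of inherited red edges along the iterative contraction in the upper bound; the lower bound and the complementation step are then immediate.
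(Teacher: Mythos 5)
Your proof is correct. Note that the paper does not prove this observation at all: it simply cites it (Lemma~2.3 of the referenced work on twin-width and grids), so your contribution is to supply the standard direct argument, and it goes through. The upper bound via the ``rolling'' sequence that contracts $u_i$ with $v_{i+2}$ is sound: the merged vertex always carries exactly the two red edges to $v_n$ and $v_{i+3}$, all other vertices have red degree at most $1$, and your handling of the final triangle (contracting $v_{n-1}$ with $v_n$ first, leaving a single red edge) is right. The lower bound is also complete: since $C_n$ has no pre-existing red edges, the red degree of the first merged vertex is exactly the number of vertices adjacent to exactly one of the contracted pair, and your case analysis (distance $1$, $2$, or $\geq 3$) shows this is at least $2$ whenever $n\geq 5$, so no contraction sequence can be a $1$-sequence and $\tww(C_n)\geq 2$. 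Finally, the complementation step is exactly the invariance $\tww(G)=\tww(\overline{G})$ that the paper records as \cref{obs:subgraph-tww}; your justification (the set of splitters of a pair, and hence the red edges created, are the same in $G$ and $\overline{G}$, with red edges kept and black edges complemented along the sequence) is the standard proof of that fact. The only stylistic remark is that you could have invoked the complementation invariance as a black box rather than reproving it, but nothing in your argument is missing or incorrect.
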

A natural way of understanding classes of twin-width at most $k$ is to interpret them as a generalization of cographs. Indeed, cographs are exactly the graphs which admit a sequence of twin identifications ending in a single vertex, that is, graphs of twin-width $0$. Then, graphs of twin-width $k$ correspond to relaxing the twin condition to allow for at most $k$ `twin errors'.

We now compare twin-width $1$ graphs with other generalisations of cographs.
Cographs are the graphs of clique-width at most $2$. In this direction, cographs are generalised by the class of distance-hereditary graphs which are themselves closely related to graphs of clique-width at most $3$ \cite{CorneilHLRR12}. Distance-hereditary (DH) graphs are not closed by complementation, and have a tree-like structure. They also have a characterisation by a sequential elimination of twins and pendant vertices. 
Cographs are permutation graphs, which is exactly the complementation-closed class of graphs whose edges can be transitively oriented \cite{DushnikMiller}. Permutation graphs are asteroidal triple-free (AT-free), as such, they are dominated by a path \cite{Corneil1997}. A caterpillar is an AT-free tree , the following result suggests a relation between AT-free graphs and twin-width~$1$.

\begin{lemma}[Ahn, Hendrey, Kim, and Oum~{\cite[Lemma~6.6]{AhnHKO22}}]\label{lem:tww-tree}
    For a tree~$T$, $\tww(T)\leq1$ if and only if~$T$ is a caterpillar.
\end{lemma}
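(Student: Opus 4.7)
The plan is to use the classical fact that a tree is a caterpillar if and only if it contains no copy of $S_{2,2,2}$, the spider obtained by subdividing each edge of $K_{1,3}$ exactly once, as a subgraph. Since trees are acyclic, any subgraph isomorphic to $S_{2,2,2}$ is automatically an induced subgraph; combined with the monotonicity of twin-width under induced subgraphs, this reduces the lemma to two sub-claims: (a) every caterpillar has twin-width at most $1$, and (b) $\tww(S_{2,2,2})\geq 2$.

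For (a), I would process the spine $v_1,\dots,v_k$ of the caterpillar from left to right, maintaining the invariant that after step~$i$ a single merged vertex $V_i$ represents $\{v_1,\dots,v_i\}$ together with all leaves attached to them, and the only red edge incident to $V_i$ goes to $v_{i+1}$. To advance from $V_i$ to $V_{i+1}$, first contract all leaves attached to $v_{i+1}$ into one vertex (no red edges, since these leaves are pairwise twins); then contract this leaf-vertex with $V_i$ (both agree on every vertex outside $\{v_{i+1}\}$, to which they are both adjacent, so the merge only absorbs the pre-existing red edge); finally contract the result with $v_{i+1}$, creating a single red edge to $v_{i+2}$. Every intermediate trigraph has maximum red degree at most $1$.

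For (b), label the centre of $S_{2,2,2}$ by $c$, its three neighbours by $a_1,a_2,a_3$, and the leaves by $b_1,b_2,b_3$ with $b_i$ adjacent to $a_i$. The automorphism group acts transitively on the three branches, so pairs of vertices fall into six orbits. I would check that in every orbit except $\{a_i,b_i\}$ the first contraction already produces a merged vertex of red degree at least~$2$. Thus the first step is forced to merge some $a_i$ and $b_i$ into a new vertex $a_i'$ with a single red edge to $c$. In the resulting trigraph an analogous orbit-by-orbit check shows that every further contraction either produces a vertex of red degree at least~$2$ or, in the only borderline case (merging $a_j$ with $b_j$ for $j\neq i$), lifts the red degree at $c$ to $2$. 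Hence $S_{2,2,2}$ admits no $1$-contraction sequence.

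The main obstacle is the enumeration in (b): there is no purely structural shortcut, since the pendants that make caterpillars work are precisely the features of $S_{2,2,2}$, and the obstruction only appears once three such pendants meet at a common centre. Exploiting the transitive action on branches cuts the verification down to roughly a dozen neighbourhood-symmetric-difference computations, from which both sub-claims, and hence the lemma, follow.
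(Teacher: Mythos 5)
Your proof is correct, but note that the paper you are being compared against does not prove this lemma at all: it is imported verbatim from Ahn, Hendrey, Kim, and Oum~\cite[Lemma~6.6]{AhnHKO22}, so there is no in-paper argument to measure yours against. On its own merits, your argument is sound and self-contained. The reduction is legitimate: a tree is a caterpillar if and only if it has no subgraph isomorphic to the spider $S_{2,2,2}$ (the once-subdivided $K_{1,3}$), and in a tree any such subgraph copy is automatically induced (seven vertices, six edges, acyclic), so Observation~\ref{obs:subgraph-tww} applies. Your spine sweep for the upper bound maintains the stated invariant correctly — the only blemish is the garbled phrase ``both agree on every vertex outside $\{v_{i+1}\}$, to which they are both adjacent''; what you need, and what is true, is that $V_i$ and the merged leaf-vertex are both adjacent to $v_{i+1}$ and both non-adjacent to everything else, so the contraction only inherits the existing red edge. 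Your lower-bound analysis for $S_{2,2,2}$ also checks out: among the six orbits of pairs, only $\{a_i,b_i\}$ yields red degree at most $1$, and after that forced first contraction every second contraction either creates a vertex of red degree $2$ or raises the red degree of $c$ to $2$, so no $1$-sequence exists. This is essentially the standard argument (and in the same spirit as the original proof in \cite{AhnHKO22}); you could shorten the enumeration further by noting that any contraction of two vertices at distance exactly $2$ whose common neighbour has another neighbour, or of two non-adjacent vertices with disjoint neighbourhoods of total size at least $2$, immediately gives red degree $2$, but the orbit check as you describe it already suffices.
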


Cographs, distance-hereditary graphs, and permutation graphs can all be recognised in linear time. It is only natural to expect that the closely related class of twin-width $1$ graphs can also be recognised efficiently.

\paragraph*{Previous results on graphs of twin-width at most $1$}

The recognition algorithm for twin-width one given in \cite{BonnetKRTW22} makes use of the relation between modules and twin-width, by observing that the twin-width of a graph can be determined by considering independently the subgraphs induced by modules (see Lemma~\ref{lem:seq-modular}).
It is straightforward to deduce that the twin-width of a graph is the maximal twin-width over all prime nodes of its modular decomposition (defined in \cref{sec:prelim}). Since this decomposition may be computed in linear time \cite{McConnellS99}, one then only needs to recognise prime graphs of twin-width one. In their paper \cite{BonnetKRTW22}, the recognition is done by branching on valid sequences that have at most one red edge at any intermediate step of the sequence, after the observation that such sequences always exist. Furthermore, when the trigraph has a red edge, the only possible contractions in such a sequence involve at least one vertex of the red edge.
Essentially, the complexity of their algorithm stems from the need to branch over all possible first contractions, to greedily simulate a contraction sequence for each possible first contraction, and the lack of an efficient way to detect eventual twins to be contracted at each step.
In particular, a more explicit understanding of the structure of graphs of twin-width at most $1$ seems to be required to avoid the enumeration of all pairs of possible first contractions.

\paragraph*{Our results}
We investigate the structure of graphs of twin-width at most $1$, and uncover that it defines a relatively well-behaved hereditary class of graphs, which fits surprisingly well in the landscape of classical hereditary classes.
Graphs of twin-width at most~$1$ were already known to have some form of linear structure, due to a bound on their linear rankwidth in \cite{BonnetKRTW22}. We show that they are in fact contained in the class of permutation graphs, thus they admit an intersection model called `permutation diagram', making it easier to reason on their structure. Furthermore, we show that there is a close relationship between permutation diagrams and $1$-contraction sequences. We use these structural results to obtain a recursive decomposition theorem for prime graphs of twin-width at most~$1$. We then use it to devise a simple linear time algorithm to compute a $1$-contraction sequence, or conclude that the graph has twin-width more than~$1$. The algorithm starts by computing a permutation diagram and a modular decomposition which can be done in time $O(n+m)$, where~$n$ is the number of vertices and~$m$ is the number of edges, and then it checks that the diagram respects the decomposition theorem in time~$O(n)$.

Regarding the challenge of avoiding the enumeration of possible first contractions, we can significantly reduce the number of candidates by using the permutation diagram and related properties of contraction sequences. We consider a slightly different scheme to avoid this challenge. Instead of guessing the first contraction, we guess the vertex that will be contracted last. It turns out that, in well-behaved contraction sequences, such a vertex holds a special position in permutation diagrams, reducing the number of candidates to $4$. 
However, we do not exclude the possibility that the scheme of the previous algorithm can be implemented in linear time with some further arguments.

Interestingly, our structural analysis via the permutation diagrams allows us to avoid a characterisation by forbidden induced subgraphs. Moreover, we prove most structural tools inductively and in a unified way instead of proving the observations required for our algorithm separately.

We also present some intermediate results on distance-hereditary graphs and split decompositions (defined in \cref{sec:DH}) that turned out to be unnecessary for the recognition algorithm thanks to the use of permutation diagrams. We show that twin-width $1$ trigraphs with a pendant red edge are distance-hereditary. We also show that distance-hereditary graphs have twin-width at most $2$. Combining these results, we can compute an optimal contraction sequence for distance-hereditary graphs in linear time. We also obtain a simple inductive proof that all distance-hereditary AT-free graphs are permutation graphs, and a characterisation of the twin-width of a distance-hereditary graph by the structure of its split decomposition. This is a generalisation of the characterisation of twin-width~$1$ trees \cite{AhnHKO22} mentioned above.

Distance-hereditary graphs are also related to the following infinite family of permutation graphs of twin-width $2$: linking two obstructions to distance-hereditary graphs (domino, house, or gem) by a path of arbitrary length. In particular, this class of examples shows that a linear time algorithm for the recognition of twin-width $1$ graphs cannot be deduced from the linear algorithm to find patterns in a permutation of Guillemot and Marx \cite{GuillemotM14}.

We also observe that bipartite graphs of twin-width $1$ constitute a well-behaved subclass. Indeed, in their $1$-contraction sequences, all trigraphs are bipartite. This is because there are no induced odd cycles of length more than $3$, and triangles have at most one red edge from which we can deduce an original edge and complete it in a triangle of the starting graph (this is a particular case of \cref{obs:trigraphs-induced}). They constitute a subclass of bipartite permutation graphs whose twin-width is at most 2. Indeed, bipartite permutation graphs and proper interval graphs admit universal graphs that are unions of independent sets (resp. cliques) with half graphs between consecutive independent sets (resp. cliques), see \cite{BrandstadtL03,Lozin2011}. Both universal graphs are easily seen to admit a contraction sequence where the red graph remains a path. As a consequence we can also compute twin-width on these two classes.

\paragraph*{Perspectives}

One may hope that understanding the structure of graphs of twin-width $1$ well enough can guide a subsequent study of graphs of twin-width $2$. We expect that this hereditary class can be characterised by a tree-like decomposition with a very constrained local structure. Our results also point toward the fact that split decompositions could be a convenient tool for the analysis of the structure of twin-width $2$ graphs.

In another direction, one may wonder if it is possible to compute other values of the twin-width efficiently for permutation graphs. An FPT approximation algorithm is already known in this setting \cite{twin-width1}, although one may try to find tighter approximation bounds.

\paragraph*{Organisation}
We organise this paper as follows.
In Section~\ref{sec:prelim}, we present some terminology and useful known results on twin-width, and on permutation graphs.
In Section~\ref{sec:permutation}, we show that every graph of twin-width at most~$1$ is a permutation graph.
In Section~\ref{sec:algo}, we present a linear-time algorithm that recognises graphs of twin-width at most~$1$.
In Section~\ref{sec:DH}, we present results related to distance-hereditary graphs and split decompositions.

\section{Preliminaries}\label{sec:prelim}

In this paper, all graphs are finite and simple.
For an integer~$i$, we denote by~$[i]$ the set of positive integers at most~$i$.
Note that if $i\leq0$, then~$[i]$ is an empty set.
A subset $I\subseteq[i]$ is an \emph{interval} of~$[i]$ if there are integers $i_1,i_2\in[i]$ such that $I=\{j:i_1\leq j\leq i_2\}$.

Let~$G$ be a graph. We denote the complement graph by $\overline{G}$.
For a vertex~$v$ of~$G$, we denote by $N_G(v)$ the set of neighbours of~$v$, and let $N_G[v]:=N_G(v)\cup\{v\}$. Then, $\overline{N}_G(v) = V(G) \setminus N[v]$.
For a set $X\subseteq V(G)$, let $N_G[X]:=\bigcup_{v\in X}N_G[v]$ and let $N_G(X):=N_G[X]\setminus X$.
Distinct vertices~$v$ and~$w$ of~$G$ are \emph{twins in~$G$} if $N_G(v)\setminus\{w\}=N_G(w)\setminus\{v\}$.
For a set $X\subseteq V(G)$, we denote by $G[X]$ the subgraph of~$G$ induced by~$X$.
A \emph{dominating set} of~$G$ is a set $D\subseteq V(G)$ such that $N_G[D]=V(G)$.
A \emph{dominating path} of~$G$ is a path whose vertex set is a dominating set of~$G$.
Three vertices form an \emph{asteroidal triple} if each pair of vertices is connected by a path that does not dominate the third vertex.
A graph is \emph{asteroidal triple-free} (AT-free) if it contains no asteroidal triple.
An \emph{independent set} of a graph~$G$ is a set $I\subseteq V(G)$ such that~$G$ has no edge between two vertices in~$I$.

A \emph{module} of a graph~$G$ is a set $M\subseteq V(G)$ such that for every vertex $v\in V(G)\setminus M$, $v$ is either adjacent to every vertex in~$M$, or nonadjacent to every vertex in~$M$. If $M$ is not a module, we call \emph{splitter} of $M$ a vertex $v \in V(G) \setminus M$, that has both a neighbour and a non-neighbour in $M$. Note that $M$ is a module exactly if it has no splitter.
A module is \emph{trivial} if it either has size at most~$1$, or is equal to~$V(G)$.
A graph is \emph{prime with respect to modules}, or simply \emph{prime}, if all of its modules are trivial.
A \emph{modular partition} of a graph~$G$ is a partition of~$V(G)$ where each part is a module of~$G$.
For a modular partition $\mathcal{M}=(M_1,\ldots,M_\ell)$ of a graph~$G$, we denote by~$G/\mathcal{M}$, the subgraph of~$G$ induced by a set containing exactly one vertex from each~$M_i$.
A module $M$ is \emph{strong} if it does not overlap with any other module, i.e. for every module $M'$, we have $M \subseteq M'$, or $M' \subseteq M$, or $M \cap M' = \varnothing$.

The \emph{modular decomposition} of a graph~$G$ is a tree $T$ corresponding to the recursive modular partition by the maximal strong modules. The leaves of the tree are vertices of $G$, and each internal node $n$ has an associated quotient graph $Q(n)$ which encodes the adjacency relation between vertices of $G$ introduced in subtrees below $n$. In particular, quotient graphs are induced subgraphs of $G$, and the modular decomposition of an induced subgraph of $G$ can easily be deduced from the modular decomposition of $G$. There are three types of internal nodes: \emph{series}, \emph{parallel}, and \emph{prime}, corresponding to cliques, independent sets and prime graphs (with respect to modules). Prime graphs are graphs whose modules are all trivial i.e. singletons and the complete vertex set. Series and parallel nodes are called \emph{degenerate}, and two degenerate nodes of the same type may not be adjacent in $T$. The reason they are called degenerate is because any subset of vertices of a clique or an independent set is a module.
Cographs are exactly the graphs whose modular decompositions have no prime nodes, their modular decomposition is usually called a cotree.

\subsection{Permutation graphs}\label{subsection:permutation-intro}

A graph~$G$ on~$n$ vertices is a \emph{permutation graph} if there are two linear orderings $\sigma:V(G)\to[n]$ and $\tau:V(G)\to[n]$ such that two vertices~$u$ and~$v$ are adjacent in~$G$ if and only if $\{u,v\}$ is an inversion of $\sigma^{-1}\circ\tau$.
We remark that $\sigma^{-1}\circ\tau$ and its inverse permutation $\tau^{-1}\circ\sigma$ have the same set of inversions.
A \emph{permutation diagram} of~$G$ with respect to~$\sigma$ and~$\tau$ is a drawing of~$n$ line segments between two parallel lines~$\ell_1$ and~$\ell_2$ such that each of~$\ell_1$ and~$\ell_2$ has~$n$ distinct points, and for each $v\in V(G)$, there is a line segment between $\sigma(v)$-th point of~$\ell_1$ and $\tau(v)$-th point of~$\ell_2$.
Note that two vertices are adjacent in~$G$ if and only if their corresponding line segments intersect each other in the permutation diagram.

We denote by $G[\sigma,\tau]$ the graph on $V$ where $uv$ is an edge if and only if $\{u,v\}$ is an inversion of $\sigma^{-1} \circ \tau$. As pointed earlier, we have $G[\sigma,\tau]=G[\tau,\sigma]$. If $G=G[\sigma,\tau]$, we call $(\sigma,\tau)$ a \emph{realiser} of $G$ (as a permutation graph).

For $\sigma$ an ordering of $V$, we define intervals of $V$ as follows: for $\{i,i+1,\dots,j-1,j\}$ an interval of $[n]$, we have an interval of $V$ for $\sigma$ $\{\sigma^{-1}(i)=u,\dots,\sigma^{-1}(j)=v\}$ that we denote by $[u,v]_{\sigma}$. We extend the notation to open intervals similarly.

We say that $I \subseteq V$ is an \emph{interval} of realiser $(\sigma,\tau)$ if it is an interval for $\sigma$ or for $\tau$. We also say that $u,v \in V(G_i)$ are \emph{consecutive} for $(\sigma,\tau)$ if $\{u,v\}$ is an interval of $(\sigma,\tau)$. Similarly, two intervals are \emph{consecutive} if their union is an interval. We say that $I$ is a \emph{common interval} of realiser $(\sigma,\tau)$ if it is an interval for $\sigma$ and for $\tau$. A vertex $v$ of $G$ is \emph{extremal} for realiser $(\sigma,\tau)$ (or equivalently the corresponding permutation diagram) if it is the first or last vertex of~$\sigma$ or~$\tau$.

It is known that for a prime permutation graph, its permutation diagram (or equivalently its realiser) is unique up to symmetry \cite{Gallai,Gol80}. More generally, the modular decomposition encodes all possible realisers. Furthermore, we have the following known observation which will be very convenient for our analysis.

\begin{observation}\label{obs:common-itv}
    If $M$ is a common interval of a realiser of $G$, then $M$ is a module of $G$. Conversely, if $M$ is a strong module of $G$, then it is a common interval of all realisers.
\end{observation}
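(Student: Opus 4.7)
The forward direction is a case analysis on positions. Fix a realiser $(\sigma,\tau)$ in which $M$ is a common interval and let $v\in V(G)\setminus M$. Then $\sigma(v)$ lies either entirely to the left or entirely to the right of $\sigma(M)$, and similarly $\tau(v)$ with respect to $\tau(M)$. In each of the four resulting cases, one checks that whether $\{u,v\}$ is an inversion of $\sigma^{-1}\circ\tau$ depends only on the chosen sides and not on $u\in M$, so $v$ is uniformly adjacent to $M$ and $M$ is a module.

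For the converse, I would induct on the modular decomposition of $G$. Singletons and $V(G)$ are trivially common intervals. If $M$ is a strong module strictly contained in its parent $M^*$ in the modular decomposition, then by induction $M^*$ is a common interval of $(\sigma,\tau)$, so $(\sigma|_{M^*},\tau|_{M^*})$ realises $G[M^*]$ and $M$ is a maximal proper strong module therein. It therefore suffices to treat the case where $M$ is itself a maximal proper strong module of $G$, and to show $M$ is a $\sigma$-interval: a symmetric inversion argument then yields the $\tau$-interval condition.

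So let $P_1,\dots,P_k$ be the children of the modular-decomposition root with $M=P_1$, and suppose for contradiction that $M$ is not a $\sigma$-interval. Let $I_\sigma$ be the smallest $\sigma$-interval containing $M$ and let $u_1,u_2\in M$ be its $\sigma$-extremes. The central lemma is that $I_\sigma$ is a module of $G$: for $v\in I_\sigma\setminus M$, the inversions between $v$ and $\{u_1,u_2\}$ force $\tau(v)$ to lie strictly between $\tau(u_1)$ and $\tau(u_2)$ in an order determined by $v$'s uniform adjacency with $M$; for any $w\notin I_\sigma$, the inversions with $u_1,u_2$ force $\tau(w)$ to lie strictly outside the $\tau$-range of $M$ in a direction determined by $w$'s uniform adjacency with $M$. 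Together these imply that $w$'s adjacency to any such $v$ matches its uniform adjacency to $M$. Since $P_2,\dots,P_k$ are strong, non-overlap with $I_\sigma$ forces each to be either contained in or disjoint from $I_\sigma$, so $I_\sigma$ is the union of at least two of the $P_l$'s and corresponds to a non-trivial module of the quotient $G/\mathcal{P}$.

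The proof concludes by cases on the root type. If the root is series (respectively parallel), any separating vertex $w\in I_\sigma\setminus M$ splits $M$ into two non-empty $\sigma$-parts that are pairwise joined (resp.\ pairwise disjoined) in $G[M]$, so $G[M]$ has a series (resp.\ parallel) root itself, contradicting the rule that two adjacent nodes of the modular decomposition cannot share a degenerate type. The main obstacle is the prime case: there the quotient has only trivial modules, so our non-trivial quotient module is everything, forcing $I_\sigma=V(G)$ and placing the $\sigma$-extremes $u_L=\sigma^{-1}(1)$ and $u_R=\sigma^{-1}(n)$ in $M$. Since a prime permutation graph on at least four vertices has neither a universal nor an isolated vertex, $M$ has both a neighbour $P_j$ and a non-neighbour $P_l$ in the quotient. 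Picking a witness $v$ in each and running the inversion analysis on the pairs $(u_L,v)$ and $(u_R,v)$ yields the contradictory strict inequalities $\tau(u_L)>\tau(u_R)$ and $\tau(u_L)<\tau(u_R)$.
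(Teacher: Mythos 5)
Your proof is correct, but note that the paper does not actually prove \cref{obs:common-itv}: it is stated as a known observation, with the reader pointed to the literature on realisers, modular decomposition and common intervals (Gallai, Golumbic, Habib--Paul, de Montgolfier, Bergeron et al.). So there is no in-paper argument to match, and what you have produced is a self-contained derivation. Your forward direction is the standard four-case inversion check and is fine. For the converse you reduce, via the recursive structure of the modular decomposition (implicitly using the standard facts that the strong modules are exactly the tree nodes and that the strong modules of $G[M^*]$ are the strong modules of $G$ contained in $M^*$, both consistent with how the paper defines the decomposition), to the case of a maximal proper strong module $M$, and then run a clean inversion analysis: the minimal $\sigma$-interval $I_\sigma \supsetneq M$ is shown to be a module of $G$ whose image in the quotient is a non-trivial module. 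At a prime root this forces $I_\sigma = V(G)$, and taking a neighbour part and a non-neighbour part of $M$ together with the two $\sigma$-extremes $u_L,u_R \in M$ yields the contradictory inequalities $\tau(u_L) > \tau(u_R)$ and $\tau(u_L) < \tau(u_R)$; at a degenerate root any $w \in I_\sigma \setminus M$ splits $M$ into two non-empty $\sigma$-parts that are completely joined (series) or completely non-adjacent (parallel), so the root of $G[M]$ would have the same degenerate type as its parent, which is forbidden. I checked the inequalities in each case and they go through; the only cosmetic remark is that in the prime case you do not need the quotient to be a permutation graph --- primality alone excludes universal and isolated vertices. Compared with the literature route the paper leans on, which obtains the statement from the general correspondence between strong modules and the common-interval structure of the two orders, your argument is longer but elementary and stays entirely within the paper's own toolkit (inversions, quotient node types, and the rule that adjacent degenerate nodes differ in type), which fits the paper's stated preference for unified inductive proofs.
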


It will also be useful to note that extremal vertices of a prime permutation graph do not depend on the realiser. See \cite{HabibP10,Montgolfier03,BergeronCMR08,CapelleHM02} for further details on this.

Moreover, for a permutation diagram of~$G$ with respect to linear orderings~$\sigma$ and~$\tau$ of~$V(G)$, if we reverse one of~$\sigma$ and~$\tau$, then we obtain a permutation diagram of~$\overline{G}$.
In other words, $\overline{G}$ has a permutation diagram with respect to~$\sigma$ and~$n+1-\tau$, or to~$n+1-\sigma$ and~$\tau$.

Thus, a graph~$G$ is a permutation graph if and only if $\overline{G}$ is a permutation graph.

\subsection{Trigraphs and twin-width}

A \emph{trigraph} is a triple $H=(V(H),B(H),R(H))$ where~$B(H)$ and~$R(H)$ are disjoint sets of unordered pairs of~$V(H)$.
We denote by~$E(H)$ the union of~$B(H)$ and~$R(H)$.
The \emph{edges} of~$H$ are the elements in~$E(H)$.
The \emph{black edges} of~$H$ are the elements of~$B(H)$, and the \emph{red edges} of~$H$ are the elements of~$R(H)$.
We identify a graph $G=(V,E)$ with a trigraph $(V,E,\emptyset)$.

The \emph{underlying graph} of a trigraph~$H$ is the graph $(V(H),E(H))$.
We identify~$H$ with its underlying graph when we use standard graph-theoretic terms and notations.
A \emph{black neighbour} of~$v$ is a vertex~$u$ with $uv\in B(H)$ and a \emph{red neighbour} of~$v$ is a vertex~$w$ with $vw\in R(H)$.
The \emph{red degree} of~$v$ is the number of red neighbours of~$v$.
For an integer $d$, a \emph{$d$-trigraph} is a trigraph with maximum red degree at most~$d$.
For a set $X\subseteq V(H)$, we denote by $H-X$ the trigraph obtained from~$H$ by removing all vertices in~$X$.
If $X=\{v\}$, then we write~$H-v$ for~$H-X$.

For a trigraph~$H$ and distinct vertices~$v$ and~$w$ of~$H$, we denote by $H/\{u,v\}$ the trigraph~$H'$ obtained from $H-\{u,v\}$ by adding a new vertex~$x$ such that for every vertex $y\in V(H)\setminus\{u,v\}$, the following hold:
\begin{itemize}
    \item if~$y$ is a common black neighbour of~$u$ and~$v$, then~$xy\in B(H')$,
    \item if~$y$ is adjacent to none of~$u$ and~$v$, then~$xy\notin E(H')$, and
    \item otherwise, $xy\in R(H')$.
\end{itemize}
We say that~$H'$ is obtained by \emph{contracting $u$ and $v$}.

A \emph{contraction sequence} of~$G$ is a sequence of trigraphs $G_n,\dots,G_1$ where $G_n=G$, $G_1$ as a single vertex, and $G_{i-1}$ is obtained from $G_i$ by contracting a pair of vertices.
For an integer $d$, a contraction sequence $G_n,\ldots,G_1$ is a \emph{$d$-sequence} if for every $i\in[t]$, the maximum red-degree of~$G_i$ is at most~$d$.
The \emph{twin-width} of $G$, denoted by $\tww(G)$, is the minimum integer~$d$ such that there is a $d$-sequence of~$G$.

The vertices of the trigraphs in the contraction sequence can be interpreted as a partition of $V(G)$, this leads to a very natural characterisation of red edges in a trigraph as pairs of part such that the relation between vertices in the two parts is not homogeneous (i.e. there is an edge and a non-edge).

We have the following observations from~\cite{twin-width1}.

\begin{observation}[Bonnet, Kim, Thomass\'{e}, and Watrigant~\cite{twin-width1}]\label{obs:subgraph-tww}
    For a graph~$G$, the twin-width of~$G$ is equal to that of~$\overline{G}$, and every induced subgraph of~$G$ has twin-width at most~$\tww(G)$.
\end{observation}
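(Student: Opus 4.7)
The plan is to prove the two claims independently, and both reduce to checking that the trigraph contraction rule behaves naturally under complementation and restriction.

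For complement-invariance, I would define the complement of a trigraph $H$ to be the trigraph $\overline{H}$ on the same vertex set, with $R(\overline{H}) := R(H)$ and with the black edges and non-edges of $H$ swapped. The key claim is that contraction commutes with this complementation, i.e.\ $\overline{H/\{u,v\}} = \overline{H}/\{u,v\}$ for every distinct pair $u,v$. This is a direct case analysis on the three clauses defining a contraction: a vertex $y$ is a common black neighbour of $u,v$ in $H$ iff $y$ is adjacent to neither of $u,v$ in $\overline{H}$ (and vice versa), while the ``otherwise'' clause producing a red edge is manifestly self-complementary. Since red edges are preserved, red degrees coincide in $H$ and $\overline{H}$, so applying complementation termwise to a $d$-sequence of $G$ yields a $d$-sequence of $\overline{G}$, and the converse is symmetric.

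For the induced subgraph claim, I would fix a $d$-sequence $G = G_n, \ldots, G_1$ of $G$ and use the partition interpretation mentioned in the preliminaries: each trigraph $G_i$ corresponds to a partition $\mathcal{P}_i$ of $V(G)$, whose parts are red-adjacent exactly when the pair is not homogeneous in $G$. I would then restrict to $X \subseteq V(G)$ by setting $\mathcal{P}_i^X := \{P \cap X : P \in \mathcal{P}_i,\ P \cap X \neq \varnothing\}$, equipped with the trigraph structure induced by homogeneity with respect to $G[X]$. Discarding the indices where $\mathcal{P}_{i-1}^X = \mathcal{P}_i^X$ (which happens precisely when the contracted pair does not merge two distinct parts of the restricted partition) produces a contraction sequence of $G[X]$.

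The only nontrivial point to verify is that red degrees cannot increase under restriction. This is immediate: if $(P \cap X, Q \cap X)$ is non-homogeneous in $G[X]$, then the witnessing edge and non-edge between $P \cap X$ and $Q \cap X$ are also an edge and a non-edge between $P$ and $Q$ in $G$, so $(P, Q)$ is a red edge in $G_i$. Hence the map $P \cap X \mapsto P$ sends the red neighbourhood of $P \cap X$ in the restricted trigraph injectively into the red neighbourhood of $P$ in $G_i$, giving red degree at most $d$. I do not anticipate any genuine obstacle: the entire argument is essentially a bookkeeping check that contractions, complementation, and restriction to a vertex subset interact as one would expect from the partition viewpoint.
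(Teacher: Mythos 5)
Your proof is correct: trigraph complementation (keeping red edges and swapping black edges with non-edges) does commute with the contraction operation, and restricting the partition sequence to $X$, discarding trivial steps, and using the injective map of red neighbourhoods is exactly the standard argument for monotonicity under induced subgraphs. The paper itself gives no proof of this observation --- it is cited from Bonnet, Kim, Thomass\'{e}, and Watrigant --- and your argument is essentially the one from that source, so nothing further is needed.
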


\begin{lemma}[Bonnet, Kim, Reinald, Thomass\'{e}, and Watrigant~{\cite[Lemma~9]{BonnetKRTW22}}]\label{lem:seq-modular}
    Let~$G$ be a graph and let~$\mathcal{M}=(M_1,\ldots,M_\ell)$ be a modular partition.
    Then
    \[
        \tww(G)=\max\left\{\tww(G/\mathcal{M}),\max_{i\in[\ell]}\tww(G[M_i])\right\}.
    \]
\end{lemma}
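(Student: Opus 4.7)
The proof splits into two inequalities. For the direction $\tww(G) \geq \max\{\tww(G/\mathcal{M}), \max_i \tww(G[M_i])\}$, each $G[M_i]$ is an induced subgraph of $G$, and $G/\mathcal{M}$ is (isomorphic to) the induced subgraph obtained by picking one representative per module; both bounds are then immediate from Observation~\ref{obs:subgraph-tww} (the hereditary property of twin-width). So the substantive content of the lemma lies in the reverse inequality.

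For the direction $\tww(G) \leq d$ where $d := \max\{\tww(G/\mathcal{M}), \max_i \tww(G[M_i])\}$, the plan is to concatenate contraction sequences in the natural order: first, for each $i\in[\ell]$, execute a $d$-sequence of $G[M_i]$ on the restriction of the current trigraph to $M_i$ (in any order, or one module at a time), reducing each $M_i$ to a single vertex; then execute a $d$-sequence of $G/\mathcal{M}$ on the resulting trigraph. After the first phase we are left with a trigraph whose underlying graph equals~$G/\mathcal{M}$ and whose edges are all black, so the second phase is well-defined and runs verbatim.

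The key structural observation driving the red-degree bound is this: whenever $u,v$ lie in the same module $M_i$ and $w \notin M_i$, the vertex $w$ is either adjacent to both $u$ and $v$ or to neither (since $M_i$ is a module), and this homogeneity is preserved throughout the contraction process because contractions of pairs in $M_i$ only merge partial blocks whose union still has a homogeneous relation to~$w$. Consequently, contracting~$u$ and~$v$ never produces a red edge with endpoint outside~$M_i$. By induction on the number of contractions performed inside $M_i$, at every step of the internal $d$-sequence for $M_i$ the set of red edges is exactly the set of red edges of the current trigraph for~$G[M_i]$, all living inside~$M_i$. Thus the maximum red degree in $G$ equals the maximum red degree in the internal trigraph for $M_i$, which is at most~$d$.

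Once all modules have been contracted to singletons, the second phase operates on a trigraph isomorphic to $G/\mathcal{M}$ with no red edges, so applying a $d$-sequence for $G/\mathcal{M}$ keeps the maximum red degree at most~$d$ as well. The main obstacle, which the module property handles cleanly, is precisely the confinement argument for red edges: without homogeneity across module boundaries, contractions inside $M_i$ could spawn red edges to the rest of $G$ and inflate the red degree. Modularity rules this out, and everything reduces to standard bookkeeping.
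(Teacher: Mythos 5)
Your proof is correct and follows essentially the same route as the paper: the lower bound is immediate since each $G[M_i]$ and $G/\mathcal{M}$ are induced subgraphs, and the upper bound concatenates sequences, using the fact that contractions inside a module create no red edges leaving that module, then contracts the quotient. Your write-up just makes explicit the bookkeeping that the paper's proof sketch leaves implicit.
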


We sketch the proof as this will be of importance throughout the paper.

\begin{proof}
    A contraction in a module $M$ does not create red edges incident to $V(G-M)$. Therefore, we can always first contract the induced subgraphs $G[M_i]$ in any order, and then contract $G/\mathcal{M}$ once all subgraphs $G[M_i]$ have been contracted to single vertices.
\end{proof}

\begin{corollary}\label{coro:seq-modular}
    The twin-width of a graph $G$ is equal to the maximum twin-width of the quotient graphs of nodes of its modular decomposition.
\end{corollary}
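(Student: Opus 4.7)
The plan is a straightforward induction on the number of vertices of $G$, using Lemma~\ref{lem:seq-modular} as the inductive step and the definition of the modular decomposition tree to recurse.

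First, I would handle the base case $\abs{V(G)}=1$: the modular decomposition is a single leaf with no internal (quotient-bearing) nodes, and $\tww(G)=0$, so the statement holds under the convention that a maximum over the empty collection of quotient graphs is $0$. For the inductive step, let $r$ be the root of the modular decomposition of $G$ and let $M_1,\ldots,M_\ell$ be the maximal strong modules of $G$, which form a modular partition $\mathcal{M}$ of $V(G)$. By construction of the modular decomposition, the quotient graph $Q(r)$ is isomorphic (as an induced subgraph of $G$ obtained by picking one representative per $M_i$) to $G/\mathcal{M}$, and the subtree rooted at the child of $r$ corresponding to $M_i$ is precisely the modular decomposition of $G[M_i]$.

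Now I would apply Lemma~\ref{lem:seq-modular} to obtain
\[
    \tww(G)=\max\!\left\{\tww(G/\mathcal{M}),\,\max_{i\in[\ell]}\tww(G[M_i])\right\}.
\]
Since each $M_i$ is a strong module and $\ell\geq 2$ in the non-trivial case (otherwise the root would be a leaf), we have $\abs{M_i}<\abs{V(G)}$, so the induction hypothesis applies to each $G[M_i]$: its twin-width equals the maximum twin-width of the quotient graphs over the internal nodes of the subtree rooted at the corresponding child of $r$. Substituting into the displayed equation and noting that $\tww(G/\mathcal{M})=\tww(Q(r))$ combines exactly into the maximum of $\tww(Q(n))$ over all internal nodes $n$ of the modular decomposition of $G$, giving the claim.

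I do not anticipate a substantive obstacle: the only subtleties are bookkeeping items, namely verifying that the quotient graph stored at $r$ coincides with $G/\mathcal{M}$ (immediate from the definition of the modular decomposition given in Section~\ref{sec:prelim}) and fixing a convention for the trivial base case so the ``max over no quotient graphs'' is interpreted as $0$.
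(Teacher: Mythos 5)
Your proposal is correct and takes essentially the same route as the paper: the paper's proof is exactly a recursive application of Lemma~\ref{lem:seq-modular} with the maximal strong modules along the modular decomposition tree, which is what your induction on $\abs{V(G)}$ formalises. Your version just spells out the base case and the identification of the root's quotient graph with $G/\mathcal{M}$, which the paper leaves implicit.
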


\begin{proof}
    We may always contract leaf nodes of the modular decomposition via their optimal contraction sequence until the whole graph is reduced to a single vertex. This corresponds exactly to a recursive application of the above lemma with maximal strong modules.
\end{proof}

The above statements show that computing the twin-width of a graph reduces to the case of a prime graph. Thus, throughout this paper, we focus on prime graphs of twin-width at most~$1$.
The following lemma on the structure of twin-width $1$ graphs will be useful.

\begin{lemma}[Bonnet, Kim, Reinald, Thomass\'{e}, and Watrigant~{\cite[Lemma~31]{BonnetKRTW22}}]\label{lem:seq-prime}
    Let $G$ be a prime graph of twin-width~$1$ and let $G_n(=G),\ldots,G_1$ be a $1$-sequence of~$G$.
    Then for every $i\in[n-1]\setminus\{1\}$, the trigraph~$G_i$ has exactly one red edge.
\end{lemma}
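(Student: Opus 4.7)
The plan is to control the size $m_j$ of the red matching of $G_j$ (a matching, since the red-degree bound is~$1$) through two complementary bounds: $m_i\geq 1$ for every $i\in\{2,\ldots,n-1\}$, and a monotonicity $m_{j-1}\geq m_j$ for every $j\geq 3$. Since $m_2\leq 1$ trivially, monotonicity then yields $m_i\leq 1$ on the entire range and pins down $m_i=1$. Throughout, I view each vertex $v$ of $G_j$ as the corresponding part $P_v\subseteq V(G)$ that has been fused into $v$, and I use the fact (recalled after \cref{lem:seq-modular}) that $(P_v,P_w)$ is homogeneous in $G$ precisely when $vw$ is not a red edge of $G_j$.

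The lower bound is immediate: if $G_i$ has no red edge, then every part $P_v$ is homogeneous to every other part and is therefore a module of $G$. Since $i<n$, some part has size at least~$2$, contradicting the primality of $G$.

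For the monotonicity, I would case-analyse the contracted pair $\{p,q\}$ in $G_j\to G_{j-1}$. If $p$ and $q$ belong to two distinct red edges of $G_j$, the new vertex inherits both of their red neighbours and violates the red-degree bound, so this case does not occur. If at most one of $p,q$ lies on a red edge, a direct inspection of the contraction rule shows $m_{j-1}\in\{m_j, m_j+1\}$. The remaining case is that $\{p,q\}=\{u,v\}$ is itself a red edge: there $m_{j-1}\geq m_j-1$, with equality precisely when no vertex $c\neq u,v$ distinguishes $u$ from $v$ in $G_j$. In this subcase, since each of $u$ and $v$ has its red degree saturated by $\{u,v\}$, no other vertex is red-adjacent to either; hence $(P_c,P_u)$ and $(P_c,P_v)$ are both homogeneous in $G$, and the absence of a distinguisher forces them to be of the same type. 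So $P_u\cup P_v$ is a module of $G$. Primality and $\abs{P_u\cup P_v}\geq 2$ force $P_u\cup P_v=V(G)$, whence $\abs{V(G_j)}=2$, contradicting $j\geq 3$. Thus $m_{j-1}\geq m_j$, and chaining $m_2\geq m_3\geq\cdots\geq m_{n-1}$ with $m_2\leq 1$ gives the upper bound.

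The main obstacle is this last subcase: the careful three-colour bookkeeping (black edges, red edges, non-edges) needed to conclude that, when contracting a red edge with no distinguishing vertex, the two endpoints really do have identical adjacencies in $G_j$ outside the edge itself, and hence jointly form a module of $G$. What makes the argument tractable is that the red-degree bound of~$1$ prevents $u$ and $v$ from having any red neighbour other than each other, eliminating the subtle trigraph-level mismatches that would otherwise obstruct a clean module conclusion.
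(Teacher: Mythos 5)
Your proposal is correct. Note that the paper does not prove \cref{lem:seq-prime} at all; it is imported verbatim from~\cite{BonnetKRTW22}, so there is no in-paper argument to compare against, and your write-up stands as a self-contained proof. Both pillars check out. The lower bound is exactly the standard observation: a red-edge-free intermediate trigraph with $2\le i\le n-1$ vertices has a part of size at least~$2$ that is a proper nontrivial module, contradicting primality. For the monotonicity $m_{j-1}\ge m_j$ ($j\ge 3$), your case split is exhaustive (two distinct red edges at the contracted pair is impossible by the red-degree bound; at most one red endpoint; or the contracted pair is itself the red edge), and the only delicate case is handled correctly: when contracting a red edge $uv$ whose endpoints have no other red neighbours and no black-distinguisher, homogeneity of $(P_c,P_u)$ and $(P_c,P_v)$ of the same type for every $c$ makes $P_u\cup P_v$ a module of size at least~$2$, so primality forces $P_u\cup P_v=V(G)$ and hence $j=2$, excluded. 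A tiny sharpening you could state (not needed for the inequality): in the case where exactly one of $p,q$ carries a red edge, the count is exactly preserved, $m_{j-1}=m_j$, since the inherited red edge at the new vertex already saturates its red degree. Chaining $m_2\ge m_3\ge\cdots\ge m_n$ with the trivial bound $m_2\le 1$ and the lower bound then gives $m_i=1$ for all $i\in\{2,\ldots,n-1\}$, as required.
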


\section{Permutation diagrams}\label{sec:permutation}

In this section, we show that every graph of twin-width at most~$1$ is a permutation graph.

\begin{theorem}\label{thm:permutation}
    Every graph of twin-width at most~$1$ is a permutation graph.
\end{theorem}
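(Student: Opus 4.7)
The plan is to reduce to prime graphs via modular decomposition, and then argue by induction along a 1-contraction sequence, extending the notion of permutation diagrams to trigraphs so that red edges are unconstrained (they may cross or not).

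By \cref{coro:seq-modular}, $\tww(G)\leq 1$ implies that every quotient graph of the modular decomposition of $G$ has twin-width at most $1$. Series and parallel quotients are cliques and independent sets, hence permutation graphs. Since permutation graphs are closed under modular substitution (realizers can be concatenated by inserting the diagram of a substituted graph into the slot of the replaced vertex), it suffices to prove the theorem for prime graphs of twin-width at most $1$.

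Now assume $G$ is prime with $\tww(G)\leq 1$ and fix a 1-sequence $G=G_n,\dots,G_1$. By \cref{lem:seq-prime}, each $G_i$ with $2\leq i\leq n-1$ has exactly one red edge. Call a permutation diagram $(\sigma,\tau)$ of a trigraph $H$ \emph{compatible} if every black edge of $H$ is an inversion of $\sigma^{-1}\circ\tau$ and every non-edge is not (red edges are unconstrained, i.e.\ they may be an inversion or not). I would prove by induction on $i$ from $1$ upward that each $G_i$ admits a compatible permutation diagram in which the two endpoints of its red edge, if any, are consecutive in at least one of the two orderings. The case $i=1$ is trivial. For the inductive step, uncontract $x\in V(G_i)$ into $u,v\in V(G_{i+1})$. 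The red-degree constraint ensures that the only adjacency flexibility in $G_{i+1}$ is between $u$ and $v$ themselves and between $\{u,v\}$ and the other endpoint $y$ of the red edge incident to $x$ in $G_i$ (if any); for every other vertex $w$, $uw$ and $vw$ must coincide with $xw$. Placing $u$ and $v$ close together in both orderings inherits these common adjacencies, and the consecutiveness of $x$ and $y$ in one of the two orderings provides the room needed either to keep $u,v$ on the same side of $y$ or to split them onto opposite sides of $y$, which, together with the choice of relative order of $u$ and $v$ in the other ordering, realizes every feasible combination of $uv$, $uy$, $vy$ adjacencies and puts the new red edge (either $uv$, $uy$, or $vy$) into a consecutive position.

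The main obstacle is the case analysis verifying that this invariant is preserved: one must enumerate the admissible combinations of $uv$, $uy$, $vy$ adjacencies (constrained by the trigraph contraction rules and the red-degree bound) and in each case describe the corresponding insertion into the diagram, also treating the boundary situations $i=1$ and $i+1=n$ where a red edge is created or disappears. Applying the inductive claim at $i=n$, the resulting compatible diagram of $G_n=G$ has no red edges and is thus an ordinary permutation diagram, yielding the conclusion.
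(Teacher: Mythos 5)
Your proposal is correct, and it takes a genuinely different route from the paper. The paper first establishes a recursive decomposition of twin-width-$1$ graphs (\cref{lem:gen-tww1}): from the $3$-vertex trigraph $G_3$ it extracts a module $M$ (the part not touching the red edge), and then proves \cref{lem:build-realiser} by induction on $|V(G)|$, gluing realisers of $G-M$ and $G[M]$ and maintaining a global invariant on the realiser of the \emph{original} graph (every contracted part is an interval, and a red edge corresponds to ``one interval on one ordering, two intervals on the other''). You instead induct forward along the sequence (reversing the contractions), maintaining a diagram of each \emph{trigraph} $G_i$ in which red edges are unconstrained and the red endpoints are consecutive in one ordering, and you split the segment of the uncontracted vertex $x$, either keeping $u,v$ in $x$'s slot or jumping one of them over $y$. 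This works: since $G$ is prime, \cref{lem:seq-prime} gives exactly one red edge, and primality further forces every contraction after the first to involve an endpoint of that red edge (otherwise the two contracted parts would be black twins and their union a nontrivial module of $G$), so the uncontracted vertex $x$ is indeed an endpoint of the red edge of $G_i$ — a point you leave implicit behind ``(if any)'' and which deserves a line. Granting that, the only freedom is among $uv,uy,vy$, the admissible combinations are exactly those with at most one red pair and $\{uy,vy\}$ not homogeneous, and your two placements (same slot with chosen relative order on the other line; or split around $y$, which flips the crossing status with $y$ for exactly one of $u,v$ while preserving all other adjacencies) realise each of them with the new red endpoints consecutive on one line; the boundary steps $G_1\to G_2$ and $G_{n-1}\to G_n$ are immediate. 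What each approach buys: yours is more local and elementary, avoids \cref{lem:gen-tww1}, and directly produces the consecutiveness facts that the paper only extracts later (essentially \cref{coro:respecting-realiser}); the paper's heavier interval invariant on the realiser of $G$ itself is, however, what drives the algorithmic corollaries (\cref{cor:first-contraction}, \cref{cor:last-vertex}, \cref{lem:dec-tww1}), so your argument proves \cref{thm:permutation} but would need the invariant to be strengthened (tracking that each contracted part of $G$ stays an interval, which your in-place splitting in fact provides) to replace \cref{lem:build-realiser} wholesale. Your reduction to prime quotients via \cref{coro:seq-modular} and closure of permutation graphs under modular substitution is fine, and it is genuinely needed, since without primality a trigraph may carry several disjoint red edges.
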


It would be sufficient to prove that graphs of twin-width at most 1 are comparability graphs.
Indeed, since twin-width is stable by complementation, this implies that graphs of twin-width at most 1 are also co-comparability graphs.
We could then conclude because permutation graphs are exactly the graphs that are both comparability and co-comparability graphs \cite{DushnikMiller}.
To do this, one could make use of the following characterisation of comparability graphs via forbidden induced subgraphs given by Gallai \cite{Gallai} (see also \cite{thesisATfree}).

\begin{theorem}
    A graph $G$ is a comparability graph if and only if $G$ does not contain any graph of \cref{fig:3asteroid} as induced subgraph, and $\overline{G}$ does not contain any graph of \cref{fig:largeasteroid} as induced subgraph.
\end{theorem}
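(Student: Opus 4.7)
The plan is to handle the two directions separately, with the backward direction being by far the harder one.

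For the forward direction, being a comparability graph is hereditary under induced subgraphs, so it suffices to show that none of the graphs $F$ in \cref{fig:3asteroid} is a comparability graph, and that for each $F$ in \cref{fig:largeasteroid}, the complement $\overline{F}$ is not a comparability graph. For each such $F$, I would attempt a transitive orientation by fixing the direction of one edge and propagating the consequences of transitivity. The finite case check should reveal that propagation always reaches an edge $uv$ that must be oriented in both directions, certifying that no transitive orientation exists.

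For the backward direction, assume $G$ contains no graph of \cref{fig:3asteroid} as an induced subgraph and $\overline{G}$ contains no graph of \cref{fig:largeasteroid} as an induced subgraph. I would use Gallai's \emph{forcing relation} $\Gamma$ on arcs, defined by $(a,b)\,\Gamma\,(a,c)$ when $b\neq c$ and $bc\notin E(G)$, and dually $(b,a)\,\Gamma\,(c,a)$ under the same condition. The idea is that if a transitive orientation selects the arc $(a,b)$, then $(a,c)$ must also be selected, since otherwise $c\to a\to b$ would demand the missing edge $cb$. Taking the reflexive, symmetric, transitive closure of $\Gamma$ partitions the set of arcs into \emph{implication classes}, and $G$ is a comparability graph if and only if no implication class is \emph{symmetric}, i.e.\ contains both $(u,v)$ and $(v,u)$ for some edge $uv$; otherwise one may pick a representative direction per class consistently and verify transitivity.

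The heart of the argument is therefore to prove that, under the assumed absence of forbidden configurations, no implication class is symmetric. For the contrapositive, suppose some class contains both $(u,v)$ and $(v,u)$, and take a $\Gamma$-chain $(a_0,a_1),(a_1,a_2),\ldots,(a_{k-1},a_k)$ of minimum length with $(a_0,a_1)=(u,v)$ and $(a_{k-1},a_k)=(v,u)$. By the definition of $\Gamma$, consecutive arcs along the chain record edges of $G$ sharing an endpoint, and the ``skipping'' pairs record non-edges of $G$. I would then inspect the induced subgraph $G[\{a_0,\ldots,a_k\}]$ and exploit minimality to control the adjacency status of further pairs. Short chains should force the small asteroidal patterns of \cref{fig:3asteroid}, while longer chains, whose non-edges assemble into long induced paths in $\overline{G}$, should reveal the configurations of \cref{fig:largeasteroid}. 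The main obstacle will be exactly this classification step: translating a minimum-length symmetric $\Gamma$-chain into a concrete forbidden subgraph. The split between the two figures in the statement then reflects the dichotomy between obstructions naturally described through adjacencies in $G$ and those arising from long sequences of non-adjacencies, i.e.\ induced structures in $\overline{G}$.
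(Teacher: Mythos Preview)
The paper does not prove this theorem. It is stated as a known result due to Gallai (with citation \cite{Gallai}; see also \cite{thesisATfree}), and the paper immediately says ``We instead present a constructive proof'' of \cref{thm:permutation} precisely to \emph{avoid} relying on this forbidden-subgraph characterisation. So there is no proof in the paper to compare against.

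What you have written is a faithful high-level outline of Gallai's original argument via the forcing relation~$\Gamma$ and implication classes, which is indeed the standard route to this theorem. As a plan it is sound, but you should be aware that you have deferred all of the actual content to the final paragraph: the step ``short chains should force the small asteroidal patterns \ldots\ while longer chains \ldots\ should reveal the configurations of \cref{fig:largeasteroid}'' is the whole theorem, and carrying it out is a substantial case analysis that your proposal does not attempt. Also, a minor technical point: in your forward direction you say you would check each~$F$ individually, but the families in \cref{fig:largeasteroid} are infinite (indexed by~$k$), so ``a finite case check'' is not literally available---you would need a uniform argument for each family.
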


We instead present a constructive proof because it gives combinatorial insights on the relationship between contraction sequences and permutation diagrams that will be useful to find an efficient recognition algorithm. The trick is to decompose the graph by considering the contraction sequence backwards and to realise that vertices that are contracted together at some point in the contraction sequence should be consecutive or almost consecutive in a realiser.

The following recursive decomposition of twin-width 1 graphs is meant as a simple introduction to the technique.
This decomposition will later be strengthened in Lemma~\ref{lem:build-realiser} and Corollary~\ref{coro:respecting-realiser}.
We abusively extend the terminology of universal vertex and isolated vertex to modules $M$ such that replacing $M$ by a single vertex makes it universal or isolated.

\begin{lemma}\label{lem:gen-tww1}
    The class $\mathcal{G}$ of graphs of twin-width at most 1 satisfies the following recursive characterisation: $$\mathcal{G} = \left\{G ~|~ \exists M \subset V(G), \begin{aligned} ~ & G - M \in \mathcal{G}, G[M] \in \mathcal{G}, M \text{ is a module of $G$}, \\ ~ & M\text{ is universal or } M \text{ is isolated or}\\ ~ &  G - M \text{ admits a 1-contraction sequence ending}\\ ~ & \text{with the (possibly red) edge } \{N(M),\overline{N}(M)\} \end{aligned}\right\}.$$
\end{lemma}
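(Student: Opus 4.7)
Given a module $M$ satisfying the stated properties, I will construct a $1$-contraction sequence of $G$. If $M$ is universal or isolated, the pair $\{M, V(G)\setminus M\}$ is a modular partition whose quotient is $K_2$ or $\overline{K_2}$ (twin-width $0$), so Lemma~\ref{lem:seq-modular} gives $\tww(G) = \max\{\tww(G[M]), \tww(G-M)\} \le 1$. Otherwise, I first contract $M$ down to a single vertex $m$ via a $1$-sequence of $G[M]$, which creates no red edges outside $M$ since $M$ is a module, and then execute the given $1$-sequence of $G-M$ on $V(G)\setminus M$. Since this sequence terminates with the partition $\{N(M), \overline{N}(M)\}$, every earlier partition refines it, so every part is contained in $N(M)$ or in $\overline{N}(M)$; this forces every edge incident to $m$ to be black or a non-edge, keeping the red degree of $m$ equal to $0$ throughout. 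The last two contractions merge $N(M)$ and $\overline{N}(M)$ into one part and then combine it with $m$, each introducing at most one red edge, so the maximum red degree stays at most $1$.

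\textbf{Forward direction.} Suppose $\tww(G)\le 1$ and fix a $1$-contraction sequence $G_n,\ldots,G_1$. The cases $|V(G)|\le 2$ are handled by taking $M$ to be a singleton or $\emptyset$ with the corresponding conditions trivial; otherwise I look at the trigraph $G_3$ with its three parts $A, B, C$. Since $G_3$ has maximum red degree at most $1$ it contains at most one red edge, and hence there is at least one part, call it $R$, whose two edges to the other parts are both non-red. Non-red edges in $G_3$ correspond to homogeneous relations in $G$, so every vertex of $V(G)\setminus R$ is uniformly adjacent or non-adjacent to $R$, making $R$ a module of $G$. Depending on whether the two non-red edges are both black, both non-edges, or one of each, $R$ is universal, isolated, or a mixed module whose neighborhood partition $(N(R), \overline{N}(R))$ is precisely the pair of other parts of $G_3$; the first two subcases satisfy the lemma immediately with $M = R$.

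\textbf{Verifying the third condition in the mixed case.} I will exhibit a $1$-contraction sequence of $G-R$ by restricting each partition of the original sequence to $V(G)\setminus R$. Contractions within $R$ leave the restricted partition unchanged, and contractions combining an $R$-part with a non-$R$-part also leave it unchanged, since only the non-$R$-portion survives the restriction and is already present. Contractions of two non-$R$-parts merge two restricted parts and constitute the contractions of the sub-sequence. The restricted partition at step $G_3$ is exactly the pair of non-$R$-parts, namely $\{N(R), \overline{N}(R)\}$, and it eventually reaches $\{V(G)\setminus R\}$ at step $G_1$, so the sub-sequence terminates with the required $2$-part partition. The maximum red degree of the sub-sequence is at most that of the original, since any red edge between two restricted parts lifts to a red edge between the corresponding parts of the original trigraph (mixed adjacency propagates to supersets).

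The main obstacle I foresee is rigorously justifying the sub-sequence, in particular the correspondence between red edges of the sub-trigraphs and of the original trigraphs once mixed parts (containing vertices of both $R$ and $V(G)\setminus R$) have appeared; the rest of the argument is a direct case analysis driven by the three parts of $G_3$ and the single possible red edge among them.
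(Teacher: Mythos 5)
Your proof is correct and takes essentially the same route as the paper: in the backward direction you combine the $1$-sequence of $G[M]$ with the given sequence of $G-M$ (the paper merely performs the two phases in the opposite order, which is immaterial), and in the forward direction you take $M$ to be the part of the three-vertex trigraph $G_3$ not incident to the red edge and restrict the original sequence to $G-M$. Your added detail on why the restricted sequence is a valid $1$-sequence ending in $\{N(M),\overline{N}(M)\}$, and on graphs with at most two vertices, only spells out what the paper leaves implicit.
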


\begin{proof}
    Consider a graph $G$ such that there exists a module $M \subset V(G)$ such that $G - M \in \mathcal{G}$ and $G[M]\in \mathcal{G}$. If $M$ is universal or isolated, $G-M$ is a module of $G$ and, by \cref{lem:seq-modular}, $G$ is also a graph of $\mathcal{G}$. If $G-M$ admits a partial $1$-contraction sequence $\pi$ ending with edge $\{N(M),\overline{N}(M)\}$, then $\pi$ is also a partial $1$-contraction sequence of $G$, and can be extended to be $1$-contraction sequence of $G$. Indeed, no red edge incident to $M$ appears when applying~$\pi$ to~$G$ since pairs of contracted vertices are always either in $N(M)$ or $\overline{N}(M)$, we can then apply the $1$-contraction sequence of $M$, and finally contract arbitrarily the resulting 3-vertex trigraph. 

    Conversely, consider a graph of $\mathcal{G}$. It admits a $1$-contraction sequence with a $3$-vertex trigraph $G_3$. $G_3$ has at most one red edge $ab$ (pick $a,b$ arbitrarily in $V(G_3)$ if there is no red edge). We consider $v \in V(G_3)-\{a,b\}$. The set $M$ of vertices of $G$ that were contracted to form $v$ is a module since there are no incident red edges in $G_3$. $G-M\in \mathcal{G}$ and $G[M]\in \mathcal{G}$ since they are induced subgraphs of $G$. Let $A$ and $B$ be the sets of vertices that were contracted to form $a$ and $b$, respectively. Since $v$ has no red neighbour, it must be that $A \subseteq N(M)$ or $A \subseteq \overline{N}(M)$, similarly, $B \subseteq N(M)$ or $B \subseteq \overline{N}(M)$. We conclude that $M$ is universal, isolated or $G-M$ admits a contraction sequence ending with the edge $\{N(M),\overline{N}(M)\}$.
\end{proof}

Given a permutation graph $G=(V,E)$ and $U \subseteq V$, we denote by $\sigma[U]$ and $\tau[U]$ the restrictions of $\sigma$ and $\tau$ (re-numbered by $[|U|]$). They form a realiser of $G[U]$ as a permutation graph. We also use the notation $\sigma \cdot \tau$ to denote the concatenation of two orderings on disjoint domains (all elements of $\sigma$ are before elements of $\tau$, relative orders in $\sigma$ and $\tau$ are preserved).

The following lemma implies \cref{thm:permutation}.
\begin{lemma}\label{lem:build-realiser}
    Let $G$ be a graph of twin-width at most $1$, and $G_n,\dots G_2,G_1$ be a fixed $1$-contraction sequence of $G$.
    
    Then $G$ is a permutation graph admitting a realiser $(\sigma,\tau)$ satisfying the following.
    \begin{itemize}
        \item For every $x\in V(G_i)$, $\{v\in V(G) | v \in x\}$ is an interval of $(\sigma,\tau)$.
        \item For every $x,y \in V(G_i)$, if $xy\in R(G_i)$, then $\{v\in V(G) | v \in (x \cup y) \}$ is an interval on one of the orderings $\pi \in \{\sigma,\tau\}$, and both $\{v\in V(G) | v \in x\}$ and $\{v\in V(G) | v \in y\}$ are intervals on the other ordering. 
    \end{itemize}
\end{lemma}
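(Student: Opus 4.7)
I would prove this by induction on $i$ from $i=1$ up to $i=n$, constructing at each step a pair of orderings $\sigma_i, \tau_i$ of $V(G_i)$ that maintain two invariants: (a) the pair $(\sigma_i,\tau_i)$ realises the black edges and non-edges of $G_i$ as a permutation graph in the usual sense (crossings correspond to black edges, non-crossings to non-edges), and (b) for every red edge $xy \in R(G_i)$, the two vertices $x, y$ are consecutive in at least one of $\sigma_i, \tau_i$. The base case $i=1$ is trivial, and at $i=n$, since $G_n = G$ has no red edges, $(\sigma_n, \tau_n)$ is automatically a realiser of $G$ as a permutation graph. The block $V(x) \subseteq V(G)$ for any $x \in V(G_j)$ will, by construction, remain an interval of at least one of the final orderings, since every refinement keeps $V(u) \cup V(v) = V(x)$ contiguous in at least one ordering.

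\textbf{Refining $(\sigma_i, \tau_i)$ to $(\sigma_{i+1}, \tau_{i+1})$.} The refinement un-contracts some $x \in V(G_i)$ into $u, v \in V(G_{i+1})$, which is implemented by replacing the position of $x$ in each ordering by two positions for $u, v$. If $x$ has no red neighbour in $G_i$, then the relations of $u$ and $v$ to every other vertex agree (up to the edge $uv$ itself), and I can place $u, v$ adjacent at $x$'s position in both orderings, choosing their internal order to realise the edge $uv$ (crossing if $uv \in B(G_{i+1})$, no crossing if non-edge, either if red). If $x$ has a unique red neighbour $y$ (consecutive with $x$ in, say, $\sigma_i$), the edges $uy, vy$ in $G_{i+1}$ are necessarily heterogeneous, and the red-degree-$1$ constraint limits which combinations may occur. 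A case analysis on these edges (whether one of $uy, vy$ is red, or both are non-red with one black and one non-edge, and whether $uv$ itself is red) then dictates the placement: the consecutiveness of $x, y$ in $\sigma_i$ allows me to place one of $u, v$ adjacent to $y$ in $\sigma_{i+1}$, and in $\tau_{i+1}$ I may need to place $u, v$ non-adjacently, moving one of them to the opposite side of $y$ to correctly realise the asymmetric adjacency.

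\textbf{Main obstacle.} The delicate part is verifying in the red-neighbour case that the non-adjacent placement of $u, v$ in $\tau_{i+1}$ does not violate any invariant with respect to another vertex $z \ne u, v, y$. The key observation is that for such $z$, the edges $uz$ and $vz$ have the same type in $G_{i+1}$ (since the unique red edge incident to $x$ points to $y$), so the ``move'' of one of $u, v$ across $y$ in $\tau$ only alters the relative order with $y$, not with any other vertex; combined with the red-degree-$1$ bound, no new red edges arise from the move that would violate invariant (b). The interval statement for $V(x)$ in the lemma then follows because $V(x)$ remains an interval of whichever ordering we keep $u, v$ adjacent in, and the union condition for red edges follows from invariant (b) propagated through later refinements: if $x, y$ are consecutive in $\sigma_i$ at step $i$, then $V(x) \cup V(y)$ remains an interval of $\sigma_n$, while $V(x)$ and $V(y)$ individually remain intervals of $\tau_n$.
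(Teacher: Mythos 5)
Your strategy is genuinely different from the paper's (the paper argues top-down, by induction on $|V(G)|$, splitting off the module $M$ formed by the red-edge-free vertex of $G_3$ and pasting realisers of $G-M$ and $G[M]$; you argue bottom-up along the sequence, un-contracting one vertex at a time while refining a pair of orderings). However, your inductive step has a genuine gap at exactly the point you flag as the main obstacle. When $x$ has a red neighbour $y$ and $x,y$ are consecutive in $\sigma_i$, you keep $u,v$ together at $x$'s slot in $\sigma_{i+1}$ and realise the heterogeneous adjacency to $y$ by moving one of $u,v$ to the other side of $y$ in $\tau_{i+1}$. Your defence of this move -- that it ``only alters the relative order with $y$, not with any other vertex'' -- is false in general: your invariant only guarantees consecutiveness of $x$ and $y$ in \emph{one} ordering, and in the other ordering $\tau_i$ there may be vertices $z$ strictly between $x$ and $y$ (any $z$ adjacent to exactly one of $x,y$ can sit there). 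Moving $u$ (say) across $y$ in $\tau$ then also moves it across every such $z$, flipping the crossing with $z$ even though $uz$ must have the same (black/non-)type as $xz$, so invariant (a) breaks. Concretely: let $G$ have vertices $u,v,y,z$ with edges $uz, vz, uy$; contracting $u,v$ gives $x$ with $xz$ black and $xy$ red, and the orderings $\sigma = (z,x,y)$, $\tau = (x,z,y)$ satisfy both of your invariants, yet your prescribed un-contraction (split around $y$ in $\tau$) forces $u$ past $z$ in $\tau$ and destroys the black edge $uz$.

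The step is salvageable, but only by doing the split in the ordering in which $x$ and $y$ \emph{are} consecutive (there the move passes over $y$ alone), choosing the internal order of $u,v$ in the other ordering to realise $uv$ correctly, and strengthening the invariant so that you record, for each red edge, in which ordering the union of the two blocks is an interval and in which ordering the two blocks are separately intervals; without that extra bookkeeping your final claim that ``$V(x)\cup V(y)$ remains an interval of $\sigma_n$ while $V(x)$ and $V(y)$ remain intervals of $\tau_n$'' does not follow from the weaker invariant ``consecutive in at least one ordering''. This refined bookkeeping is precisely what the paper's module-based recursion delivers for free, since there $N(M)$ and $\overline{N}(M)$ are known to be intervals of one ordering of the realiser of $G-M$ and the realiser of $G[M]$ is inserted between them.
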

\begin{proof}
    We proceed by induction on the number of vertices of the graph by leveraging the recursive characterization given in \cref{lem:gen-tww1}.

    The statement is trivial for graphs on at most $3$ vertices.
    We now consider a graph $G \in \mathcal{G}$ on more than $3$ vertices and its contraction sequence $G_n,\dots,G_3,G_2,G_1$. Let $M$ be the set of vertices of $G$ that were contracted into a vertex of $G_3$ not incident to a red edge in $G_3$, $M$ is a module of $G$. 
    Restricting the sequence $G_n,\dots,G_1$ to $G - M$ and $G[M]$, respectively, yields a $1$-contraction sequence for each of them.
    By induction hypothesis applied with these sequences, we obtain realisers $\sigma_1,\tau_1$ and $\sigma_2,\tau_2$ of $G-M$ and $G[M]$, respectively. The following cases are illustrated in \cref{fig:tww1-realiser}.
    If $M$ is universal (resp. isolated), $\sigma_1 \cdot \sigma_2,\tau_2 \cdot \tau_1$ (resp. $\sigma_1 \cdot \sigma_2,\tau_1 \cdot \tau_2$) is a realiser of $G$. 
    Otherwise, the contraction sequence $G_n-M,\dots,G_3-M$ is such that $V(G_3-M)=\{N(M),\overline{N}(M)\}$. In particular, we know from the induction hypothesis that $B=N(M)$ and $A = \overline{N}(M)$ are intervals of either $\sigma_1$ or $\tau_1$, as they correspond to the sets of vertices contracted into the last two vertices on the sequence. 
    Without loss of generality, we assume they are intervals of $\sigma_1$. We can now produce the realiser $\sigma_1[B] \cdot \sigma_2 \cdot \sigma_1[A],\tau_2 \cdot \tau_1$.
    In all cases, the realiser satisfies the desired properties: it suffices to observe that by construction we do not contract vertices from $G-M$ with vertices from $G[M]$ until the last two contractions, so we must check only the last two contractions which are on trigraphs of order at most $3$ hence the properties are satisfied. 
\end{proof}

\begin{figure}[H]
    \centering
    \includegraphics[scale=0.8]{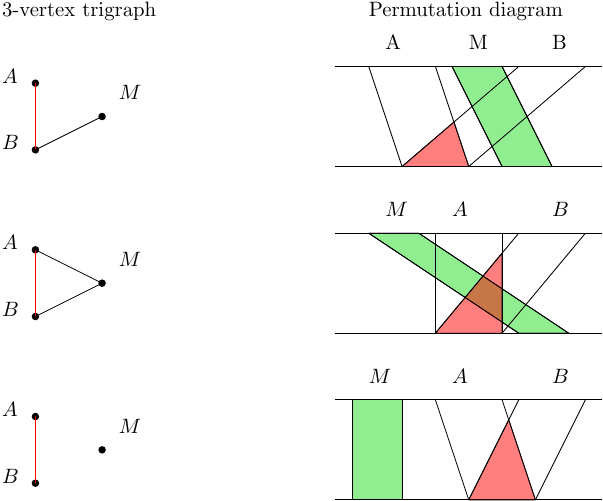}
    \caption{The different cases of the induction constructing the realiser.}
    \label{fig:tww1-realiser}
\end{figure}

We now move on to show related properties that will be useful in further understanding permutation diagrams and their relation to contraction sequences.

In the particular case of a prime graph, we deduce the following two corollaries of \cref{lem:build-realiser}.

\begin{corollary}\label{cor:first-contraction}
    In a prime graph $G$ of twin-width $1$, the first contraction must involve two vertices that are consecutive for one ordering, and with exactly one vertex between them for the other ordering.
\end{corollary}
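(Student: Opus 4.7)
The plan is to combine Lemmas \ref{lem:seq-prime} and \ref{lem:build-realiser} to read off the claimed arrangement directly from any $1$-contraction sequence of $G$. First, fix such a sequence $G_n,\dots,G_1$ and suppose the first contraction merges $u,v$ into a new vertex $x$ of $G_{n-1}$. We may assume $n\geq 4$, since every prime graph on at most three vertices is a cograph and the statement is then trivial or vacuous. Primality rules out $\{u,v\}$ being a module, so $u$ and $v$ are not twins and the contraction creates at least one red edge, necessarily incident to $x$; Lemma \ref{lem:seq-prime} then forces $G_{n-1}$ to contain exactly one red edge, which we call $xw$ with $w \in V(G)$.

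Next, I would apply Lemma \ref{lem:build-realiser} to this sequence to obtain a realiser $(\sigma,\tau)$ of $G$ satisfying its two bullets. Examining the red edge $xw$ of $G_{n-1}$ through the second bullet, we conclude that $\{u,v,w\}$ forms an interval on one of the orderings --- say $\sigma$ --- while $\{u,v\}$ and $\{w\}$ are intervals on the other ordering $\tau$. In particular, $u$ and $v$ are consecutive in $\tau$, and $u,v,w$ occupy three consecutive positions in $\sigma$.

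The remaining step is to show that $w$ sits strictly between $u$ and $v$ in $\sigma$, rather than at either end of this three-element block. Suppose for contradiction that $w$ is at one end; then $u$ and $v$ are also consecutive in $\sigma$, so $\{u,v\}$ is a common interval of $(\sigma,\tau)$. By Observation \ref{obs:common-itv}, $\{u,v\}$ would then be a module of $G$, contradicting primality since $1 < |\{u,v\}| < n$. Hence $w$ is the unique vertex between $u$ and $v$ in $\sigma$, matching precisely the configuration in the corollary. The only delicate point I foresee is the bookkeeping required to match each interval in the second bullet of Lemma \ref{lem:build-realiser} with the correct ordering; the combinatorial heart of the argument reduces entirely to the common-interval/module correspondence of Observation \ref{obs:common-itv}.
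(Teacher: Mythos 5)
Your proof is correct and takes essentially the same route as the paper: apply \cref{lem:seq-prime} to get the unique red edge of $G_{n-1}$ after the first contraction, and read the positions of $u$, $v$ and the red neighbour off the realiser provided by \cref{lem:build-realiser}. The only minor difference is in the last step, where the paper places the red neighbour between $u$ and $v$ via a splitter/crossing argument, while you use the second bullet of \cref{lem:build-realiser} together with \cref{obs:common-itv}; if anything, your version makes the ``exactly one vertex between them'' part slightly more explicit.
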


\begin{proof}
    In $G_{n-1}$, let $x',z$ be the vertices incident to the red edge (there is a red edge because the graph is prime, see \cref{lem:seq-prime}) with $x'$ resulting of the contraction of vertices $x$ and $y$ of~$G$. $X'=\{x,y\}$ is an interval for one ordering $\sigma$ of the realiser. $z$ must be a splitter of $X'$ because it is incident to the red edge, this implies that $z$ is between $x$ and $y$ in the ordering~$\tau$.
\end{proof}

Unfortunately, there can be linearly many such pairs. On the other hand, there are only~$4$ extremal vertices in a prime permutation graph. Hence, the following corollary is useful for obtaining a linear time recognition algorithm.

\begin{corollary}\label{cor:last-vertex}
    In a prime graph of twin-width $1$, for any $1$-contraction sequence, the last vertex to become incident to the red edge is an extremal vertex of the realiser. 
\end{corollary}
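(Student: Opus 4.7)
The plan is to show that ``the last vertex to become incident to the red edge'' corresponds to the unique vertex of $G_3$ that is not incident to its red edge, prove via a modularity argument that this vertex is a singleton of $V(G)$, and then locate it at an endpoint of an ordering using \cref{lem:build-realiser}.

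First I would pinpoint the step at which the ``last'' joining happens. Since $G$ is prime we have $n \geq 4$, so by \cref{lem:seq-prime} the trigraph $G_3$ has exactly one red edge; call it $\{x,y\}$, let $z$ be the third vertex of $G_3$, and write $S(w) \subseteq V(G)$ for the subset of original vertices aggregated by a trigraph vertex $w$. Along the sequence let $U_i := S(x_i) \cup S(y_i)$, where $\{x_i,y_i\}$ is the red edge of $G_i$. Clearly $U_2 = V(G)$ while $U_3 \subsetneq V(G)$ because $S(z)$ is nonempty. Hence the last step at which this union strictly grows is $G_3 \to G_2$, and the set of vertices of $V(G)$ newly joining the red edge at that step is exactly $S(z)$.

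Next I would show that $S(z)$ is a module of $G$. In $G_3$ the pairs $\{x,z\}$ and $\{y,z\}$ are non-red (the only red edge is $\{x,y\}$), so by the natural interpretation of red edges as non-homogeneous pairs of parts, each of $S(x)$ and $S(y)$ is homogeneously related to $S(z)$ in $G$. Thus every vertex of $V(G) \setminus S(z) = S(x) \cup S(y)$ is either fully adjacent to $S(z)$ or fully non-adjacent to it, so $S(z)$ is a module of $G$. Since $G$ is prime and $S(z)$ is a proper nonempty subset of $V(G)$, we must have $S(z) = \{v\}$ for some unique $v \in V(G)$, which is unambiguously the last vertex to become incident to the red edge.

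Finally I would invoke \cref{lem:build-realiser}, which supplies a realiser $(\sigma,\tau)$ of $G$ such that $S(x)$ and $S(y)$ are each intervals on one of the two orderings while $S(x) \cup S(y) = V(G) \setminus \{v\}$ is an interval on the other ordering. The complement of an interval of size $n-1$ in a linear ordering of $[n]$ is necessarily a single extremal point, so $v$ is the first or last vertex of that other ordering, and therefore an extremal vertex of the realiser. The main conceptual step is the use of primeness to reduce $|S(z)|$ to $1$; once this is established, the extremality of $v$ falls directly out of the interval structure guaranteed by \cref{lem:build-realiser}.
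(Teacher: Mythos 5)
Your proof is correct and follows essentially the same route as the paper: \cref{lem:seq-prime} together with primality shows that the part of $G_3$ not incident to the red edge is a module of $G$ and hence a single vertex $v$, and the interval properties of the realiser from \cref{lem:build-realiser} then give extremality. The only cosmetic difference is that you conclude directly from $V(G)\setminus\{v\}$ being an interval of length $n-1$ on one ordering, whereas the paper phrases the last step via $G_3$ being a $P_3$ all of whose vertices are extremal; both arguments rest on the same structure.
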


\begin{proof}
    Due to the graph being prime, the last vertex of $G$ to become incident to the red edge is also a vertex of $G_3$ and is adjacent to exactly one other vertex of $G_3$. The other two vertices of $G_3$ are connected because prime graphs are connected, hence $G_3$ is isomorphic to~$P_3$ whose vertices are all extremal. In particular, the last vertex incident to the red edge is extremal.
\end{proof}

\begin{observation}\label{obs:trigraphs-induced}
    If $G_n,\dots,G_1$ are the trigraphs of a 1-contraction sequence of $G$, their underlying graphs form a chain for the induced subgraph relation, i.e. $G_{i-1}$ is an induced subgraph of $G_i$ for $i \in [2,n]$.
\end{observation}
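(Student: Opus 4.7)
The plan is to prove the statement one step at a time: I will show that for each $i \in [2,n]$, the underlying graph of $G_{i-1}$ is isomorphic to $G_i - u$ or to $G_i - v$, where $u,v$ are the two vertices of $G_i$ contracted into a new vertex $x \in V(G_{i-1})$. Iterating this gives the desired chain.

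To that end, I would first unpack the definition of contraction: the underlying graph of $G_{i-1}$ is exactly the graph obtained from the underlying graph of $G_i$ by identifying $u$ and $v$ into a single vertex $x$ (since $xy \in E(G_{i-1})$ precisely when $y$ is adjacent in $G_i$ to $u$ or to $v$, regardless of colour). This identification coincides with the deletion of $v$ (relabelling $u$ as $x$) if and only if every neighbour of $v$ in $V(G_i)\setminus\{u,v\}$ is also a neighbour of $u$; symmetrically, it coincides with the deletion of $u$ if and only if no vertex of $V(G_i)\setminus\{u,v\}$ is adjacent to $u$ but not to $v$.

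The key step is ruling out the remaining case, where both identifications differ from a plain deletion. In that case there exist $y_u$ adjacent to $u$ but not $v$, and $y_v$ adjacent to $v$ but not $u$. Since $y_u$ is adjacent to $v$-side exactly iff $y_v$ is not (and the two cannot collapse: $y_u \neq y_v$ because they disagree on adjacency with $v$), both $y_u$ and $y_v$ are splitters of $\{u,v\}$ in $G_i$. By definition of contraction, each such splitter forces the edge $xy_u$, respectively $xy_v$, to be red in $G_{i-1}$. Hence $x$ has red degree at least $2$ in $G_{i-1}$, contradicting the assumption that the sequence is a $1$-sequence.

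The main obstacle (though a very small one) is merely to check the case analysis on colours carefully in the definition of contraction, to ensure that the edges $xy_u$ and $xy_v$ really are red rather than black or absent; this is immediate from the trichotomy (common black neighbour / common non-neighbour / otherwise) used to define $H/\{u,v\}$. Once this is noted, the argument above yields $G_{i-1} \cong G_i[V(G_i)\setminus\{w\}]$ for some $w \in \{u,v\}$, and iterating over $i$ from $n$ down to $2$ exhibits the claimed chain of induced subgraphs.
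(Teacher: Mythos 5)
Your proof is correct, but it proceeds differently from the paper. You argue locally, one contraction at a time: if the pair $\{u,v\}$ contracted from $G_i$ to $G_{i-1}$ had splitters on both sides (a vertex seeing $u$ but not $v$ and a vertex seeing $v$ but not $u$), the new vertex would receive two red edges, contradicting the $1$-sequence property; hence one neighbourhood contains the other and the contraction acts on the underlying graph exactly as the deletion of one of $u,v$, which gives $G_{i-1}$ as an induced subgraph of $G_i$ and, by iteration, the chain. The paper instead argues globally: it picks a representative vertex of $G$ for each vertex of $G_i$ (any vertex of the part if the part has no incident red edge, and an endpoint of an actual crossing edge of $G$ if the part is incident to the unique red edge), the red-degree-$1$ bound guaranteeing that each part contributes a single representative, and thereby exhibits the underlying graph of $G_i$ directly as an induced subgraph of $G$. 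Both hinge on the same fact that red degree at most $1$ limits heterogeneity, but your step-wise deletion argument establishes precisely the stated chain in an elementary way, whereas the paper's representative construction additionally produces an explicit embedding of each $G_i$ into $G$, which it exploits immediately afterwards to speak of the induced permutation diagram of $G_i$ inside a diagram of $G$ and of the preserved relative order of the surviving vertices. One cosmetic remark: your sentence ``Since $y_u$ is adjacent to $v$-side exactly iff $y_v$ is not'' is garbled; all you need is that $y_u$ is nonadjacent to $v$ while $y_v$ is adjacent to $v$, so $y_u\neq y_v$, and that each of them falls into the ``otherwise'' clause of the contraction definition, hence both edges $xy_u$ and $xy_v$ are red.
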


\begin{proof}
    We describe how to construct the induced subgraph of $G$ by picking a representative in $G$ for each vertex in $G_i$.
    
    Consider a vertex $v$ of $G_i$, either it has only black incident edges and we may pick any vertex of $G$ that was contracted into $v$, or it has exactly one incident red edge $vw$, in which case we can pick the endpoints of some edge of $G$ that is between the vertices contracted into $v$ and the vertices contracted into $w$.
    
    This construction picks only one vertex of $G$ per vertex of $G_i$ because the red degree is at most one meaning we do not pick more than once.
\end{proof}

This property only holds for $1$-contraction sequences (e.g. contracting non-adjacent vertices of a matching creates a path on $3$ vertices which is not an induced subgraph). It allows to view the contraction sequence as a sequence of vertex deletions. In particular, underlying graphs of the trigraphs of a $1$-contraction sequence are also permutation graphs and one of their permutation diagrams is the induced permutation diagram obtained by seeing $G_i$ as an induced subgraph of $G$. Observe that this diagram preserves the relative order of vertices that are not deleted. This will be important for our inductive reasoning since we can keep the same diagram while decomposing the graph.
We denote by $\sigma[G_i]$ the ordering induced by $G_i$ seen as an induced subgraph of $G$.

\begin{corollary}\label{coro:respecting-realiser}
    Let $G$ be a graph and $G_n,\dots, G_1$ a $1$-contraction sequence of $G$.
    The realiser constructed from $G_n,\dots,G_1$ using Lemma~\ref{lem:build-realiser} has the property that, for every $i>1$, the vertices being contracted from $G_i$ to $G_{i-1}$ are consecutive for $(\sigma[G_i],\tau[G_i])$, and endpoints of red edges are consecutive for $(\sigma[G_i],\tau[G_i])$.
    
    Conversely, for any realiser $(\sigma',\tau')$, there exists a $1$-contraction sequence $G_n,\dots,G_1$ such that for every $i>1$ the contracted vertices of $G_i$ are consecutive in $(\sigma'[G_i],\tau'[G_i])$ and for any $xy \in R(G_i)$, $x$ and $y$ are consecutive in $(\sigma'[G_i],\tau'[G_i])$.
\end{corollary}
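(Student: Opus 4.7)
For the forward direction, the claim follows almost immediately from \cref{lem:build-realiser}. Fix $i > 1$ and let $x, y \in V(G_i)$ be the two super-vertices contracted into $z \in V(G_{i-1})$, so $z$ corresponds to $x \cup y$ as underlying vertex sets. Property~1 of \cref{lem:build-realiser} applied to $z \in V(G_{i-1})$ asserts that $x \cup y$ is an interval of $\sigma$ or $\tau$. Every other super-vertex of $G_i$ survives as a super-vertex of $G_{i-1}$, so its underlying vertex set is disjoint from $x \cup y$ and therefore lies entirely outside this interval. Consequently, any choice of representatives for $V(G_i)$ places the representatives of $x$ and $y$ consecutively in $\sigma[G_i]$ or $\tau[G_i]$. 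The same argument, invoking property~2 of \cref{lem:build-realiser} for a red edge $xy \in R(G_i)$, gives consecutiveness of red-edge endpoints.

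For the converse, I would proceed by induction on $|V(G)|$, mirroring the construction in the proof of \cref{lem:build-realiser}, but tailoring the decomposition at each step so as to respect the given realiser $(\sigma', \tau')$. The base case $|V(G)| \leq 3$ is immediate. For the inductive step, I would apply \cref{lem:gen-tww1} to identify a suitable module $M$ of $G$. When $G$ is not prime, pick $M$ to be a strong module; by \cref{obs:common-itv}, $M$ is a common interval of $(\sigma', \tau')$, so the induced realisers on $G[M]$ and on $G - M$ are well-defined, and the two inductively obtained $1$-contraction sequences concatenate into one for $G$ that respects $(\sigma', \tau')$. When $G$ is prime, take $M = \{v\}$ for an extremal vertex $v$ of $(\sigma', \tau')$ chosen so that $G - v$ still admits a $1$-contraction sequence ending with the edge $\{N(v), \overline{N}(v)\}$; such a $v$ exists by combining \cref{lem:gen-tww1}, \cref{cor:last-vertex}, and the essential uniqueness of the realiser of a prime permutation graph. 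Apply the induction hypothesis to $G - v$ with the induced realiser and append the final contraction involving $v$, whose two other endpoints in $G_3$ are precisely $N(v)$ and $\overline{N}(v)$ and lie consecutively around $v$ in the restricted realiser.

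The main obstacle is the prime case of the converse: one must select $v$ so that the induced realiser of $G - v$ is compatible with a $1$-contraction sequence terminating at the edge $\{N(v), \overline{N}(v)\}$, not merely some $1$-contraction sequence. To make this precise, I would likely strengthen the inductive hypothesis to explicitly track the final edge of the contraction sequence, or equivalently exploit the fact that the realiser of a prime permutation graph is unique up to the $8$-fold symmetry of reversing $\sigma$, reversing $\tau$, and swapping them: any $1$-sequence of $G$ produced by the construction of \cref{lem:build-realiser} then transports into one compatible with $(\sigma', \tau')$ via the corresponding symmetries, which manifestly preserve the ``consecutive'' conditions on both contractions and red edges.
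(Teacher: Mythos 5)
Your forward direction is correct and is essentially the paper's argument: property (1), resp.\ (2), of \cref{lem:build-realiser} together with the disjointness of the other parts gives consecutivity of the representatives for contracted pairs and for red-edge endpoints. The converse, however, has a genuine gap in the non-prime case: you recurse on $G[M]$ and $G-M$ and assert that the two inductively obtained sequences ``concatenate into one for $G$''. This is false in general. Once $M$ has been contracted to a single vertex $m$ (and a fortiori while $M$ is still unreduced), contracting a pair of vertices of $G-M$ that is split by $M$, i.e.\ one in $N(M)$ and one in $\overline{N}(M)$, creates a red edge to $m$ (respectively to every vertex of $M$) which is absent from the sequence of $G-M$; if that contraction simultaneously creates a red edge inside $G-M$, the red degree reaches $2$. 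Respecting the induced realiser does not prevent this: take $G$ to be the path $u_1u_2u_3u_4$ together with the strong module $M=\{m_1,m_2\}$ inducing an edge and complete to $u_2$; there is a realiser of $G$ whose restriction to $G-M\cong P_4$ makes $\{u_1,u_2\}$ consecutive, so your induction hypothesis may legitimately return a sequence starting with that contraction, yet performed after $M$ has been contracted it creates red edges to both $u_3$ and $m$. The reduction must keep a representative of $M$: recurse on $G[M]$ and on $G[(V(G)\setminus M)\cup\{x\}]$ for some $x\in M$, or, as the paper does, contract all maximal strong modules first and then treat the quotient via \cref{lem:seq-modular}. Note also that your case split misses graphs that are not prime yet have no nontrivial proper strong module (degenerate root quotients such as cliques or independent sets on singleton children), where the paper instead contracts a pair of consecutive twins.

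In the prime case your primary plan has the gap you yourself flag: the induction hypothesis yields \emph{some} realiser-respecting $1$-sequence of $G-v$, not one in which no contracted pair is split by $v$ before the very last step; without that, ``appending the final contraction involving $v$'' is not even a $1$-sequence, since $v$ acquires red edges earlier. Your fallback, uniqueness of the realiser of a prime permutation graph combined with the forward direction applied to an arbitrary $1$-sequence, is exactly the paper's argument, and once invoked it makes the extremal-vertex recursion superfluous: for a prime graph (or prime quotient) take any $1$-sequence, build its realiser via \cref{lem:build-realiser}, and identify it with $(\sigma',\tau')$ up to symmetry, which preserves consecutivity. (Be careful that only the four symmetries reversing both orders and/or swapping them preserve the graph; reversing a single order yields the complement, so the group acting here has order $4$, not $8$.) With the module reduction repaired as above and this uniqueness argument at prime nodes, your outline becomes the paper's proof; as written, the concatenation step and the unproved ``ending with the edge $\{N(v),\overline{N}(v)\}$'' requirement are real gaps.
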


\begin{proof}
    Suppose we contract $a,b \in V(G_i)$ into vertex $c \in V(G_{i-1})$. Let $C$ be the set of vertices of $G$ contracted into $c$, and $A,B$ those contracted into $a,b$. By Lemma~\ref{lem:build-realiser}, $A$, $B$ and $C$ are all intervals of $(\sigma,\tau)$. Since $C = A \cup B$, we may assume without loss of generality they are all intervals of $\sigma$. Hence, $a$ and $b$ are consecutive for $\sigma[G_i]$.
    Similarly, the set of vertices of $G$ contracted into a fixed red edge is an interval of $(\sigma,\tau)$. Since the sets of vertices of $G$ incident to the red edge are also intervals, this means the endpoints of the red edge in $G_i$ are consecutive.

    Now, consider any realiser $(\sigma',\tau')$ for graph $G$, and let us show the second part of the lemma by induction on the modular decomposition of $G$.
    The result holds trivially for leaves of the decomposition. Let us then consider an internal node $q$ of the decomposition of $G$ and consider the subgraphs $H_1,...,H_k$ of $G$ induced by the children of $q$. Now, since each $H_i$ is a strong module of $G$, they each form a common interval of $(\sigma',\tau')$ by Observation~\ref{obs:common-itv}. They also have twin-width $1$, so our induction hypothesis along with \cref{lem:seq-modular} yields that we can contract each child fully by respecting the consecutivity properties on $(\sigma'[H_i],\tau'[H_i])$, and thus on $(\sigma',\tau')$.
    Then, if $q$ is a degenerate node, we can always find a pair of consecutive twins and contract them. Otherwise, $q$ is prime, and as such it admits a unique realiser (up to symmetry), which is given by \cref{lem:build-realiser}, for which the contraction sequence satisfies the desired properties by the first part of the lemma.
\end{proof}

\section{Linear-time recognition algorithm}\label{sec:algo}

A realiser $(\sigma,\tau)$, if it exists, can be found in linear time \cite{McConnellS99}.
From the Corollaries \ref{coro:seq-modular}, \ref{cor:first-contraction} and \ref{coro:respecting-realiser}, we deduce that the algorithm of \cite{BonnetKRTW22} can be implemented in time $O(n^2 + m)$. Indeed, for every prime node of the modular decomposition, it suffices to compute $\sigma$ and $\tau$ for its quotient graph.
At this point, we have restricted the set of possible first contractions to the set all pairs of consecutive vertices, and we may then simulate contraction sequences efficiently using the fact that new vertices to be contracted can be found in constant time, since they are consecutive. In order to obtain a linear time algorithm, we will instead guess the end of the sequence, and try to extend the sequence from its end, see Figure~\ref{fig:dec-illustration}.

\begin{lemma}\label{lem:dec-tww1}
    Let $G$ be a prime permutation graph of twin-width 1, which admits a $1$-contraction sequence where an extremal vertex $s$ is last to become incident to a red edge.
    Let also $G^1=G[N(s)]$ and $G^2=G[\overline{N}(s)]$.
    
    Then, there is a module $M$ of $G^i$ for some $i$ that has a unique splitter $s' \in V(G^{3-i})$, such that $G[M \cup \{s'\}]$ admits a $1$-contraction sequence where $s'$ is the last vertex to be incident to the red edge, where:
    \begin{enumerate}[(i)]
        \item either $M$ is a pair of consecutive twins in $G^i$ and there are no prime nodes in the modular decompositions of $G^1$ and $G^2$,
        \item or $M$ corresponds to the subtree rooted at a prime node $p$ in the modular decomposition, and $p$ is the unique prime node in the modular decompositions of $G^1$ and $G^2$.
    \end{enumerate}
    Moreover, let $k=|V(G) - (M \cup \{s'\})|,$ there is an ordering $\pi:[k] \to V(G) - (M \cup \{s'\})$ such that for all $i \in [k]$, $M \cup \{s'\} \cup \pi([i])$ forms an interval of $\sigma$ and two intervals of $\tau$ for a realiser $(\sigma,\tau)$ where $\sigma$ is the linear ordering for which $s$ is extremal.
\end{lemma}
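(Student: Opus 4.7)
The plan is to exploit the given $1$-contraction sequence together with \cref{coro:respecting-realiser} and the single-red-edge invariant of \cref{lem:seq-prime}. The starting observation is that since $s$ becomes red-incident only at the very end of the sequence, every super-vertex produced along the way must lie entirely in $V(G^1)$ or entirely in $V(G^2)$: otherwise $s$ would have both neighbours and non-neighbours inside that super-vertex, creating a red edge incident to $s$ and contradicting the assumption. This one-sidedness will drive the whole analysis.

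Next I would inspect the first contraction. By \cref{cor:first-contraction} it merges a pair $\{u,v\}$ that is consecutive in one ordering (say $\sigma$), with a unique splitter $z$ between them in $\tau$; by the one-sidedness observation, $\{u,v\}\subseteq V(G^i)$ for some $i$. If $z\in V(G^{3-i})$, then $\{u,v\}$ has no splitter inside $G^i$, so $u$ and $v$ are consecutive twins of $G^i$, and setting $M:=\{u,v\}$ and $s':=z$ gives case~(i) directly. Otherwise $z\in V(G^i)$ and the red edge starts inside $G^i$; in this regime I would grow $M$ inductively. By \cref{lem:seq-prime} each subsequent contraction either enlarges the current super-vertex on the $G^i$ side of the red edge or shifts its opposite endpoint, but never creates a second red edge. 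I would then take $M$ to be the vertex set of the $G^i$-super-vertex at the first step when the opposite endpoint of the red edge is a single original vertex, and set $s'$ to be that vertex. The one-red-edge invariant forces every $v\in V(G^i)\setminus M$ to be homogeneous to $M$, so $M$ is a module of $G^i$ and $s'$ is its unique splitter in $G$; the corresponding sub-sequence witnesses that $G[M\cup\{s'\}]$ has a $1$-contraction sequence in which $s'$ is last to become red-incident.

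To place this construction in case~(ii) when $|M|>2$, I would argue that the sub-sequence contracting $M$ cannot be routed through a series or parallel decomposition of $M$ without forcing $s'$ to split a proper sub-module of $M$, contradicting uniqueness of the splitter; hence $M$ must coincide with the subtree rooted at a prime node $p$ of the modular decomposition of $G^i$. The uniqueness of $p$ across the modular decompositions of $G^1$ and $G^2$ will follow from \cref{coro:seq-modular}: any additional prime node on either side would demand its own independent prime-style contraction sub-sequence, whose red edges cannot coexist with those used to contract $M$ under \cref{lem:seq-prime} while keeping $s$ red-free. The same argument rules out any prime node in $G^1$ or $G^2$ in case~(i). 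Finally, the ordering $\pi$ is obtained from the tail of the contraction sequence after $M$ has been merged into $s'$: by \cref{coro:respecting-realiser} this tail respects the induced realiser, so the vertices outside $M\cup\{s'\}$ are absorbed one by one into the growing contracted block. The order of absorption defines $\pi$, and the required interval property in $\sigma$ together with the two-interval property in $\tau$ follow from $s$ being extremal in $\sigma$ (so the block grows contiguously from one end of $\sigma$) together with $s$ separating $N(s)$ from $\overline{N}(s)$ in $\tau$ (so the block occupies two intervals straddling $s$). I expect the main difficulty to be the rigorous identification of $M$ with the vertex set of a prime-node subtree, and the simultaneous control of the prime nodes in $G^1$ and $G^2$, which will require a careful interplay between the single-red-edge dynamics and the uniqueness (up to symmetry) of realisers at each prime node.
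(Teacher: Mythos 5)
Your opening observation (no super-vertex ever mixes $N(s)$ with $\overline{N}(s)$ before the $3$-vertex trigraph) and your use of \cref{lem:seq-prime} to track a single moving red edge are exactly the right starting points, but the construction of $M$ and $s'$ in your second branch is broken as stated. You take ``the first step when the opposite endpoint of the red edge is a single original vertex'' --- but that is already the case at the very first contraction: the opposite endpoint is the unique splitter $z$, which in this branch lies in $V(G^i)$. So your $s'$ lands on the \emph{same} side as $M$, whereas the lemma requires $s'\in V(G^{3-i})$. What you presumably want is the first moment the red edge has one endpoint in each of $V(G^1)$ and $V(G^2)$; but then you must prove that at that moment the endpoint on the other side is a single original vertex. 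That needs the fact that in a prime graph every contraction must involve an endpoint of the current red edge (otherwise the merged pair would be homogeneous to all other parts, i.e.\ a non-trivial module of $G$), so no contractions have yet occurred on the other side --- an argument your sketch never makes. The description ``each contraction enlarges the $G^i$-side endpoint or shifts its opposite endpoint'' also glosses over the case where the two endpoints are contracted together and the red edge jumps to a new splitter.

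The deeper gap is that the statements which make the lemma useful --- that in the non-cograph case $M$ is precisely the subtree rooted at a prime node, that this prime node is the \emph{unique} prime node over both $G^1$ and $G^2$, and that in the twins case there is \emph{no} prime node at all --- are only asserted. Your argument ``the sub-sequence contracting $M$ cannot be routed through a series or parallel decomposition without forcing $s'$ to split a proper sub-module, contradicting uniqueness of the splitter'' is not a proof (uniqueness of the splitter is itself something to be established, and proper sub-modules of $M$ may legitimately have splitters inside $M$), and ``independent prime-style sub-sequences whose red edges cannot coexist'' hides the real work: one must first exclude two disjoint non-trivial strong modules inside $G^i$, which the paper does by extracting an explicit forbidden induced configuration (two twin pairs plus their two distinct splitters from the other side, obtained via a prime-extraction observation) that admits no $1$-sequence compatible with the one-sidedness constraint, and only then derive the ``prime-core'' statement that the first red edge must arise inside the prime subtree with a unique outside splitter. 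Your case split by the location of $z$ silently assumes this: if a prime node existed while $z\in V(G^{3-i})$, your branch would output a twin pair and wrongly claim case (i). Finally, for $\pi$: the domain is \emph{all} of $V(G)-(M\cup\{s'\})$, including vertices that your given sequence contracts before or alongside $M$ (e.g.\ the rest of the block containing the twin pair), so ``the tail of the sequence'' does not supply the ordering; the paper needs a reorganisation step producing a sequence that contracts $M$ first, and the interval property then comes from \cref{coro:respecting-realiser}, not from ``the block growing contiguously from one end of $\sigma$'' --- the prefixes are $\sigma$-intervals but need not be anchored at an extremity; extremality of $s$ is used only to make $V(G^1)$ and $V(G^2)$ intervals of $\tau$, which is what yields the two-intervals-of-$\tau$ condition.
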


\begin{proof}
    Let $(\sigma,\tau)$ be a realiser of $G$ and $s$ be an extremal vertex for $\sigma$ which is the last vertex incident to the red edge in some $1$-contraction sequence of $G$. Such a vertex exists due to Corollary~\ref{cor:last-vertex}. Let $G^1=G[N(s)]$ and $G^2=G[\overline{N}(s)]$ and 
    observe that since $G$ is prime, $G^1$ and $G^2$ are not empty.  As $s$ is extremal for $\sigma$, adjacent to $V(G^1)$ and non-adjacent to $V(G^2)$, $V(G^1)$ and $V(G^2)$ must be disjoint intervals of $\tau$.

\begin{claim}\label{claim:seq-partition}
    In any contraction sequence as considered, we only contract pairs of vertices of $G^1$ or pairs of vertices of $G^2$ until we reach the $3$-vertex trigraph.
\end{claim}

\begin{subproof}
    Due to the assumption on $s$ being the last vertex incident to a red edge, we cannot contract a vertex of $G^1$ with a vertex of $G^2$, unless $G^1$ and $G^2$ both have been contracted to a single vertex, because they are split by $s$.
\end{subproof}

\begin{claim}\label{claim:attribute-splitter}
    For any non-trivial module $M$ of $G^i$ there is a splitter $s' \in V(G^{3-i})$. Furthermore, for any splitter $s' \in V(G^{3-i})$ of $M$ there are $a,b$ in $M$ such that $\sigma(a) < \sigma(s') < \sigma(b)$.
\end{claim}
    
\begin{subproof}
    Indeed, $M$ cannot be a module of $G$, as $G$ is a prime graph, so it must have a splitter $s'$ in $G-V(G^i)$. Also note that $s$ is not a splitter of $M$ by definition of $G^i$. Furthermore, since $s'$ is a splitter it must be adjacent to some $a \in M$ and non-adjacent to some $b \in M$. Since $V(G^1)$ and $V(G^2)$ are intervals of $\tau$, we get that $\sigma(a) < \sigma(s') < \sigma(b)$ or $\sigma(a) < \sigma(s') < \sigma(b)$ by definition of the permutation diagram.
\end{subproof}

The following observation will allow to simplify the analysis.

\begin{observation}\label{obs:prime-extract}
    If $X$ induces a prime subgraph of $H$, and $t \in V(H) \setminus X$ is a splitter of~$X$, then we can both extract a pair of vertices of~$X$ that are adjacent and split by $t$, and a pair of vertices of $X$ that are non-adjacent and split by $t$.
\end{observation}

\begin{proof}
    Consider the cut $C=(X\cap N(t),X - N(t))$ in $H[X]$, since $H[X]$ is prime both $H[X]$ and $\overline{H[X]}$ are connected, yielding both an edge and an non-edge across $C$.
\end{proof}

\begin{claim}\label{claim:strong-module-unique}
    There cannot be disjoint non-trivial strong modules in $G^i$.
\end{claim}

\begin{subproof}
    We proceed by contradiction and consider $M,M'$ two maximal disjoint non-trivial strong modules of $G^i$, without loss of generality $i=1$. We first show that, in $G$, $M$ and~$M'$ must have distinct splitters from $V(G^2)$, and then conclude that this contradicts the assumption that $G$ has a $1$-contraction sequence respecting \cref{claim:seq-partition}. 
    
    Using the fact that $V(G^1)$ and $V(G^2)$ are intervals of $\tau$ and the common interval characterisation of strong modules of $G^1$ given by \cref{obs:common-itv}, we deduce that a splitter~$t$, which necessarily belongs to $V(G^{2})$, cannot split both $M$ and $M'$. Indeed, we have $\sigma(M) < \sigma(M')$ or $\sigma(M') < \sigma(M)$, so if $t$ is a splitter of $M$ there exist $a,b \in M$ such that $\sigma(a) < \sigma(t) < \sigma(b)$ as seen in the previous claim. In particular, $t$ is not a splitter of $M'$.

    Now because $M$ and $M'$ are non-trivial modules of $G^1$, they each have a splitter in $G^{2}$. Let $t$ and $t'$ be such splitters. By maximality, $M$ and $M'$ correspond to subtrees rooted at siblings of the same node $n$ in the modular decomposition of $G^1$. 
    Let us first consider the case when $n$ is a prime node. In this case, Lemma~\ref{lem:seq-prime} guarantees a red edge for $Q(n)$, and the two splitters $t,t'$ forbid the existence of a contraction sequence with consecutivity properties as guaranteed by Lemma~\ref{lem:build-realiser} and Corollary~\ref{coro:respecting-realiser}.

    We may now assume that $n$ is a degenerate node, and exhibit an induced subgraph $H$ of $G$ that does not admit a $1$-contraction sequence respecting our assumptions. This will contradict the existence of such a sequence for the whole $G$, as its restriction to $H$ would also be valid (as in Observation~\ref{obs:subgraph-tww}).
    The graph $H$ consists of splitters $t,t'$, as well as two pairs of twins $a,b \in M$ and $a',b' \in M'$ distinguished by $t,t'$ respectively. Then, vertices $a,b,a',b'$ induce a cograph whose corresponding cotree is a complete binary tree of depth $2$.
    In other words, $(a,b)$ and $(a',b')$ are present if and only if $\{a,b\}$ is adjacent to $\{a',b'\}$.

    We first reduce $M$ and $M'$ so as to induce exactly the quotient graph of their corresponding node in the modular decomposition (recall that this is well defined because they are strong) by picking one vertex in each subtree below this node. We may ensure that $t$ and $t'$ still split this subset of vertices (e.g. by picking extremal vertices of the interval of $\sigma$ corresponding to~$M$ and~$M'$, recalling \cref{obs:common-itv}).

    Now we observe that if $M$ (resp. $M'$) has a degenerate root node, it is of the opposite type of the node $n$, and any pair of vertices of $M$ (resp. $M'$) that is split by $t$ (resp. $t'$) can be taken for our subgraph $H$.
    Otherwise, $M$ (resp. $M'$) has a prime root node, and we apply Observation~\ref{obs:prime-extract} to obtain a pair of vertices that is split and has the required adjacency relation.
    
    Let us now show that $H$ does not admit a $1$-contraction sequence satisfying our assumptions. Indeed, $t$ and $t'$ may only be contracted together, because they are the only vertices in $V(G^{2})$ of our subgraph (Claim~\ref{claim:seq-partition}), but they are split by at least two vertices in $V(G^1)$. Now, any order of contractions on $a,b,a',b'$ will create two red edges. More precisely, contracting a vertex from each nontrivial module creates two red edges, unless one module was already reduced to a single vertex, in which case it has a red edge to its splitter from $V(G^{2})$. This will inevitably contradict Lemma~\ref{lem:seq-prime}.
\end{subproof}

\begin{claim}\label{claim:prime-core}
    If $G^i$ has a prime node $p$ in its modular decomposition, the first red edge must appear in the module $M$ corresponding to the subtree rooted at $p$.
    In this case, $M$ has exactly one splitter $s'$ from $G^{3-i}$ and $s'$ should not be incident to a red edge until $M$ is contracted to a single vertex.
\end{claim}

\begin{subproof}
    Let $p$ be a prime node of the modular decomposition of $G^i$ and $M$ the module corresponding to the subtree rooted at $p$. Without loss of generality assume $i=1$.
    We first make the observation that, since there cannot be two disjoint nontrivial strong modules in~$G^1$ by \cref{claim:strong-module-unique}, and since prime graphs have at least $4$ vertices, there are vertices introduced by node~$p$.

    We first consider the case when a red edge incident to a vertex of $G^{2}$ is created before we fully contract $M$. This red edge has to remain incident to a vertex of $G^{2}$ due to Claim~\ref{claim:seq-partition}. Contracting a vertex introduced in prime node $p$ will create another red edge with both endpoints in $G^1$, which is thus different and contradicts Lemma~\ref{lem:seq-prime}.
    
    We now deal with the case where no red edge incident to a vertex of $G^{2}$ appears before we fully contract $M$. Assume the first red edge does not have both endpoints in $M$. Observe that the first contraction cannot be between vertices that are both not in $M$ but have a different adjacency to it because $M$ is a module on at least $4$ vertices, and this would create a red edge to all of them.
    Observe also that if the first contraction had involved a vertex of~$M$, there would also be a red edge with both endpoints in $M$. Indeed, on the one hand, if both contracted vertices were in $M$, red edges with both endpoints in $G^i$ would have both endpoints in $M$ because it is a module. On the other hand, $M$ is also a strong module with a prime node at its root, so all of its vertices have a mixed adjacency to the rest of $M$, contrary to vertices outside $M$. We conclude that the first contraction is between two vertices outside of $M$, so the corresponding red edge has both endpoints outside of $M$.
    This has the following implications: both contracted vertices have an homogeneous adjacency to $M$, but they also have a unique splitter in $G^1$, this implies we cannot contract the red edge before contracting the subgraph corresponding to $p$. Indeed, the two vertices incident to the red edge must have opposite adjacency relations with respect to $M$. Since contracting $M$ will force the creation of a second red edge, this contradicts Lemma~\ref{lem:seq-prime}.

    $M$ is a non-trivial strong module of $G^1$ so it has at least one splitter in $G^{2}$.
    From Lemma~\ref{lem:seq-prime} and Claim~\ref{claim:seq-partition}, it follows that a red edge incident to this splitter cannot appear before $M$ is contracted to a single vertex. Finally, only one red edge may appear when $M$ is contracted to a single vertex meaning the splitter must be unique.
\end{subproof}

In particular, this shows that only one of $G^1$ and $G^2$ can have a prime node in their modular decomposition.
We can moreover deduce that if there is a prime node in $G^i$, it must be unique. Indeed, a red edge has to appear in some $M$ corresponding to the subtree of the prime node that is deepest in the modular decomposition of $G^i$. At any step of the sequence, $M$ contains a red edge, until it is a single vertex with a red edge towards its splitter. Then, vertices introduced in any higher prime node cannot be contracted on such red edges without producing an additional red edge, which contradicts Lemma~\ref{lem:seq-prime}.

We are now ready to define $M$.
Consider the first red edge appearing between $X_1 \subseteq V(G^1)$ and $X_2 \subseteq V(G^2)$ in a $1$-contraction sequence as considered, and assume without loss of generality that it stems from a contraction in $G^1$.
Then, $X_2$ can only consist in a single vertex $s'$ due to Claim~\ref{claim:seq-partition}, Lemma~\ref{lem:seq-prime}, and the fact that this is the first red edge with an endpoint in both $G^1$ and $G^2$.
Then, $X_1$ is included in a module of $G^1$ which admits as unique splitter $s'$. 
In the case when there is no prime node, $X_1$ induces a cograph, so we may then take $M$ to be any pair of consecutive twins $a,b \in X_1$, which are also consecutive twins in $G^1$ by Corollary~\ref{coro:respecting-realiser}, distinguished by $s'$ (because $G$ is prime and $s'$ is the only splitter of $X_1$).

Otherwise, we can take $M$ and $s'$ as guaranteed by \cref{claim:prime-core} on $G^1$ and $G^2$.
The restriction of our contraction sequence to $G[M \cup s']$ yields a $1$-contraction sequence in which $s'$ becomes incident to a red edge last, as desired.

Having defined $M$, we are ready to show the following.
\begin{claim}\label{claim:reorganise}
    There exists a $1$-contraction sequence of $G$ that starts by contracting only vertices of $M$ until it is reduced to a single vertex. In particular, this creates a red edge towards $s'$ at the last contraction.
\end{claim}
\begin{subproof}
    We can extend the contraction sequence that contracts $M$ first into a sequence contracting the smallest interval $I$ of $\sigma[V(G^1)]$ containing $X_1$ because $X_1$ is not split by any splitter from $G^2$ other than $s'$, and it does not have disjoint non trivial strong modules. After contraction of $M$ to a single vertex, the remaining vertices of $I$ induce a cograph that is not split by any vertex of $G^2$ other than $s'$. We can then proceed with the rest of the contraction of $G$ simply by contracting remaining vertices according to our initial contraction sequence.
\end{subproof}

At this point, in both cases, we know $M$ forms a common interval for $(\sigma[V(G^1)],\tau[V(G^1)])$, and that there is some $1$-sequence for $G$ that first contracts $M$ and in which $s$ is the last vertex to be incident to a red edge. We deduce that $M \cup \{s'\}$ satisfies the setting of Corollary~\ref{coro:respecting-realiser}, and any further contractions must be consecutive to $M$ due to Corollary~\ref{coro:respecting-realiser}, ensuring the existence of $\pi$.
\end{proof}

\begin{figure}[h]
    \centering
     \includegraphics[width=0.5\linewidth]{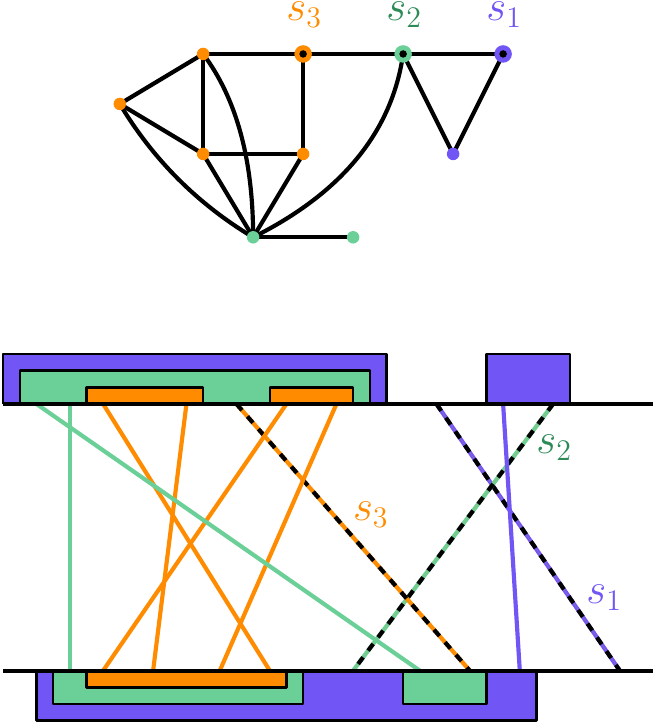}
    \caption{An example of a twin-width 1 graph and its permutation diagram. Shown here is the decomposition of Lemma~\ref{lem:dec-algo} starting from extremal vertex $s_1$. The segments of splitters $s_1,s_2,s_3$ are represented in dashed blue, green and orange respectively.
Then, intervals of the corresponding colour are the intervals $A,B,C$ produced during the iteration in which the splitter is considered.
$A$ and $B$ induce subgraphs $G^i$ from Lemma~\ref{lem:dec-tww1}.
Then, segments of the same colour as a splitter correspond to doubly extremal vertices eliminated in the corresponding step.}
    \label{fig:dec-illustration}
\end{figure}

The crucial observation now is that $\pi$ can easily be computed greedily using the realiser.

\begin{lemma}\label{lem:dec-algo}
    Given a permutation diagram of a prime permutation graph $G$ and an extremal vertex $s$ of $G$, we can produce in time linear in the number of vertices of $G$ a $1$-contraction sequence such that $s$ is the last vertex of $G$ to be incident to the red edge, or conclude that no such sequence exists.
\end{lemma}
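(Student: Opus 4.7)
The plan is to turn Lemma~\ref{lem:dec-tww1} into a recursive linear-time algorithm, using the provided realiser $(\sigma, \tau)$ (assume without loss of generality that $s$ is extremal in $\sigma$) to locate every required object efficiently.

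First I would compute $G^1 := G[N(s)]$ and $G^2 := G[\overline{N}(s)]$ from the diagram, together with their modular decompositions in linear time~\cite{McConnellS99}. By Lemma~\ref{lem:dec-tww1}, at most one of $G^1, G^2$ may contain a prime node in its modular decomposition, and if so it must be unique, which is straightforward to verify. I then identify the pair $(M, s')$: in case (ii) $M$ is the module corresponding to the subtree below the unique prime node, whereas in case (i) (both $G^1$, $G^2$ cographs) $M$ is a pair of consecutive twins in some $G^i$. Since $M \cup \{s'\}$ must form a common interval of the realiser at the $\sigma$-end opposite $s$, the search for $M$ and the unique splitter $s' \in V(G^{3-i})$ can be localized to that end of the diagram; if no valid candidate $(M, s')$ is found, I reject.

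I then compute $\pi$ greedily: starting from $S := M \cup \{s'\}$, each step picks the unique vertex consecutive to $S$ in $\sigma$ on the side towards $s$, and checks that it is also consecutive to one of the two $\tau$-intervals of $S$, rejecting otherwise. This yields $\pi$ in $O(n)$ time. In parallel, I recurse on $G[M \cup \{s'\}]$ using $s'$ as the new extremal vertex; this is well-defined since in case (ii), $M$ being a common interval and $s'$ lying in the opposite $\tau$-interval of the induced realiser makes $s'$ extremal there, while case (i) is a trivial base case on at most three vertices.

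The full $1$-contraction sequence is then assembled by first running the recursive sequence inside $G[M \cup \{s'\}]$ (which reduces it to a single mega-vertex $q$ with $s'$ last to join a red edge) and then contracting the vertices of $\pi$ onto $q$ in the order in which they were added. The total running time is linear, as each recursive level handles a disjoint portion of the vertex set and every phase runs in time linear in the size of its input. The main obstacle is verifying that this assembly truly produces a valid $1$-contraction sequence of $G$, not merely of its restrictions. The key observation is that by the choice of $(M, s')$, $s'$ is the unique splitter of $M$ in all of $G$, so $M$ is a module of $G - \{s'\}$; contractions inside $M$ therefore create red edges only towards $s'$, and combined with Corollary~\ref{coro:respecting-realiser} this bounds the red degree at every step.
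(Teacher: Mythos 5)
Your high-level plan (use Lemma~\ref{lem:dec-tww1}, find the pair $(M,s')$, build $\pi$ greedily, recurse on $G[M\cup\{s'\}]$) is in the right spirit, but two steps do not hold up. First, the complexity: the lemma asks for time linear in the \emph{number of vertices}, given the diagram. You recompute $G^1=G[N(s)]$, $G^2=G[\overline{N}(s)]$ and their modular decompositions at every level of the recursion; a single modular decomposition already costs $\Omega(n+m)$, and since the recursive instance $G[M\cup\{s'\}]$ may contain almost all vertices (only a few vertices need be eliminated per level), the recursion depth can be $\Theta(n)$, giving at least quadratic time overall. The paper's algorithm never materialises $G^1,G^2$ or any modular decomposition: since $s$ is extremal in $\sigma$, the sets $N(s)$ and $\overline{N}(s)$ are the two $\tau$-intervals $A,B$ read off in constant time, and the whole computation is a greedy peeling of ``doubly extremal'' vertices (endpoints of the $\sigma$-interval $C$ that are simultaneously endpoints of $A$ or $B$), maintained by index updates only, with $O(1)$ work per removed vertex.

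Second, your localisation claim is unjustified and, as far as I can see, false in general: Lemma~\ref{lem:dec-tww1} only guarantees that the sets $M\cup\{s'\}\cup\pi([i])$ are intervals of $\sigma$; nested $\sigma$-intervals may grow on \emph{both} sides, so $M\cup\{s'\}$ need not sit at the $\sigma$-end opposite $s$ (equivalently, in the peeling view, doubly extremal vertices are removed from either end of $C$, and the residual interval where the recursion bottoms out can be strictly interior in $\sigma$). This breaks both your method for locating $(M,s')$ ``at the end of the diagram'' and your greedy construction of $\pi$, which only extends $S$ on the side towards $s$ and would wrongly reject valid instances. Relatedly, your rejection criterion needs an argument that a failed greedy implies no sequence exists for \emph{any} choice of extension order; the paper supplies this by observing that being doubly extremal is preserved under deletion of other vertices, so the order of removals is immaterial, and a stuck configuration with both $\tau$-sides of size at least two contradicts the existence of the ordering $\pi$ of Lemma~\ref{lem:dec-tww1}. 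Your final paragraph on assembling the sequence (contractions inside $M$ only send red to $s'$, then append $\pi$) matches the paper's induction, but it rests on first finding $(M,s')$ and $\pi$ correctly and within the time bound, which is exactly where the proposal falls short.
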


\begin{proof}
    This is essentially an implementation of the result of \cref{lem:dec-tww1}, we reuse similar notations and provide an illustration in Figure~\ref{fig:dec-illustration}. The algorithm works as follows.

    We initialise indices at the endpoints of the intervals $A$ and $B$ of $\tau$ that are split by $s$, and at the endpoints of the interval $C=V(G)-\{s\}$ of $\sigma$. We add $s$ to the global list $\Pi$ and mark it as a splitter.

    While there is one endpoint $v$ of $C$ that is also an endpoint of $A$ or an endpoint of $B$, we add $v$ to the global list $\Pi$ and update indices describing $A,B,$ and $C$ in accordance to the removal of $v$. We call such a vertex $v$ \emph{doubly extremal}.

    If the intervals have become empty, return $\Pi$, otherwise, if either $A$ or $B$ contains a single vertex, we set $s'$ to be this vertex and $M$ to be the other interval, and apply the algorithm recursively to $G[M \cup \{s'\}]$ with $s'$ as the last vertex incident to a red edge. Finally, if both $A$ and $B$ contain more than one vertex, we reject.

    We can use marked vertices to deduce a contraction sequence: we iterate through $\Pi$, starting from the last vertices added.
    When reaching a marked vertex, we contract the red edge.
    Otherwise, we contract the current vertex to the vertex of the red edge with the same adjacency to the next marked vertex.

    If the algorithm does not reject, let us show by induction on the recursive calls that the list $\Pi$, along with its marked vertices, describes a valid order of contractions for $G$.
    First, we clarify the starting red edge that we consider in the part of the contraction sequence built at this step of the recursion.
    In the case when we emptied $A$ and $B$ (so $M$ and $s'$ are not defined), take the last vertex in $A$ and the last vertex in $B$ as a starting red edge.
    We move to the case where $M$ and $s'$ are defined.
    Here, by induction hypothesis, the algorithm finds a $1$-contraction sequence with $s'$ as the last vertex incident to a red edge for $G[M\cup \{s'\}]$.
    This contraction sequence for $G[M\cup \{s'\}]$ gives a partial $1$-contraction sequence for $G$ (see Observation~\ref{obs:subgraph-tww}) yielding a red edge between $M$ and $s'$, with all other vertices of $G$ not contracted yet.

    We now check that each vertex added to $\Pi$ at this step can be contracted in the order described by $\Pi$. This is the case because the vertex does not split the vertices of~$A$ contracted before it, nor the vertices of $B$ contracted before it. By contracting it to the vertex of the red edge with the same adjacency to $s$, we create no additional red edge. In particular, $s$ is still not incident to a red edge.
    Once we contracted all vertices of $A$ and $B$, the red edge can be safely contracted. Indeed, $s$ is the unique splitter of $A \cup B$.

    If the algorithm rejects, it contradicts the existence of $\pi$ guaranteed by \cref{lem:dec-tww1}.
    Indeed, if for all $i$, the $M \cup \{s'\} \cup \pi([i])$ are intervals as claimed, it must be that for each $i$, $\pi(i)$ is extremal on $\sigma[M \cup \{s'\} \cup \pi([i])]$ and on $\tau[(M \cup \{s'\} \cup \pi([i]))\cap V(G^j)]$ for some $j \in \{1,2\}$. In particular, all vertices that are not in $M \cup \{s'\}$ will become doubly extremal if $\pi$ exists. Note that this is independent of the order in which they are contracted. Indeed, a vertex remains (doubly) extremal while we delete other vertices. 
    
    We conclude that $G$ has twin-width more than $1$.
\end{proof}

\begin{theorem}
    We can decide in linear time if a given graph $G$ has twin-width at most $1$.
\end{theorem}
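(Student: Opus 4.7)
The plan is to combine the structural results of the paper into an algorithm whose correctness follows from the preceding lemmas, and whose running time telescopes to $O(n+m)$ across the modular decomposition.

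First, I would compute the modular decomposition $T$ of $G$ in time $O(n+m)$ using the algorithm of \cite{McConnellS99}. By Corollary~\ref{coro:seq-modular}, $\tww(G)\le 1$ if and only if $\tww(Q(p))\le 1$ for every node $p$ of $T$. For degenerate nodes (series or parallel), the quotient is a clique or an independent set, which has twin-width $0$, so these need no processing. It therefore suffices to handle each prime node separately.

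For each prime node $p$, let $Q=Q(p)$. I would first test whether $Q$ is a permutation graph and, if so, produce a realiser $(\sigma,\tau)$ in time linear in $|V(Q)|+|E(Q)|$, again using \cite{McConnellS99}. If $Q$ is not a permutation graph, then by Theorem~\ref{thm:permutation} we have $\tww(Q)>1$ and we reject immediately. Otherwise, since $Q$ is prime its realiser is unique up to the natural symmetries, so there are exactly four extremal vertices of the diagram (the endpoints of $\sigma$ and $\tau$). By Corollary~\ref{cor:last-vertex}, in any $1$-contraction sequence of $Q$ the last vertex to become incident to the red edge must be one of these four extremal vertices. I would therefore invoke Lemma~\ref{lem:dec-algo} for each of these (at most) four choices; if any call succeeds it returns a $1$-contraction sequence for $Q$, and if all four reject then $Q$ has twin-width more than $1$. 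Each call runs in time $O(|V(Q)|)$ by Lemma~\ref{lem:dec-algo}.

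To assemble a $1$-contraction sequence for $G$ when all prime nodes succeed, I would use the proof of Lemma~\ref{lem:seq-modular}: contract each maximal strong module to a single vertex via its own sequence (obtained recursively), then apply the sequence of the quotient. For degenerate nodes a sequence of twin contractions is immediate. This produces a witness in linear total time.

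The main obstacle is ensuring that the total running time is $O(n+m)$ rather than worst-case $O(n(n+m))$. This follows from the standard fact that the sum $\sum_p (|V(Q(p))|+|E(Q(p))|)$ over all nodes $p$ of the modular decomposition is $O(n+m)$: each vertex of $G$ appears in the quotient of at most one node on any root-to-leaf path at each level in a controlled way, and each edge of $G$ is represented in at most one quotient graph. Combined with the $O(|V(Q(p))|)$ cost per extremal-vertex attempt and the constant number $4$ of attempts per prime node, the overall complexity is $O(n+m)$, which gives the claimed linear-time recognition.
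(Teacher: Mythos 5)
Your proposal is correct and follows essentially the same route as the paper's proof: modular decomposition plus realisers via \cite{McConnellS99}, rejection of non-permutation prime quotients by Theorem~\ref{thm:permutation}, trying each of the at most four extremal vertices justified by Corollary~\ref{cor:last-vertex} with the algorithm of Lemma~\ref{lem:dec-algo}, reassembly through Corollary~\ref{coro:seq-modular}, and the standard linear bound on the total size of the quotient graphs. No substantive differences.
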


\begin{proof}
    We first compute a modular decomposition of $G$ and realisers for the quotient graphs of prime nodes \cite{McConnellS99}\footnote{In fact, it would be more reasonable to compute a permutation diagram and deduce a modular decomposition via common intervals for a practical implementation.}, or conclude that $G$ is not a permutation graph and therefore not a graph of twin-width $1$. It is well-known that the total size of the quotient graphs of prime nodes is linear in the size of the original graph.
    
    We then check all prime nodes with corresponding quotient graph $H$ as follows:
    
    Guess an extremal vertex $s$ (out of at most $4$) that can be the last vertex incident to the red edge in a $1$-contraction sequence of $H$, and then apply the recursive algorithm of \cref{lem:dec-algo}.

    If all prime nodes have twin-width $1$, we obtained a contraction sequence for all of them and they can be combined to form a contraction sequence of $G$ using \cref{coro:seq-modular}.

    All of the above procedures can be implemented to run in linear time.
\end{proof}

\section{Twin-width of distance-hereditary graphs}\label{sec:DH}

Distance-hereditary graphs are the graphs whose connected induced subgraphs preserve the distances between vertices. They are also graphs that can be fully decomposed by the split decomposition (see Theorem~\ref{thm:split-DH}).

The split decomposition is a graph decomposition introduced by Cunningham \cite{SplitDecIntro} based on recursively finding \emph{splits} of the graph, which are bipartitions $(A,B)$ of $V(G)$ with $|A|,|B| \geq 2$ such that the edges crossing $(A,B)$ form a biclique. The split decomposition of a graph can be computed in linear time \cite{SplitAlgo}. 

To describe split decompositions, we use the notion of graph-labelled trees introduced in \cite{GioanP12}, we begin by introducing their formalism and some of their results.

\begin{definition}[\hspace{-3pt} \cite{GioanP12}]
A \emph{graph-labelled tree} $(T,\mathcal{F})$ is a tree $T$ in which every node $v$ of degree $k$ is labelled by a graph $G_v \in \mathcal{F}$ on $k$ vertices, called \emph{marker vertices}, such that there is a bijection $\rho_v$ from the tree-edges of $T$ incident to $v$ to the marker vertices of $G_v$. If $\rho_v(e) = q$, then $q$ is called an \emph{extremity} of $e$.  
\end{definition}

Let $(T,\mathcal{F})$ be a graph-labelled tree and $l$ be a leaf of $T$. A node or a leaf $u$ different from $l$ is $l$-accessible if for every tree-edges $e = wv$ and $e'=vw'$ on the $l,u$-path in $T$, we have $\rho_v(e)\rho_v(e') \in E(G_v)$. By convention, the unique neighbour of the leaf $l$ in $T$ is also $l$-accessible.

\begin{definition}[\hspace{-3pt} \cite{GioanP12}]
The \emph{accessibility graph} of a \emph{graph-labelled tree} $(T,\mathcal{F})$ is the graph $Gr(T,\mathcal{F})$ whose vertex set is the leaf set of $T$, and in which there is an edge between x and y if and only if y is $x$-accessible. In this setting, we say that $(T,\mathcal{F})$ is a graph-labelled tree of $Gr(T,\mathcal{F})$.
\end{definition}

\begin{lemma}[\hspace{-3pt} \cite{GioanP12}]
Let $(T,\mathcal{F})$ be a graph-labelled tree. The accessibility graph $Gr(T,\mathcal{F})$ is connected if and only if for every node $v$ of $T$ the graph $G_v \in \mathcal{F}$ is connected.
\end{lemma}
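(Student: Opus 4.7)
The plan is to handle the two directions separately. \emph{Forward direction (by contrapositive).} Suppose some $G_v$ is disconnected, with components $C_1, \ldots, C_t$ for $t \geq 2$. For each leaf $l$ of $T$, the unique path from $l$ to $v$ in $T$ terminates with some tree-edge $e_l$ incident to $v$, and I would partition the leaves into classes $S_i = \{l : \rho_v(e_l) \in C_i\}$. Each $S_i$ is nonempty because every marker vertex of $G_v$ corresponds to a tree-edge whose removal leaves a subtree containing at least one leaf. I would then observe that no edge of $Gr(T,\mathcal{F})$ crosses between distinct $S_i, S_j$: any accessible path between two such leaves must cross $v$, entering through an edge whose marker lies in $C_i$ and exiting through one in $C_j$, so the accessibility condition at $v$ would force these two markers to be adjacent in $G_v$, contradicting their lying in different components. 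Hence $Gr(T,\mathcal{F})$ is disconnected.

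\emph{Backward direction, reachability sub-lemma.} Assuming every $G_u$ is connected, the heart of the argument is a sub-lemma: for every internal node $v$ and marker vertex $p \in V(G_v)$, there exists a leaf $\ell_{v,p}$ such that the $\ell_{v,p}$-to-$v$ path in $T$ is accessible and ends at the tree-edge $\rho_v^{-1}(p)$. I would prove this by induction on the size of the subtree of $T - v$ attached via $\rho_v^{-1}(p)$. If the other endpoint of $\rho_v^{-1}(p)$ is a leaf, take it directly; otherwise it is an internal node $u$ whose label $G_u$ is connected with at least two vertices, so the entry marker at $u$ has a neighbour in $G_u$, which determines a valid continuation edge into a strictly smaller subtree, and the induction hypothesis supplies the desired leaf.

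\emph{Backward direction, main argument.} Given the sub-lemma, the plan is to show that any two leaves $x, y$ lie in the same component of $Gr(T,\mathcal{F})$, by induction on the number of \emph{failure points} along the $T$-path $P$ from $x$ to $y$, namely the internal nodes of $P$ where the direct accessibility condition fails. Zero failures means $xy$ is already a direct edge. Otherwise, let $w$ be the failure point of $P$ closest to $x$, with incoming marker $p$ and outgoing marker $q$ (so $pq \notin E(G_w)$). Connectedness of $G_w$ yields a path $p = a_0, a_1, \ldots, a_s = q$ in $G_w$, and the sub-lemma provides a representative leaf $\ell_j$ in the subtree attached at $w$ via $\rho_w^{-1}(a_j)$ for each $1 \leq j \leq s - 1$. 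Each consecutive pair from $x, \ell_1, \ldots, \ell_{s-1}$ is then directly adjacent in $Gr(T,\mathcal{F})$: the paths through $w$ are accessible on the $x$-side because $w$ is the \emph{first} failure of $P$, on the intermediate side by the sub-lemma, and at $w$ itself because $a_j a_{j+1} \in E(G_w)$. This reduces the problem to connecting $\ell_{s-1}$ to $y$, whose $T$-path has strictly fewer failure points than $P$: the failure at $w$ is resolved by construction, while failures of $P$ strictly between $w$ and $y$ carry over unchanged, so the induction closes. The main obstacle will be this bookkeeping on failure points, which requires the careful choice of $w$ as the first failure along $P$ in order to guarantee that each reduction is strict.
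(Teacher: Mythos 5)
Your argument is correct. Note, however, that the paper does not prove this lemma at all: it is imported verbatim from Gioan and Paul \cite{GioanP12}, so there is no in-paper proof to compare against. Your self-contained proof is sound: the contrapositive for the forward direction (partitioning the leaves by which component of $G_v$ their entry marker at $v$ lies in, each class being nonempty since every component of $T-v$ contains a leaf of $T$) is exactly the natural argument, and the backward direction is carried correctly by your reachability sub-lemma (a leaf realising any prescribed entry marker at any node, by induction on subtree size) combined with the induction on the number of failure points along the tree path. The two places where such a proof could go wrong are handled: choosing $w$ as the \emph{first} failure guarantees that the $x$-side of each new path is accessible and that the failure count strictly decreases for the pair $(\ell_{s-1},y)$, and the edge $a_{s-1}q\in E(G_w)$ ensures no new failure is created at $w$. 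This is in the same spirit as the accessibility arguments of \cite{GioanP12}, just reorganised around your explicit failure-point induction.
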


\begin{lemma}[\hspace{-3pt} \cite{GioanP12}]
Let $(T,\mathcal{F})$ be a graph-labelled tree of a connected graph $G$ and let $v$ be a node of $T$. Then every maximal tree of $T-v$ contains a leaf $l$ such that $v$ is $l$-accessible.
\end{lemma}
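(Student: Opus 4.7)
The plan is to build the required leaf $l$ by a greedy walk that starts at $v$, crosses into the chosen maximal subtree, and at each visited internal node uses the connectedness of the local label graph (guaranteed by the previous lemma, since $G$ is connected) to pick the next tree-edge so that the accessibility condition at the node just left is satisfied automatically.

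First I would fix a maximal subtree $T_1$ of $T-v$ and let $u$ denote the unique neighbour of $v$ lying in $T_1$. If $u$ is already a leaf of $T$, then the $u,v$-path has no internal node, so $v$ is $l$-accessible vacuously by the convention in the definition, and $l:=u$ works. Otherwise $u$ has degree at least $2$ in $T$ and I initiate a walk $u=w_0,w_1,\dots$ inside $T_1$, with entering edge $e_0 := uv$ and marker $p_0 := \rho_u(uv) \in V(G_u)$.

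At a generic step the walk has reached node $w_i$ through a tree-edge $e_i$ whose marker in $G_{w_i}$ is $p_i := \rho_{w_i}(e_i)$. If $w_i$ is a leaf of $T$ I stop and set $l:=w_i$; otherwise $w_i$ has degree at least $2$ in $T$, so $G_{w_i}$ has at least two vertices, and by the preceding lemma $G_{w_i}$ is connected (because $G$ is), so $p_i$ admits some neighbour $p'_i$ in $G_{w_i}$. I cross the tree-edge $e_{i+1} := \rho_{w_i}^{-1}(p'_i)$, which is necessarily different from $e_i$, and continue at $w_{i+1}$. Since the walk stays in the finite tree $T_1$ and cannot revisit a node, it must terminate at some leaf $l$ of $T$ contained in $T_1$. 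The explicit choice $p'_i \in N_{G_{w_i}}(p_i)$ at each step guarantees that, at every internal node of the $l,v$-path, the two markers of its two incident path-edges are adjacent in the corresponding label graph, which is exactly the condition for $v$ to be $l$-accessible.

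The only real obstacle is ensuring that the greedy step is always available at every internal node; this is precisely the content of the previously cited connectedness lemma. The remaining ingredients—termination of the walk in a finite tree, the vacuous base case $l=u$, and reading the accessibility condition off the construction—are routine.
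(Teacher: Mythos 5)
Your proof is correct. Note that the paper itself gives no proof of this lemma -- it is quoted from Gioan and Paul \cite{GioanP12} -- so there is nothing to diverge from; your greedy non-backtracking walk (always exiting a node through a marker adjacent to the entering marker, which exists because $G$ connected forces every label $G_{w_i}$ to be connected by the preceding lemma, and which yields a simple path in the tree that must end at a leaf of $T$ inside the chosen component) is the natural self-contained argument, and the accessibility of $v$ from the terminal leaf is read off directly from the adjacency choices made at each internal node, with the degenerate case $l=u$ handled by the convention in the definition.
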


The split decomposition of a graph $G$ gives a graph-labelled tree whose accessibility graph is $G$, where nodes are labelled by graphs that do not admit any split, and degenerate graphs which are cliques and stars.

\begin{theorem}[\hspace{-3pt} \cite{GioanP12}]\label{thm:split-DH}
A graph is distance-hereditary if and only if it is the accessibility graph of a graph-labelled tree labelled only by cliques and stars.
\end{theorem}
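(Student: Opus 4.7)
The plan is to prove both directions by induction, using the Bandelt--Mulder characterisation of distance-hereditary graphs as exactly those built from a single vertex by iteratively adding either a pendant vertex, a true twin, or a false twin. Both operations (true/false twin addition and pendant addition) preserve the distance-hereditary property, and any distance-hereditary graph on at least two vertices admits such a ``simplicial'' vertex.

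For the backward direction, I would induct on $|V(T)|$. The base case is a single node labelled by a clique or a star, both of which are distance-hereditary. For the inductive step, fix any leaf $\ell$ of $T$ with unique neighbour $v$, and let $(T',\mathcal{F}')$ be the graph-labelled tree obtained from $(T,\mathcal{F})$ by deleting $\ell$ and the marker $\rho_v(e_\ell)$ from $G_v$, and contracting $v$ with its remaining neighbour when $v$ becomes of degree $2$. By induction, $Gr(T',\mathcal{F}')$ is distance-hereditary, so it suffices to verify that re-inserting $\ell$ realises one of the three Bandelt--Mulder operations on $Gr(T',\mathcal{F}')$: a case analysis on $G_v$ shows that if $G_v$ is a clique then $\ell$ becomes a true twin of the leaves corresponding to the other clique-markers; if $G_v$ is a star and $\rho_v(e_\ell)$ is the centre, $\ell$ becomes a pendant; and if $G_v$ is a star and $\rho_v(e_\ell)$ is a leaf, $\ell$ becomes a false twin of the leaves associated to the other star-leaves.

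For the forward direction, I would induct on $|V(G)|$. The case $|V(G)|=1$ is immediate. Otherwise, Bandelt--Mulder yields a vertex $v$ related to some $u \in V(G-v)$ by one of the three operations, and the induction hypothesis provides a graph-labelled tree $(T',\mathcal{F}')$ of $G-v$ whose labels are cliques and stars. I would build $(T,\mathcal{F})$ from $(T',\mathcal{F}')$ by a local modification near the leaf $\ell_u$ representing $u$: either extend the node incident to $\ell_u$ by one new marker (when it is already of the correct type), or subdivide the edge incident to $\ell_u$ and insert a new degree-$3$ node of the appropriate type---a clique node for true twins, a star with the new marker at the centre for pendants, or a star with the new marker at a leaf for false twins. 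The accessibility relation at the new leaf then reproduces the prescribed neighbourhood, giving $Gr(T,\mathcal{F}) = G$.

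The main obstacle is the normalisation of the graph-labelled tree in both directions, so that the resulting tree remains in the canonical form of Cunningham's split decomposition. Concretely, the backward direction must handle the merging of $v$ with its unique remaining neighbour when $v$ drops to degree~$2$, and check that this merging is compatible with the clique/star labelling and does not alter the accessibility graph. Symmetrically, the forward direction requires a case distinction to decide whether to extend an existing node or to create a new one, ensuring in each of the six (operation, label)-combinations that the accessibility relation at the new leaf matches exactly the intended twin or pendant operation. Once this bookkeeping is in place, the equivalence follows directly from the two inductions.
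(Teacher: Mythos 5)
Your overall strategy -- two inductions driven by the Bandelt--Mulder characterisation of distance-hereditary graphs (pendant, true-twin, false-twin additions) together with local surgery on the graph-labelled tree -- is the right one; note that the paper itself does not prove this statement but cites it from Gioan and Paul. However, both of your inductions contain concrete errors. In the backward direction you cannot ``fix any leaf $\ell$ of $T$'': your case analysis tacitly assumes that the other markers of $G_v$ are extremities of tree edges leading to \emph{leaves} of $T$, whereas they may lead to internal subtrees, and then deleting $\ell$ is not the inverse of any single Bandelt--Mulder operation. Concretely, for $P_4=a\,b\,c\,d$ the clique/star graph-labelled tree has two star nodes joined by a tree edge, with $b$ and $c$ attached at the centres; the vertex $b$ is neither a pendant vertex nor a twin of any vertex of $P_4$, so the inductive step fails for the leaf $\ell=b$ (moreover, deleting the centre marker leaves a two-vertex edgeless label, which is neither a clique nor a star, and $a$ becomes isolated). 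The induction must instead choose an internal node $v$ that is a leaf of the subtree induced by the internal nodes, so that all but at most one of its tree edges lead to leaves of $T$, pick the removed leaf accordingly, and smooth degree-$2$ nodes.

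In addition, your star case analysis is inverted. A leaf attached at the \emph{centre} marker of a star node can pass through $v$ via every leaf marker, so it is in general not pendant; it is a leaf attached at a \emph{leaf} marker, whose centre marker leads to a tree leaf $x$, that is pendant to $x$, while leaves attached at distinct leaf markers of the same star are false twins. The same inversion breaks your forward construction: to attach a pendant $v$ to $u$ you must subdivide the tree edge at $\ell_u$ and insert a star whose centre is the marker towards $\ell_u$; with ``the new marker at the centre'' the two old markers become non-adjacent star leaves, so $u$ is cut off from its former neighbourhood and the new leaf inherits it -- the resulting accessibility graph equals $G$ only after swapping the names of $u$ and $v$, not as claimed. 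Similarly, for a false twin the centre must be the marker towards the rest of the tree; ``the new marker at a leaf'' does not determine this, and placing the centre towards $\ell_u$ yields a pendant instead. Finally, the normalisation you flag as the main obstacle is a non-issue: the statement only asks for \emph{some} graph-labelled tree with clique and star labels, not Cunningham's canonical split decomposition, so beyond smoothing degree-$2$ nodes no canonical-form bookkeeping is required.
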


\begin{theorem}[\hspace{-3pt} \cite{GioanP12}]\label{thm:split-cograph}
A graph is a cograph if and only if it is the accessibility graph of a graph-labelled tree labelled only by cliques and stars, with all stars pointing to a fixed edge of the tree (such a tree corresponds exactly to the modular decomposition tree of the cograph).
\end{theorem}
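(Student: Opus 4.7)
The plan is to prove both directions by establishing a precise correspondence between the graph-labelled trees of the statement and the modular decomposition (cotree) of a cograph, building on \cref{thm:split-DH} for the backward inclusion.

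For the forward direction, I would start with a cograph $G$ and its modular decomposition tree $T_G$. Since $G$ is $P_4$-free, every internal node of $T_G$ is either a series node (whose quotient is a clique) or a parallel node (whose quotient is edgeless), with no prime nodes. I root $T_G$ at an arbitrary edge $e$ and build a graph-labelled tree by labelling each series node of degree $k$ with the clique $K_k$, and each parallel node of degree $k$ with the star $S_k$ whose center marker corresponds to the tree-edge on the path to $e$. A short induction on the distance from $e$ shows that two leaves $x,y$ are accessible in the resulting graph-labelled tree exactly when their lowest common ancestor in $T_G$ is a series node: at every internal node strictly between $x$ and the LCA, the path enters through a child-side marker and exits through the parent-side marker, which are adjacent in both clique and star labels (since the star center sits on the parent side); at the LCA itself, the path enters and exits through two child-side markers, which are adjacent in a clique but non-adjacent in a star. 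This matches adjacency in $G$.

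For the backward direction, suppose $G$ is the accessibility graph of a graph-labelled tree $(T, \mathcal{F})$ whose labels are only cliques and stars, with all star centers being the markers on the side of a fixed edge $e$. By \cref{thm:split-DH}, $G$ is already known to be distance-hereditary. To obtain the stronger conclusion that $G$ is a cograph, I would reinterpret $(T, \mathcal{F})$ rooted at $e$ as a cotree: each clique node becomes a series node and each star node becomes a parallel node, the non-center markers encoding children and the center marker encoding the parent. The uniform star orientation is crucial here: it ensures that at every star, the non-center markers are exactly the markers pointing away from $e$, so the star consistently represents the independent-union semantics of a parallel node, regardless of the node's position in the tree. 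The same LCA case analysis as in the forward direction then shows that two leaves are accessible in $(T, \mathcal{F})$ if and only if their LCA in this cotree is a series node, so $G$ is exactly the cograph described by the cotree.

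The main obstacle is the bookkeeping in the backward direction: one must verify that the uniform star orientation with respect to $e$ really does determine, for every star node, a canonical parent marker (the center) and a canonical set of children markers (the leaves of the star), in a way that propagates consistently across all nodes of $T$ rooted at $e$. Once this tree structure is extracted, identifying the accessibility conditions with the series-parallel composition of cographs is a direct recomputation, essentially dual to the forward direction, and it also yields the parenthetical claim that $(T, \mathcal{F})$ coincides with the modular decomposition tree of $G$ after merging any adjacent nodes of the same type.
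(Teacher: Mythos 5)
The paper itself imports this statement from \cite{GioanP12} without proof, so your argument has to stand on its own, and the forward direction as written contains a genuine error: you cannot root the cotree at an \emph{arbitrary} edge $e$ and orient every star center toward $e$. Adjacency in the cograph is governed by least common ancestors with respect to the cotree's own root, and re-orienting the parallel nodes toward an arbitrary $e$ breaks this. Concretely, take $G=P_3$ with edges $ac,bc$: its cotree has a series root $S$ with children the leaf $c$ and a parallel node $P$ with children $a,b$. If you choose $e$ to be the tree-edge between $P$ and $a$, your rule labels $P$ by a star whose center is the marker of that edge; the path from $b$ to $c$ then crosses $P$ through two non-center markers, so $b$ and $c$ are not accessible in the constructed tree although $bc\in E(G)$. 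The fix is to orient the center of every parallel node toward its \emph{parent} in the cotree (equivalently toward the root) and to take the fixed edge $e$ to be a tree-edge incident to the root; one then checks that every star points to this $e$, and your LCA analysis goes through because ``toward $e$'' and ``toward the root'' now coincide. This also surfaces the implicit hypothesis that $G$ is connected, so that the root is a series node and every parallel node has a parent: since cliques and stars are connected labels, the connectivity lemma of Gioan and Paul forces every accessibility graph of such a tree to be connected, so disconnected cographs are not representable at all (an imprecision already present in the statement as quoted, but your ``arbitrary'' rooting hides rather than addresses it).

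The backward direction is essentially sound, and it is sound precisely because there the hypothesis guarantees that the star orientation agrees with the rooting at $e$ -- the agreement you dropped in the forward direction. Two smaller points: rooting at the edge $e$ should be formalised by a virtual series node subdividing $e$ (the two endpoints of $e$ cannot each be the other's parent), after which your case analysis -- $x$ and $y$ are accessible if and only if the topmost node of their path is a clique node or the virtual root -- is exactly the right argument and yields a series/parallel tree, hence a cograph. Finally, be careful with the parenthetical claim: the canonical split decomposition of a cograph is not literally its cotree (for $C_4$, for instance, the degree-two series root is suppressed and the two stars share the central tree-edge), so ``corresponds exactly'' holds only up to the standard identifications, not on the nose as your last sentence suggests.
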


We now give some simple results on the relation between twin-width and split decompositions.

\begin{lemma}\label{lem:ATfree-HD-dec}
The canonical split decomposition tree of a prime\footnote{for the modular decomposition} AT-free graph is a caterpillar.
\end{lemma}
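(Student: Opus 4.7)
My plan is to prove the contrapositive: if the canonical split decomposition tree $(T,\mathcal{F})$ of $G$ is not a caterpillar, then $G$ contains an asteroidal triple, contradicting AT-freeness.

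A tree fails to be a caterpillar precisely when some internal node $v$ has three internal neighbors $u_1,u_2,u_3$. Let $T_i$ be the component of $T-v$ containing $u_i$ and $L_i\subseteq V(G)$ its leaf set. Since every internal node of $T$ has degree at least $3$ (each marker graph has at least $3$ vertices), $|L_i|\geq 2$ for every $i$. Set $A_i:=\{\ell\in L_i:v\text{ is }\ell\text{-accessible}\}$; the recalled lemma of \cite{GioanP12} guarantees $A_i\neq\varnothing$.

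The first key step is to produce, in each $T_i$, a \emph{deep} leaf $\ell_i$ whose $G$-neighborhood lies entirely in $L_i$. I would argue $A_i\subsetneq L_i$: otherwise every leaf of $L_i$ reaches $v$ through $u_i$, and by the biclique structure of the split at the tree-edge $vu_i$ all of them share the same $G$-neighborhood in $V(G)\setminus L_i$, making $L_i$ a non-trivial module of $G$ and contradicting modular primality. Any $\ell_i\in L_i\setminus A_i$ then satisfies $N_G[\ell_i]\subseteq L_i$.

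I then claim $\{\ell_1,\ell_2,\ell_3\}$ is an asteroidal triple. Pairwise non-adjacency is immediate from $N_G[\ell_i]\subseteq L_i$ together with the pairwise disjointness of the $L_i$'s. To join $\ell_i$ and $\ell_j$ by a path in $G-N_G[\ell_k]$ (with $k$ the remaining index), I would concatenate three segments: an intra-$L_i$ path from $\ell_i$ to some $a_i\in A_i$ (such a path exists because $G$ is connected and $L_i$ can only communicate with the rest of $G$ through $A_i$); a symmetric intra-$L_j$ path from $a_j\in A_j$ to $\ell_j$; and a bridge from $a_i$ to $a_j$ realised by lifting a path in the marker graph $G_v$ from $m_i:=\rho_v(vu_i)$ to $m_j$ avoiding $m_k$, using either a $v$-accessible leaf (given by the same lemma) per intermediate marker of an internal branch or a leaf of $T$ directly attached to $v$. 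Both intra-subtree segments miss $N_G[\ell_k]\subseteq L_k$ by disjointness, and the bridge does so by construction.

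The main obstacle lies in justifying the bridge, i.e.\ showing that $m_i$ and $m_j$ remain in the same connected component of $G_v-m_k$. In the canonical split decomposition $G_v$ is a clique, a star, or a prime graph; the clique case is immediate ($m_im_j\in E(G_v)$), and the prime case follows because split-decomposition prime graphs on at least $4$ vertices are $2$-connected (any cut vertex would produce a non-trivial split). The delicate subcase is when $G_v$ is a star with $m_k$ as its centre. Here the canonical merging rules --- which rule out certain ``center-leaf'' configurations between adjacent labels --- either force this configuration not to arise at all in the canonical tree, or guarantee an additional marker at $v$ (a leaf of $T$ attached to $v$, or another internal branch) that provides an alternative routing outside $L_k$. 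I expect this structural case analysis to be the technical heart of the proof.
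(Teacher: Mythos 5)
Most of your argument is sound and is a genuinely different route from the paper's (which extracts the minimal obstructions $A_1,A_{2,2},A_3,A_{2,3}$ of \cref{fig:3asteroid} rather than building an asteroidal triple from accessibility paths): the choice of a node $v$ with three internal neighbours, the fact that $A_i\subsetneq L_i$ by modular primality, the deep leaves $\ell_i$ with $N_G[\ell_i]\subseteq L_i$, and the lifting of a path of $G_v-m_k$ are all correct, as is the $2$-connectivity of split-prime labels. The genuine gap is exactly the case you defer to at the end, and both of the resolutions you conjecture for it are false. The configuration where $G_v$ is a star whose centre marker $m_k$ points to one of the three internal branches does arise in canonical decompositions of prime graphs, and in that configuration \emph{no} routing outside $L_k$ can exist, no matter what further markers $v$ carries: every marker of $G_v$ other than $m_k$ is a star leaf adjacent only to $m_k$, so every edge of $G$ leaving $L_i$ ends in $A_k\subseteq L_k$, i.e.\ every $L_i$--$L_j$ path passes through $L_k$. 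A concrete example: let $G$ be the $4$-cycle $x_2z_1y_2z_3$ with pendant vertices $x_1,y_1,z_2$ attached to $x_2,y_2,z_3$. This graph is prime, and its canonical (reduced) split decomposition has a central star node whose centre points to the branch containing $\{z_1,z_2,z_3\}$ and whose two star leaves point to the branches $\{x_1,x_2\}$ and $\{y_1,y_2\}$; all tree edges are centre--centre or leaf--leaf, so nothing merges. Here $\{x_1,y_1,z_2\}$ is indeed an asteroidal triple, but the $x_1$--$y_1$ path certifying it must use $z_1\in A_k$; your bridge, and your proposed fallback via an extra marker at $v$, cannot produce it.

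To close the gap along your lines you must, in the star-centre case, route the $\ell_i$--$\ell_j$ path \emph{through} $L_k$, using some $z\in A_k\setminus N_G[\ell_k]$ (then $a_i\,z\,a_j$ is a valid bridge since every vertex of $A_i\cup A_j$ is adjacent to every vertex of $A_k$). The existence of such a pair $(\ell_k,z)$ needs a further primality argument: if every vertex of $L_k\setminus A_k$ were complete to $A_k$, then either $|A_k|\geq2$ and $A_k$ is a nontrivial module of $G$, or $A_k=\{z\}$ and $L_k\setminus\{z\}$ is a module, which is nontrivial because $|L_k|=2$ would force $A_k=L_k$ (the node $u_k$ is then a clique or a star with centre towards $v$) and hence a nontrivial module again. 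Alternatively, one can do what the paper does: when the centre of the star at $v$ points into a branch, use canonicity to argue that this branch contains a node that is not a star pointing back towards $v$ (otherwise the two stars would merge), and extract one of the finite obstructions $A_3$ or $A_{2,3}$. Either way, this case needs an actual argument, not the dichotomy you sketch, which the example above refutes.
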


\begin{proof}

Assume one of the internal nodes of the distance-hereditary decomposition is adjacent to at least $3$ other internal nodes of the decomposition.

We root our decomposition at this node. 
In each subtree, there must be an internal node labelled by a star whose center is not on the root's side or a graph that does not admit a split.
Indeed, if no such node is found in some subtree, then this subtree is a cograph, which contradicts primality.
In particular, we can safely replace a subtree by a node labelled by such a star because our graph must contain an induced subgraph described by such a decomposition tree. This is because the set of vertices that have the adjacency of the subtree's interface vertex in the root's label must have a splitter (otherwise it would be a nontrivial module).
However, we might need to extract more structure from a subtree in some cases.

We first consider the case when our root is labelled by a triangle. In this case, we find $A_{2,2}$ as an induced subgraph.

Then we consider the case when our root is labelled by a star and three of its branches lead to interface vertices. In this case, we find $A_1$ as an induced subgraph.

When the root is labelled by a star whose center is an interface vertex, the subtree of the decomposition corresponding to this interface vertex must contain more structure. Indeed, if it contains only a star the canonicity of the decomposition is contradicted.
In all cases, in the graph induced by this subtree of the decomposition, the interface vertex is adjacent to at least two vertices (otherwise the only vertex would be a cutvertex and we can collapse the node where it is introduced to the root) and these two vertices have a splitter (otherwise the graph is not prime). This means that we can extract an induced subgraph that is represented by two nodes: either a star with its center to the root and an attached star with its center away from the root, or a triangle and an attached star with its center away from the root. In these cases, we find $A_{2,3}$ or $A_3$ as an induced subgraph.

Finally, if the root is labelled by a graph that does not admit a split, in particular this graph is $2$-connected. This is sufficient to obtain an asteroidal triple: from each subtree of the decomposition, we obtain a pendant edge incident to this graph, any triple of the pendant vertices forms an asteroidal triple.

In all cases, we found an asteroidal triple.
We conclude that if the decomposition tree is not a caterpillar, the represented graph is not AT-free.
\end{proof}

\begin{corollary}
    We deduce the forbidden induced subgraphs characterising DH $\cap$ AT-free: Holes, Domino, House, Gem, $A_1$, $A_{2,2}$, $A_3$, $A_{2,3}$ (see Figure~\ref{fig:3asteroid}).
\end{corollary}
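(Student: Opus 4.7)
The plan is to combine \cref{lem:ATfree-HD-dec} with the classical Bandelt--Mulder forbidden induced subgraph characterisation of distance-hereditary graphs, which states that $G \in$ DH if and only if $G$ excludes holes, domino, house, and gem as induced subgraphs.

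For necessity, each listed graph must lie outside DH $\cap$ AT-free: holes, domino, house, and gem are the classical obstructions to DH, while $A_1$, $A_{2,2}$, $A_3$, $A_{2,3}$ each exhibit an asteroidal triple by direct inspection of \cref{fig:3asteroid}. Since DH $\cap$ AT-free is hereditary, excluding all listed graphs as induced subgraphs is necessary.

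For sufficiency, suppose $G$ avoids all listed induced subgraphs. Then $G$ is distance-hereditary by Bandelt--Mulder, and I would argue $G$ is AT-free via a minimal counterexample. Let $H$ be an induced subgraph of $G$ of minimum vertex count containing an asteroidal triple; then $H$ is distance-hereditary, and I would argue $H$ is prime with respect to modular decomposition. Indeed, if $H$ had a nontrivial strong module $M$, then an asteroidal triple lying entirely inside $M$ would descend to a smaller triple in $H[M]$ (since the homogeneous external adjacency of $M$ forces every path avoiding the closed neighbourhood of a vertex of $M$ to stay inside $M$), while an asteroidal triple crossing $M$ could be witnessed after collapsing $M$ to a single well-chosen representative, yielding a smaller witness in either case and contradicting the minimality of $H$.

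Having reduced to a prime distance-hereditary graph $H$ that is not AT-free, the contrapositive of \cref{lem:ATfree-HD-dec} says the canonical split decomposition tree of $H$ is not a caterpillar, and the case analysis carried out in the proof of that lemma directly extracts an induced subgraph of $H$ isomorphic to one of $A_1$, $A_{2,2}$, $A_3$, $A_{2,3}$, contradicting our assumption on $G$. The main obstacle is the reduction step to the prime case, where one must carefully verify the case analysis on how an asteroidal triple meets a strong module; the remainder is a direct reading of \cref{lem:ATfree-HD-dec}.
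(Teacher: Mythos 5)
There is a genuine gap in your sufficiency direction: you invoke what you call the ``contrapositive'' of \cref{lem:ATfree-HD-dec}, but what you actually use is its \emph{converse}. The lemma states that a prime AT-free graph has a caterpillar split decomposition tree; its contrapositive is ``non-caterpillar decomposition $\Rightarrow$ not AT-free'', whereas your argument needs ``prime, distance-hereditary, not AT-free $\Rightarrow$ non-caterpillar decomposition''. That implication is not a formal consequence of the lemma (nor of its proof, whose case analysis only extracts $A_1$, $A_{2,2}$, $A_3$, $A_{2,3}$ under the hypothesis that the decomposition tree is \emph{not} a caterpillar). As written, your argument says nothing about a prime distance-hereditary graph whose split decomposition tree \emph{is} a caterpillar but which nevertheless contains an asteroidal triple, so the extraction step never gets triggered in that case and the contradiction does not materialise.

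To close the gap you must rule out exactly that case, i.e.\ prove that a prime distance-hereditary graph with a caterpillar split decomposition tree is AT-free. This is true, and the paper effectively supplies it only later, by showing inductively along the caterpillar that such graphs are permutation graphs (hence AT-free); alternatively, you could argue directly from the position of an asteroidal triple in the split decomposition that one of the four listed graphs appears, without routing through the caterpillar condition at all. Your reduction to the prime case via a minimal non-AT-free induced subgraph is fine in spirit (it is the substitution-closure of AT-freeness, which the paper also uses, and your module analysis can be made precise since an asteroidal triple meets a module in either all three or at most one vertex), and the necessity direction is unproblematic; the flaw is confined to the final deduction, but it is the crux of the statement.
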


Since the classes of DH and AT-free graphs both already have a complete characterisation by forbidden induced subgraph, we could also compute the obstructions from these lists.

\begin{lemma}
The graphs that are AT-free and distance-hereditary are exactly the graphs that are permutation graphs and distance-hereditary.
\end{lemma}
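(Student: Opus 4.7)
The backward direction is immediate from the classical fact (noted in the introduction) that every permutation graph is AT-free, so any distance-hereditary permutation graph is in particular AT-free.

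For the forward direction, I would proceed by induction on $|V(G)|$. Graphs on at most three vertices are trivially permutation graphs. If $G$ admits a nontrivial modular partition $\mathcal{M}=(M_1,\dots,M_\ell)$, then both $G/\mathcal{M}$ and each $G[M_i]$ are induced subgraphs of $G$; since both distance-hereditariness and AT-freeness are hereditary, each of these pieces is again distance-hereditary and AT-free, and hence by the induction hypothesis a permutation graph. The class of permutation graphs is closed under modular substitution (two realisers can be concatenated inside a common interval), so $G$ itself is a permutation graph.

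It remains to handle the case where $G$ is prime with respect to modular decomposition. By \cref{lem:ATfree-HD-dec}, the canonical split decomposition $(T,\mathcal{F})$ of $G$ is a caterpillar, and since $G$ is distance-hereditary, by \cref{thm:split-DH} every internal node of $T$ is labelled by a clique or a star. I would then build a realiser $(\sigma,\tau)$ by scanning the spine of $T$ from one end to the other and allocating to each spine node a dedicated consecutive block in both linear orders. For a clique node, the attached leaves are placed as a bundle of pairwise-crossing segments (same order in $\sigma$, reversed in $\tau$). For a star node, the center's segment is placed so that it traverses the whole block (and so crosses every other segment assigned to this block), while the attached leaves, which are pairwise non-adjacent, receive parallel segments. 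The orientation of each block and the side on which the interface marker toward the next spine node is placed are chosen so that a segment ``carried through'' from the previous block continues to cross segments of the current block exactly when accessibility through the current internal node so demands.

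The main obstacle is verifying that the induced adjacency of the diagram matches the accessibility relation of $(T,\mathcal{F})$, i.e.\ that two tree-leaves $x,y$ attached to different spine nodes produce crossing segments precisely when every internal node on the $x$--$y$ path in $T$ is traversable. This reduces to a local case analysis for each consecutive pair of spine nodes, distinguishing the four combinations clique/clique, clique/star, star/clique, star/star and, in the star cases, whether the center is the interface marker toward the neighbour. Because the caterpillar is strictly linear and each block interacts only with its two neighbours, the verification is local and the induction goes through, yielding a realiser of $G$ and completing the proof.
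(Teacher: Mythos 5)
Your backward direction and the reduction to the prime case (via modular decomposition and closure of permutation graphs under substitution) are fine, and invoking \cref{lem:ATfree-HD-dec} and \cref{thm:split-DH} to get a caterpillar split decomposition labelled by cliques and stars is the right starting point. The gap is in the verification of your block-by-block layout. The claim that correctness ``reduces to a local case analysis for each consecutive pair of spine nodes'' because ``each block interacts only with its two neighbours'' is false: adjacency in the accessibility graph can join leaves attached to spine nodes at arbitrary distance, whenever every intermediate node is traversable (a chain of stars whose centres point along the spine, or any clique node on the way). Already the bull --- a prime, distance-hereditary, AT-free graph --- has split decomposition $X(\text{star})-Y(\text{triangle})-Z(\text{star})$ with leaves $a,b$ at $X$, $e$ at $Y$, $c,d$ at $Z$, and the edge $bc$ joins leaves of the two non-adjacent spine nodes. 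So the segments you ``carry through'' must simultaneously realise correct crossings with blocks far beyond their own, and that is exactly the part of the argument you have not carried out; asserting that ``the induction goes through'' does not discharge it.

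What is missing is an explicit invariant that makes the step genuinely local. The paper's proof processes the caterpillar one node at a time and maintains that the interface vertices (the leaves of the last processed node) are \emph{extremal} in the partial realiser; then each new spine node amounts to adding a pendant vertex, a true twin, or a false twin at an extremal vertex of the current diagram, operations under which a permutation realiser (with the new interface vertex again extremal) is trivially preserved. If you strengthen your construction to state and maintain such an extremality property for the carried-through (interface) segment, your case analysis over clique/star labels and centre positions can be made to work; without it, the long-range crossings are unaccounted for and the proof is incomplete.
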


\begin{proof}
Permutation graphs are AT-free so there is only one inclusion to show.

We first use the fact that both AT-free graphs and permutation graphs are stable by replacing a vertex with a graph of the class.

Consequently, it suffices to show the equivalence for prime graphs.
To do that, we proceed by induction on their decomposition tree whose structure is given in \cref{lem:ATfree-HD-dec}.

We prove by induction that a prime AT-free distance-hereditary graph has a realiser as a permutation graph where the leaves of the last node of the decomposition tree are extremal vertices.

As a base case, we have the graph with a single vertex, which has an obvious realiser as a permutation graph where it is an extremal vertex.

For the inductive case, we consider the three possible labels of the last node of the decomposition tree of our graph $G$ (the node cannot have degree more than 3 for a prime graph). Let $G'$ denote the graph represented by the decomposition minus this last node.

If the label is a star with a leaf as its center, we are exactly adding a pendant vertex to an extremal vertex of $G'$. We can take the realiser of $G'$ as a permutation graph and add a vertex that intersects exactly the corresponding extremal vertex to produce of a realiser of $G$ as a permutation graph.

If the label is a star with its center attached to the prefix of the decomposition tree, we are adding a false twin to an extremal vertex of $G$. If the label is a triangle, we are adding a true twin to an extremal vertex of $G$. In these two cases, we cannot guarantee that the two leaves are extremal vertices. However, since the vertices are twins, we can always assume that the leaf to which the rest of the decomposition tree will be attached is the one which is extremal in the realiser of $G$ as a permutation graph.

We covered all possible cases of labels, hence, by induction, the property always holds.
\end{proof}

\begin{lemma}
Graphs that are AT-free and distance-hereditary have twin-width at most 1. Furthermore, if such graphs are prime, we can immediately deduce a $1$-contraction sequence from the split decomposition.
\end{lemma}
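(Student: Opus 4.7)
The plan is to first reduce to the prime case via \cref{coro:seq-modular}: since both distance-hereditary and AT-free graphs are hereditary, every quotient graph appearing in the modular decomposition of $G$ is again distance-hereditary and AT-free, so it suffices to handle a prime distance-hereditary AT-free graph. For such a prime $G$, \cref{lem:ATfree-HD-dec} gives a canonical split decomposition $(T,\mathcal{F})$ whose tree is a caterpillar with internal nodes labelled by cliques and stars, and the task is to turn this decomposition into an explicit $1$-contraction sequence.

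I would then induct on the number $s$ of spine nodes of $T$. The base case $s=1$ gives $G$ a clique or a star, both cographs of twin-width~$0$. For the inductive step, pick an end spine node $v_s$ and let $L_s$ be the set of leaves of $T$ attached to $v_s$, that is, the vertices of $G$ introduced at $v_s$. There are three cases depending on the label $G_{v_s}$: (i) a clique, so $L_s$ is a set of true twins in $G$; (ii) a star whose centre is the marker of the tree-edge to $v_{s-1}$, so $L_s$ is a set of false twins in $G$; (iii) a star whose centre is a pendant marker with associated leaf $\ell_{c^*}\in L_s$, so $L_s\setminus\{\ell_{c^*}\}$ consists of pendant leaves of $G$ whose unique neighbour is $\ell_{c^*}$, hence false twins. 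In every case the twins can be contracted pairwise without creating red edges, reducing to a graph whose split decomposition has one fewer spine node, to which the inductive hypothesis applies.

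The main obstacle will be case (iii): after collapsing $L_s\setminus\{\ell_{c^*}\}$ into a single pendant $u$ of $\ell_{c^*}$, contracting $u$ with $\ell_{c^*}$ would create red edges to all other neighbours of $\ell_{c^*}$. I would resolve this by first applying induction to $G-(L_s\setminus\{\ell_{c^*}\})$, whose split decomposition has one fewer spine node, and by merging $u$ into the shrinking trigraph only at the very end: once the sequence for the smaller graph has been executed, $\ell_{c^*}$ is part of a trigraph on at most three vertices with at most one red edge, and contracting $u$ at that stage adds at most one further red edge. Because all intermediate contractions take place either inside $L_s\setminus\{\ell_{c^*}\}$ (creating no red edges), inside the smaller graph (which is a $1$-contraction sequence by induction), or in the final three-vertex trigraph, the resulting schedule is a valid $1$-contraction sequence read off directly from the split decomposition of $G$, which yields the second part of the statement.
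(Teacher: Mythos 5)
Your reduction to the prime case and your use of the caterpillar split decomposition (\cref{lem:ATfree-HD-dec}) are in the right spirit, and note that your cases (i) and (ii) are in fact vacuous once $G$ is prime: two leaves hanging from a clique node, or from a star node whose centre points along the spine, would be true (resp.\ false) twins of $G$, contradicting primality. So the whole argument rests on case (iii), and there your fix has a genuine gap. You cannot ``keep $u$ aside until the very end'': $u$ is a vertex of $G$ throughout, and the moment $\ell_{c^*}$ is contracted with any other vertex (every such vertex is a non-neighbour of $u$), the part containing $\ell_{c^*}$ acquires a red edge to $u$ and keeps one until $u$ is merged in. Meanwhile, the $1$-sequence of the smaller graph that your induction provides carries its own red edges --- if that graph is prime, \cref{lem:seq-prime} even forces a red edge at \emph{every} intermediate step --- and nothing in your hypothesis prevents that red edge from being incident to the part containing $\ell_{c^*}$ while that part is nontrivial; at such a step this part has red degree $2$ in $G$, so the combined schedule is not a $1$-sequence. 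The closing count (``at most one further red edge'' when merging $u$ into a trigraph on at most three vertices) has the same flaw: adding a red edge to a part that already has one gives red degree $2$. A secondary issue is that $G-(L_s\setminus\{\ell_{c^*}\})$ need not be prime, so the induction hypothesis as you set it up (prime graphs with caterpillar decompositions) does not directly apply to it.

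What is missing is a strengthened invariant controlling \emph{where} the red edge sits, and this is exactly how the paper argues. It inducts on trigraphs rather than graphs: assume the pendant edge at one end of the caterpillar is already red (which is only harder than the all-black statement), and show that such a prime distance-hereditary trigraph admits a $1$-sequence by a case analysis on the split node adjacent to the red pendant edge. With $a$ the pendant vertex, $b$ its neighbour and $c$ the leaf at the next node, each case (clique label; star centred at $c$; star centred towards the rest of the spine; the remaining star case contradicts primality) performs a single contraction ($b$ with $c$, or $a$ with $b$) that destroys the old red edge, creates exactly one new red edge which is again pendant, and strictly shortens the decomposition, so the induction goes through. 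In your framework, you would need the induction hypothesis on the smaller graph to supply a $1$-sequence in which the attachment vertex $\ell_{c^*}$ is never contained in a nontrivial part incident to a red edge until the very last steps --- and proving that statement is essentially the paper's pendant-red-edge induction, so your proposal as written does not close the argument.
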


\begin{proof}

We proceed by induction on the split decomposition tree of prime AT-free distance-hereditary graphs.

The graphs consisting of at most 4 vertices have twin-width at most 1.

Consider now a nontrivial prime AT-free distance-hereditary graph $G$.
The leaf nodes of the decomposition tree must be `pendant vertex' nodes because the graph is prime.

We assume that one of the pendant edges at one end of the decomposition tree is a red edge. Observe that making an existing edge red can only increase the twin-width of the trigraph. Hence, proving the result for such trigraphs will imply the result for graphs without a red edge.

We now consider the different cases of the decomposition node $u$ adjacent to the node corresponding to the pendant red edge. We denote by $a$ the pendant vertex, by $b$ its neighbour and by $c$ the leaf attached to $u$. $u$ has degree 3 since $G$ is prime and distance-hereditary.

\begin{itemize}
\item If $u$ is labelled by $C_3$, we identify $b$ and $c$. Indeed, the difference of their neighbourhoods is exactly $a$.
\item If $u$ is labelled by $P_3$ with $c$ as the central vertex, we identify $a$ and $b$. Indeed, $b$ has only one extra neighbour so this identification produces only one red edge and removes the previous pendant red edge.
\item If $u$ is labelled by $P_3$ with the rest of the decomposition tree attached to its central vertex, we identify $b$ and $c$. Indeed the difference of their neighbourhood is exactly $a$.
\item If $u$ is labelled by $P_3$ with the pendant red edge attached to its center, $a$ and $c$ are twins, this contradicts $G$ being prime.
\end{itemize}

In possible cases, we reduced the number of nodes of the decomposition due to the identification. After possibly identifying twins, we obtain a graph that is prime, distance-hereditary, with a pendant red edge, and with strictly less nodes in the decomposition. We apply the induction hypothesis to this graph to conclude that there exists a $1$-contraction sequence.
\end{proof}

The following result is a simple extension of the construction of $2$-contraction sequences for trees.

\begin{lemma}
    DH graphs have twin-width at most $2$. We can compute a $2$-contraction sequence for them in linear time.
\end{lemma}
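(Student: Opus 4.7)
The plan is to adapt the standard $2$-contraction argument for trees to the richer structure of the split decomposition tree $T$ of $G$; the classical tree argument processes a rooted tree bottom-up, contracting one ``leg'' at a time and pairing off the at most two resulting red endpoints at the root as new legs are added. By Theorem~\ref{thm:split-DH}, every internal node of $T$ is labelled by a clique or a star, and $T$ can be computed in linear time \cite{SplitAlgo}. I would root $T$ at an arbitrary leaf and process internal nodes in post-order.

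At each internal node $u$, I would first handle the leaves of $T$ that are siblings attached to $u$. A direct inspection of the accessibility definition shows that when $u$ is a clique node, such siblings are pairwise true twins in $G$; and when $u$ is a star node, they form (after separating the possible ``center leaf'' from the ``ray leaves'') groups of true or false twins. Each such twin group can be contracted to a single representative without creating any red edge. Then, by the induction hypothesis, every subtree-child of $u$ has already been reduced to a single vertex carrying at most one pending red edge pointed back toward its own interface. I would absorb these representatives one by one into the local representative at $u$, applying the same ``spider leg'' trick as in the tree case: just before a new subtree would push the red degree above $2$ at $u$'s representative, I pair off the two already-accumulated red endpoints by contracting them, clearing space for the next subtree.

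The main obstacle is verifying the invariant ``after processing node $u$, at most one red edge remains pendant at $u$'s representative, and it points toward the parent of $u$'' in the subcase where $u$ is a star node whose tree-edge to the parent corresponds to a ray marker: there the center leaf of the star plays an asymmetric role, being adjacent both to the other ray-leaves of $u$ and to the parent side, so one must schedule its absorption last in order to keep red edges incident to $u$'s representative rather than letting them spread outward. Once this invariant is maintained, the red degree never exceeds $2$ throughout, giving $\tww(G) \le 2$. The contraction sequence is produced in linear time because $T$ has size linear in $|V(G)|$ and each node contributes only $O(1)$ work under suitable linked-list data structures.
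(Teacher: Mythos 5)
Your overall plan (bottom-up processing of a tree structure, free contraction of twin groups, spider-leg pairing to keep the red degree at most $2$) is in the right spirit and close to the paper's argument, which follows a twin/pendant-vertex elimination ordering of the DH graph rather than an explicit traversal of the split tree. However, the induction hypothesis you rely on --- \emph{every subtree-child of $u$ has already been reduced to a single vertex carrying at most one pending red edge} --- is false, and this is a genuine gap rather than a bookkeeping issue. The leaf set $A$ of a subtree of the split decomposition tree is one side of a split, not a module of $G$: it consists of interface vertices $A'$ (adjacent to all of the opposite interface $B'$) together with non-interface vertices (adjacent to nothing outside $A$). Whenever $A\neq A'$, contracting $A$ to a single vertex creates a red edge to \emph{every} vertex of $B'$. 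Concretely, let $G$ consist of a clique $b_1,\dots,b_m$, a vertex $a_1$ adjacent to all $b_i$, and a pendant vertex $a_2$ on $a_1$; this graph is distance-hereditary, and its split tree is a star node with leaves $a_1$ (at the center) and $a_2$, attached by a ray to a clique node with leaves $b_1,\dots,b_m$. Reducing the star-node subtree to a single vertex means contracting $a_1$ with $a_2$, which produces red degree $m$. Note that this is exactly the configuration you flag as the ``main obstacle'' (parent edge at a ray marker), and no scheduling of the center leaf helps: the only way out is to \emph{not} reduce that subtree to a single vertex, i.e.\ to abandon the stated invariant.

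The related slip is the phrase ``absorb these representatives one by one into the local representative at $u$'': contracting a red-pendant blob into its attachment vertex is precisely the forbidden move, since the attachment vertex may have many other neighbours that are not neighbours of the blob. What makes the tree argument, and the paper's proof, go through is that one never performs this contraction. Instead, one maintains that the black part of the trigraph is still a DH graph, that every red edge is pendant (its blob endpoint has degree $1$), and that at most one vertex carries two pendant red blobs at a time; blobs are only ever contracted with a sibling blob on the same vertex, or with a vertex that has itself become pendant in the black part (in which case the black pendant edge becomes the new pendant red edge). This invariant, driven by the twin/pendant-vertex elimination sequence obtained from the split decomposition in linear time, is the correct replacement for your subtree invariant; with it, your linear-time accounting is fine.
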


\begin{proof}
    We follow an elimination sequence of twins and pendant vertices of the DH graph to produce a contraction sequence such that, for each trigraph, its underlying graph is still DH, at most one vertex may be incident to two red edges and all red edges are pendant.

    If a vertex is incident to two pendant red edges, we may simply contract the corresponding two pendant vertices. We may now assume that at most one red edge is incident to each vertex of the trigraph and that all red edges are pendant.
    
    We now consider the DH graph $H$ induced by the black edges of the trigraph, $H$ has a pair of twins or a pendant vertex. We may contract the twins which produces a vertex of red degree at most two, and the red edges remain pendant. In case of a pendant vertex $v$ in $H$, we contract its incident red edge if it exists, and either way, we now consider the black edge incident to $v$ to become red.
    
    We will fully eliminate the black part of the trigraph, leaving a single red edge which we can then contract. Indeed, the black part of the trigraph sequence corresponds exactly to the sequence of DH graphs corresponding to the elimination of twins and pendant vertices.

    Regarding the implementation of this contraction sequence, we can compute the split decomposition in linear time. It is well-known that we can then obtain an elimination sequence in linear time.
\end{proof}

\begin{theorem}
    If $G$ is a distance-hereditary graph, the value of its twin-width is:
    \begin{itemize}
        \item 0 if it is a cograph
        \item 1 if it is AT-free, or, equivalently, if it is a permutation graph
        \item 2 otherwise
    \end{itemize}
\end{theorem}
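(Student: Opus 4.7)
The plan is to assemble the three bullet points as direct consequences of results already established in this section, combined with \cref{thm:permutation} from Section~\ref{sec:permutation}. The case distinction is exhaustive because a distance-hereditary graph is either AT-free or not, and an AT-free distance-hereditary graph is either a cograph or not, matching the three regimes. The equivalence \emph{AT-free $\Leftrightarrow$ permutation graph} within the class of distance-hereditary graphs, proved earlier in this section, lets us freely use either formulation for the middle bullet.

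For the $\tww=0$ bullet, I would simply recall the classical fact already mentioned in the introduction: a contraction introduces no red edge exactly when it identifies a pair of twins, so $\tww(G)=0$ if and only if $G$ can be reduced to a single vertex by successive twin contractions, which is one of the standard recursive definitions of cographs. For the $\tww=1$ bullet, assume $G$ is distance-hereditary and AT-free; the lemma constructed via the split decomposition earlier in this section produces a $1$-contraction sequence, so $\tww(G)\le 1$, and if $G$ is not a cograph the previous bullet forces $\tww(G)\ge 1$, giving $\tww(G)=1$.

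For the $\tww=2$ bullet, suppose $G$ is distance-hereditary but not AT-free. The lemma showing that every distance-hereditary graph admits a $2$-contraction sequence provides the upper bound $\tww(G)\le 2$. For the matching lower bound I would invoke \cref{thm:permutation}: every graph of twin-width at most $1$ is a permutation graph, and permutation graphs are AT-free, so a non-AT-free graph cannot have twin-width at most $1$. Combining the two bounds yields $\tww(G)=2$.

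There is no genuine obstacle in this proof; it is a bookkeeping synthesis of the earlier lemmas. The only care required is to confirm that the upper and lower bounds match in each regime and that the three cases cover all distance-hereditary graphs, which follows from the inclusions $\text{cographs}\subseteq\text{DH}\cap\text{AT-free}\subseteq\text{DH}$ used implicitly throughout the case split.
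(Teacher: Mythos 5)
Your proposal is correct and follows exactly the argument the paper intends: it assembles the preceding lemmas (AT-free DH graphs have a $1$-contraction sequence, DH graphs have twin-width at most $2$, AT-free DH equals permutation DH) with the classical fact that twin-width $0$ characterises cographs, and uses \cref{thm:permutation} for the lower bound of $2$ on non-AT-free graphs. The case handling (reading the bullets hierarchically so that cographs are excluded from the middle case) is also the right reading of the statement.
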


We deduce the following from known algorithms.

\begin{corollary}
    If $G$ is a distance-hereditary graph, we can decide the twin-width of $G$ in linear time.
\end{corollary}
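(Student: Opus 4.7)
The plan is to reduce the decision problem to a small number of structural checks on the split decomposition, each of which can be performed in linear time using already-known algorithms together with the characterisations proved above.

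First, I would recognise whether $G$ is distance-hereditary at all (though the statement assumes it, a robust algorithm should verify the input). Both recognition of distance-hereditary graphs and computation of the canonical split decomposition of an arbitrary graph can be done in time $O(n+m)$, for instance by the algorithm of Dahlhaus, or via the graph-labelled tree formalism recalled just before Theorem~\ref{thm:split-DH}. By Theorem~\ref{thm:split-DH}, distance-heredity is witnessed by a split decomposition whose labels are only cliques and stars.

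Given the split decomposition, the previous theorem tells us exactly which of the three values $\{0,1,2\}$ to output: the twin-width is $0$ iff $G$ is a cograph, $1$ iff $G$ is a (distance-hereditary) permutation graph, and $2$ otherwise. By Theorem~\ref{thm:split-cograph}, checking whether $G$ is a cograph amounts to verifying that in the split decomposition all stars point to a common edge of the tree; this can be tested in linear time by a single traversal of the decomposition tree (orient each star from its centre towards the rest of the tree and check that all these orientations meet at one edge). If this test succeeds, we output $0$. Otherwise, to distinguish $1$ from $2$, we check whether the decomposition tree is a caterpillar (this is the characterisation from Lemma~\ref{lem:ATfree-HD-dec}, combined with the equivalence between AT-free and permutation inside the distance-hereditary class); this is again a single traversal, returning $1$ if the tree is a caterpillar and $2$ otherwise.

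The correctness of each branch is immediate from the previous theorem and from Theorems~\ref{thm:split-DH} and~\ref{thm:split-cograph} together with Lemma~\ref{lem:ATfree-HD-dec}. All three sub-tasks (computing the canonical split decomposition, checking the cograph star-orientation condition, and testing whether the tree is a caterpillar) run in linear time, and they are executed sequentially, so the total running time is $O(n+m)$. There is no real obstacle here: the only subtlety worth a sentence of justification is that the characterisations we rely on concern the \emph{canonical} split decomposition, so one must be careful to compute that specific decomposition (as guaranteed by the linear-time algorithm cited in the discussion preceding Theorem~\ref{thm:split-DH}), rather than an arbitrary one, in order for the structural tests (all stars pointing to a common edge, caterpillar shape) to be meaningful.
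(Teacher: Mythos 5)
Your overall plan (compute the canonical split decomposition in linear time, then answer $0/1/2$ by purely structural tests on the graph-labelled tree) is reasonable and different in flavour from the paper, which simply invokes known linear-time recognition algorithms for cographs and for permutation graphs together with the $0/1/2$ characterisation. The cograph test via Theorem~\ref{thm:split-cograph} is fine. The problem is the test you use to separate twin-width $1$ from $2$: you check whether the decomposition tree is a caterpillar, citing Lemma~\ref{lem:ATfree-HD-dec}. That lemma only concerns graphs that are \emph{prime with respect to modular decomposition}, and it only gives one implication (prime AT-free $\Rightarrow$ caterpillar). For general distance-hereditary graphs the caterpillar condition is \emph{not} necessary for twin-width $1$. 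Concretely, take $P_5$ and substitute into its middle vertex a cograph whose cotree is a complete binary tree of depth $3$: the resulting graph is distance-hereditary and a permutation graph (both classes are closed under substitution), and it is not a cograph, so by the preceding theorem its twin-width is $1$; yet by Theorem~\ref{thm:split-cograph} the part of the canonical split decomposition corresponding to the substituted module reproduces the bushy cotree, so the decomposition tree is not a caterpillar and your algorithm would wrongly output $2$.

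The correct split-decomposition criterion is the one in the paper's last theorem: twin-width at most $1$ holds iff, in each connected component, there is a \emph{path} of the decomposition tree such that every star node off the path points towards it (off-path star nodes are allowed, they just must be oriented towards the path) --- a condition strictly weaker than the tree being a caterpillar. To repair your proof you should either test this orientation condition directly (it is linear-time checkable by a traversal analogous to your cograph test, replacing ``common edge'' by ``common path''), or restrict the caterpillar-type reasoning to the quotient graphs of the prime nodes of the modular decomposition, or simply fall back on the paper's route and run a linear-time permutation-graph recognition algorithm to decide between $1$ and $2$. As written, the $1$-versus-$2$ branch of your algorithm is incorrect.
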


Note that we do not strictly require $G$ to be a DH graph and could simply answer either that $G$ is not DH or give the value of the twin-width.

We may also deduce the value of the twin-width directly from the structure of the split decomposition by combining Theorem~\ref{thm:split-cograph}, and Lemma~\ref{lem:ATfree-HD-dec}. Recall that only stars and cliques appear in the graph-labelled tree of a distance-hereditary graph.

\begin{theorem}
    If $G$ is a distance-hereditary graph, the value of its twin-width is:
    \begin{itemize}
        \item 0 if, for each of its connected components, there is an edge of the corresponding split decomposition towards which all stars labelling nodes of the decomposition point.
        \item 1 if, for each of its connected components, there is a path of the split decomposition such that all stars labelling nodes of the decomposition point to the path\footnote{There is no constraint for stars on the path}.
        \item 2 otherwise
    \end{itemize}
\end{theorem}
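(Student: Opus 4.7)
The plan is to combine the preceding theorem, which pins down $\tww(G)$ for a distance-hereditary $G$ as $0$, $1$, or $2$ according to whether $G$ is a cograph, AT-free non-cograph, or neither, with the split-decomposition characterisations of cographs (Theorem~\ref{thm:split-cograph}) and of prime AT-free DH graphs (\cref{lem:ATfree-HD-dec}). First I would reduce to connected $G$: each connected component is a module and the quotient by components is an independent set of twin-width~$0$, so by \cref{lem:seq-modular}, $\tww(G)$ equals the maximum of $\tww$ over components. Thus it suffices to verify the three characterisations on a single connected DH graph, which is how they are stated in the theorem.

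The case $\tww(G)=0$ is then an immediate reformulation of Theorem~\ref{thm:split-cograph}: $G$ is a cograph iff its canonical split decomposition has all stars pointing to a common edge. The crux is the case $\tww(G)\le 1$, which by the preceding theorem amounts to proving that a connected DH graph $G$ is AT-free iff its canonical split decomposition admits a path $P$ such that every star at a node off $P$ has its centre marker on the tree-edge pointing towards $P$.

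For the direction (path-property $\Rightarrow$ AT-free), I would construct a $1$-contraction sequence directly from the decomposition. Every maximal subtree attached to the spine $P$ at a spine node is, together with its attachment edge, the split decomposition of a graph whose every star points to that attachment; by Theorem~\ref{thm:split-cograph} this is a cograph, and so its vertices can be contracted to a single vertex by a sequence of twin contractions that creates no red edge. After performing these contractions on every branch off $P$, the remaining DH graph has a caterpillar split decomposition, and the inductive construction of the earlier lemma on ``$\text{AT-free} \cap \text{DH}$ graphs have twin-width at most $1$'' applies verbatim (its induction uses only the caterpillar shape, not AT-freeness a priori). Concatenating gives a $1$-sequence for $G$, so $\tww(G)\le 1$ and $G$ is AT-free. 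For the reverse direction (AT-free $\Rightarrow$ path-property), I would use the modular decomposition: at each modular-prime node the quotient is prime AT-free DH, hence by \cref{lem:ATfree-HD-dec} has a caterpillar split decomposition whose spine I keep; at each series or parallel node the quotient is an independent set or clique and its split decomposition is a single star or clique node whose stars (if any) point towards the parent attachment. Assembling these pieces at their attachment edges, the caterpillar spines from prime quotients concatenate into a single path $P$ in the split decomposition of $G$, while degenerate quotients contribute branches whose stars all point towards $P$ by Theorem~\ref{thm:split-cograph}.

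Finally, the case $\tww(G)=2$ follows by exclusion: the earlier lemma gives $\tww(G)\le 2$ for any DH graph, and by the first two cases, failing both the edge- and the path-property on some component forces $\tww(G)=2$. The main obstacle is making the forward direction of the $\tww\le 1$ case fully rigorous: one must check that the canonical split decomposition of $G$ decomposes consistently into the per-modular-node pieces described above, and that the caterpillar spines at nested prime modular nodes align through the attachment edges to yield one path $P$ rather than a branching skeleton. This uses the canonicity of the split decomposition of a DH graph together with its compatibility with modular substitution, but is the one place where the proof demands care.
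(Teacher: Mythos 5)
Your proposal follows essentially the same route as the paper, which obtains this statement exactly by combining the preceding theorem (twin-width $0$/$1$/$2$ according to cograph / AT-free / otherwise) with Theorem~\ref{thm:split-cograph} and Lemma~\ref{lem:ATfree-HD-dec}, plus the componentwise reduction via Lemma~\ref{lem:seq-modular}; your forward direction also mirrors the paper's construction of a $1$-sequence from a caterpillar decomposition. The assembly step you single out as delicate is in fact benign: in a connected distance-hereditary graph every proper module induces a cograph (a vertex adjacent to the whole module together with an induced $P_4$ inside it would form a gem), so the modular decomposition has at most one prime node, namely its root, and there is a single caterpillar spine to which the cograph branches attach with their stars pointing towards their attachment edge.
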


\bibliography{references}

\begin{thebibliography}{10}

\bibitem{gridtww}
Jungho Ahn, Debsoumya Chakraborti, Kevin Hendrey, and Sang-il Oum.
\newblock Twin-width of subdivisions of multigraphs.
\newblock \href{https://arxiv.org/abs/2306.05334}{\path{arXiv:2306.05334}},
  2024.

\bibitem{AhnHKO22}
Jungho Ahn, Kevin Hendrey, Donggyu Kim, and Sang{-}il Oum.
\newblock Bounds for the twin-width of graphs.
\newblock {\em {SIAM} J. Discret. Math.}, 36(3):2352--2366, 2022.
\newblock \href {https://doi.org/10.1137/21m1452834}
  {\path{doi:10.1137/21m1452834}}.

\bibitem{BalabanGR24}
Jakub Balab{\'{a}}n, Robert Ganian, and Mathis Rocton.
\newblock Computing twin-width parameterized by the feedback edge number.
\newblock In Olaf Beyersdorff, Mamadou~Moustapha Kant{\'{e}}, Orna Kupferman,
  and Daniel Lokshtanov, editors, {\em 41st International Symposium on
  Theoretical Aspects of Computer Science, {STACS} 2024, March 12-14, 2024,
  Clermont-Ferrand, France}, volume 289 of {\em LIPIcs}, pages 7:1--7:19.
  Schloss Dagstuhl - Leibniz-Zentrum f{\"{u}}r Informatik, 2024.
\newblock URL: \url{https://doi.org/10.4230/LIPIcs.STACS.2024.7}, \href
  {https://doi.org/10.4230/LIPICS.STACS.2024.7}
  {\path{doi:10.4230/LIPICS.STACS.2024.7}}.

\bibitem{TwwVertexIntegrity}
Jakub Balab{\'{a}}n, Robert Ganian, and Mathis Rocton.
\newblock Twin-width meets feedback edges and vertex integrity.
\newblock {\em CoRR}, abs/2407.15514, 2024.
\newblock URL: \url{https://doi.org/10.48550/arXiv.2407.15514}, \href
  {https://arxiv.org/abs/2407.15514} {\path{arXiv:2407.15514}}, \href
  {https://doi.org/10.48550/ARXIV.2407.15514}
  {\path{doi:10.48550/ARXIV.2407.15514}}.

\bibitem{twwNPhard}
Pierre Berg{\'{e}}, {\'{E}}douard Bonnet, and Hugues D{\'{e}}pr{\'{e}}s.
\newblock Deciding twin-width at most 4 is np-complete.
\newblock In Mikolaj Bojanczyk, Emanuela Merelli, and David~P. Woodruff,
  editors, {\em 49th International Colloquium on Automata, Languages, and
  Programming, {ICALP} 2022, July 4-8, 2022, Paris, France}, volume 229 of {\em
  LIPIcs}, pages 18:1--18:20. Schloss Dagstuhl - Leibniz-Zentrum f{\"{u}}r
  Informatik, 2022.
\newblock URL: \url{https://doi.org/10.4230/LIPIcs.ICALP.2022.18}, \href
  {https://doi.org/10.4230/LIPICS.ICALP.2022.18}
  {\path{doi:10.4230/LIPICS.ICALP.2022.18}}.

\bibitem{BergeronCMR08}
Anne Bergeron, C{\'{e}}dric Chauve, Fabien de~Montgolfier, and Mathieu
  Raffinot.
\newblock Computing common intervals of {K} permutations, with applications to
  modular decomposition of graphs.
\newblock {\em {SIAM} J. Discret. Math.}, 22(3):1022--1039, 2008.
\newblock \href {https://doi.org/10.1137/060651331}
  {\path{doi:10.1137/060651331}}.

\bibitem{BergougnouxGGHP23}
Benjamin Bergougnoux, Jakub Gajarsk{\'{y}}, Grzegorz Guspiel, Petr
  Hlinen{\'{y}}, Filip Pokr{\'{y}}vka, and Marek Sokolowski.
\newblock Sparse graphs of twin-width 2 have bounded tree-width.
\newblock In Satoru Iwata and Naonori Kakimura, editors, {\em 34th
  International Symposium on Algorithms and Computation, {ISAAC} 2023, December
  3-6, 2023, Kyoto, Japan}, volume 283 of {\em LIPIcs}, pages 11:1--11:13.
  Schloss Dagstuhl - Leibniz-Zentrum f{\"{u}}r Informatik, 2023.
\newblock URL: \url{https://doi.org/10.4230/LIPIcs.ISAAC.2023.11}, \href
  {https://doi.org/10.4230/LIPICS.ISAAC.2023.11}
  {\path{doi:10.4230/LIPICS.ISAAC.2023.11}}.

\bibitem{OrderedGraphs}
{\'{E}}douard Bonnet, Ugo Giocanti, Patrice~Ossona de~Mendez, Pierre Simon,
  St{\'{e}}phan Thomass{\'{e}}, and Szymon Torunczyk.
\newblock Twin-width {IV:} ordered graphs and matrices.
\newblock {\em J. {ACM}}, 71(3):21, 2024.
\newblock \href {https://doi.org/10.1145/3651151} {\path{doi:10.1145/3651151}}.

\bibitem{BonnetKRTW22}
{\'{E}}douard Bonnet, Eun~Jung Kim, Amadeus Reinald, St{\'{e}}phan
  Thomass{\'{e}}, and R{\'{e}}mi Watrigant.
\newblock Twin-width and polynomial kernels.
\newblock {\em Algorithmica}, 84(11):3300--3337, 2022.
\newblock \href {https://doi.org/10.1007/s00453-022-00965-5}
  {\path{doi:10.1007/s00453-022-00965-5}}.

\bibitem{twin-width1}
\'{E}douard Bonnet, Eun~Jung Kim, St\'{e}phan Thomass\'{e}, and R\'{e}mi
  Watrigant.
\newblock Twin-width {I}: {T}ractable {FO} model checking.
\newblock {\em J. ACM}, 69(1):Art. 3, 46, 2022.
\newblock \href {https://doi.org/10.1145/3486655} {\path{doi:10.1145/3486655}}.

\bibitem{BrandstadtL03}
Andreas Brandst{\"{a}}dt and Vadim~V. Lozin.
\newblock On the linear structure and clique-width of bipartite permutation
  graphs.
\newblock {\em Ars Comb.}, 67, 2003.

\bibitem{CapelleHM02}
Christian Capelle, Michel Habib, and Fabien de~Montgolfier.
\newblock Graph decompositions andfactorizing permutations.
\newblock {\em Discret. Math. Theor. Comput. Sci.}, 5(1):55--70, 2002.
\newblock URL: \url{https://doi.org/10.46298/dmtcs.298}, \href
  {https://doi.org/10.46298/DMTCS.298} {\path{doi:10.46298/DMTCS.298}}.

\bibitem{SplitAlgo}
Pierre Charbit, Fabien de~Montgolfier, and Mathieu Raffinot.
\newblock Linear time split decomposition revisited.
\newblock {\em {SIAM} J. Discret. Math.}, 26(2):499--514, 2012.
\newblock \href {https://doi.org/10.1137/10080052X}
  {\path{doi:10.1137/10080052X}}.

\bibitem{strongperfectthm}
Maria Chudnovsky, Neil Robertson, Paul Seymour, and Robin Thomas.
\newblock The strong perfect graph theorem.
\newblock {\em Annals of Mathematics}, 164(1):51–229, July 2006.
\newblock URL: \url{http://dx.doi.org/10.4007/annals.2006.164.51}, \href
  {https://doi.org/10.4007/annals.2006.164.51}
  {\path{doi:10.4007/annals.2006.164.51}}.

\bibitem{CorneilHLRR12}
Derek~G. Corneil, Michel Habib, Jean{-}Marc Lanlignel, Bruce~A. Reed, and Udi
  Rotics.
\newblock Polynomial-time recognition of clique-width {\(\leq\)}3 graphs.
\newblock {\em Discret. Appl. Math.}, 160(6):834--865, 2012.
\newblock URL: \url{https://doi.org/10.1016/j.dam.2011.03.020}, \href
  {https://doi.org/10.1016/J.DAM.2011.03.020}
  {\path{doi:10.1016/J.DAM.2011.03.020}}.

\bibitem{Corneil1997}
Derek~G Corneil, Stephan Olariu, and Lorna Stewart.
\newblock Asteroidal triple-free graphs.
\newblock {\em SIAM Journal on Discrete Mathematics}, 10(3):399--430, 1997.

\bibitem{SplitDecIntro}
William~H. Cunningham.
\newblock Decomposition of directed graphs.
\newblock {\em SIAM Journal on Algebraic Discrete Methods}, 3(2):214--228,
  1982.
\newblock \href {https://arxiv.org/abs/https://doi.org/10.1137/0603021}
  {\path{arXiv:https://doi.org/10.1137/0603021}}, \href
  {https://doi.org/10.1137/0603021} {\path{doi:10.1137/0603021}}.

\bibitem{Montgolfier03}
Fabien de~Montgolfier.
\newblock {\em D{\'{e}}composition modulaire des graphes. Th{\'{e}}orie,
  extension et algorithmes. (Graph Modular Decomposition. Theory, extension and
  algorithms)}.
\newblock PhD thesis, Montpellier 2 University, France, 2003.
\newblock URL: \url{https://tel.archives-ouvertes.fr/tel-03711558}.

\bibitem{DushnikMiller}
Ben Dushnik and E.~W. Miller.
\newblock Partially ordered sets.
\newblock {\em American Journal of Mathematics}, 63(3):600--610, July 1941.

\bibitem{Gallai}
Tibor Gallai.
\newblock Transitiv orientierbare graphen.
\newblock {\em Acta Mathematica Academiae Scientiarum Hungaricae}, 18:25--66,
  1967.
\newblock German.

\bibitem{Tournaments}
Colin Geniet and St{\'{e}}phan Thomass{\'{e}}.
\newblock First order logic and twin-width in tournaments.
\newblock In Inge~Li G{\o}rtz, Martin Farach{-}Colton, Simon~J. Puglisi, and
  Grzegorz Herman, editors, {\em 31st Annual European Symposium on Algorithms,
  {ESA} 2023, September 4-6, 2023, Amsterdam, The Netherlands}, volume 274 of
  {\em LIPIcs}, pages 53:1--53:14. Schloss Dagstuhl - Leibniz-Zentrum f{\"{u}}r
  Informatik, 2023.
\newblock URL: \url{https://doi.org/10.4230/LIPIcs.ESA.2023.53}, \href
  {https://doi.org/10.4230/LIPICS.ESA.2023.53}
  {\path{doi:10.4230/LIPICS.ESA.2023.53}}.

\bibitem{GioanP12}
Emeric Gioan and Christophe Paul.
\newblock Split decomposition and graph-labelled trees: Characterizations and
  fully dynamic algorithms for totally decomposable graphs.
\newblock {\em Discret. Appl. Math.}, 160(6):708--733, 2012.
\newblock \href {https://doi.org/10.1016/j.dam.2011.05.007}
  {\path{doi:10.1016/j.dam.2011.05.007}}.

\bibitem{Gol80}
M.C. Golumbic.
\newblock {\em Algorithmic graph theory and perfect graphs}.
\newblock Academic Press, 1980.

\bibitem{GuillemotM14}
Sylvain Guillemot and D{\'{a}}niel Marx.
\newblock Finding small patterns in permutations in linear time.
\newblock In Chandra Chekuri, editor, {\em Proceedings of the Twenty-Fifth
  Annual {ACM-SIAM} Symposium on Discrete Algorithms, {SODA} 2014, Portland,
  Oregon, USA, January 5-7, 2014}, pages 82--101. {SIAM}, 2014.
\newblock \href {https://doi.org/10.1137/1.9781611973402.7}
  {\path{doi:10.1137/1.9781611973402.7}}.

\bibitem{HabibP10}
Michel Habib and Christophe Paul.
\newblock A survey of the algorithmic aspects of modular decomposition.
\newblock {\em Comput. Sci. Rev.}, 4(1):41--59, 2010.
\newblock URL: \url{https://doi.org/10.1016/j.cosrev.2010.01.001}, \href
  {https://doi.org/10.1016/J.COSREV.2010.01.001}
  {\path{doi:10.1016/J.COSREV.2010.01.001}}.

\bibitem{thesisATfree}
Ekkehard Köhler.
\newblock {\em Graphs Without Asteroidal Triples}.
\newblock PhD thesis, Technischen Universität Berlin, 1999.

\bibitem{Lozin2011}
Vadim~V. Lozin.
\newblock Minimal classes of graphs of unbounded clique-width.
\newblock {\em Annals of Combinatorics}, 15(4):707–722, October 2011.
\newblock URL: \url{http://dx.doi.org/10.1007/s00026-011-0117-2}, \href
  {https://doi.org/10.1007/s00026-011-0117-2}
  {\path{doi:10.1007/s00026-011-0117-2}}.

\bibitem{McConnellS99}
Ross~M. McConnell and Jeremy~P. Spinrad.
\newblock Modular decomposition and transitive orientation.
\newblock {\em Discret. Math.}, 201(1-3):189--241, 1999.
\newblock \href {https://doi.org/10.1016/S0012-365X(98)00319-7}
  {\path{doi:10.1016/S0012-365X(98)00319-7}}.

\bibitem{Torunczyk23}
Szymon Torunczyk.
\newblock Flip-width: Cops and robber on dense graphs.
\newblock In {\em 64th {IEEE} Annual Symposium on Foundations of Computer
  Science, {FOCS} 2023, Santa Cruz, CA, USA, November 6-9, 2023}, pages
  663--700. {IEEE}, 2023.
\newblock \href {https://doi.org/10.1109/FOCS57990.2023.00045}
  {\path{doi:10.1109/FOCS57990.2023.00045}}.

\end{thebibliography}
\appendix

\begin{figure}[t]
    \centering
    \includegraphics[scale=0.75]{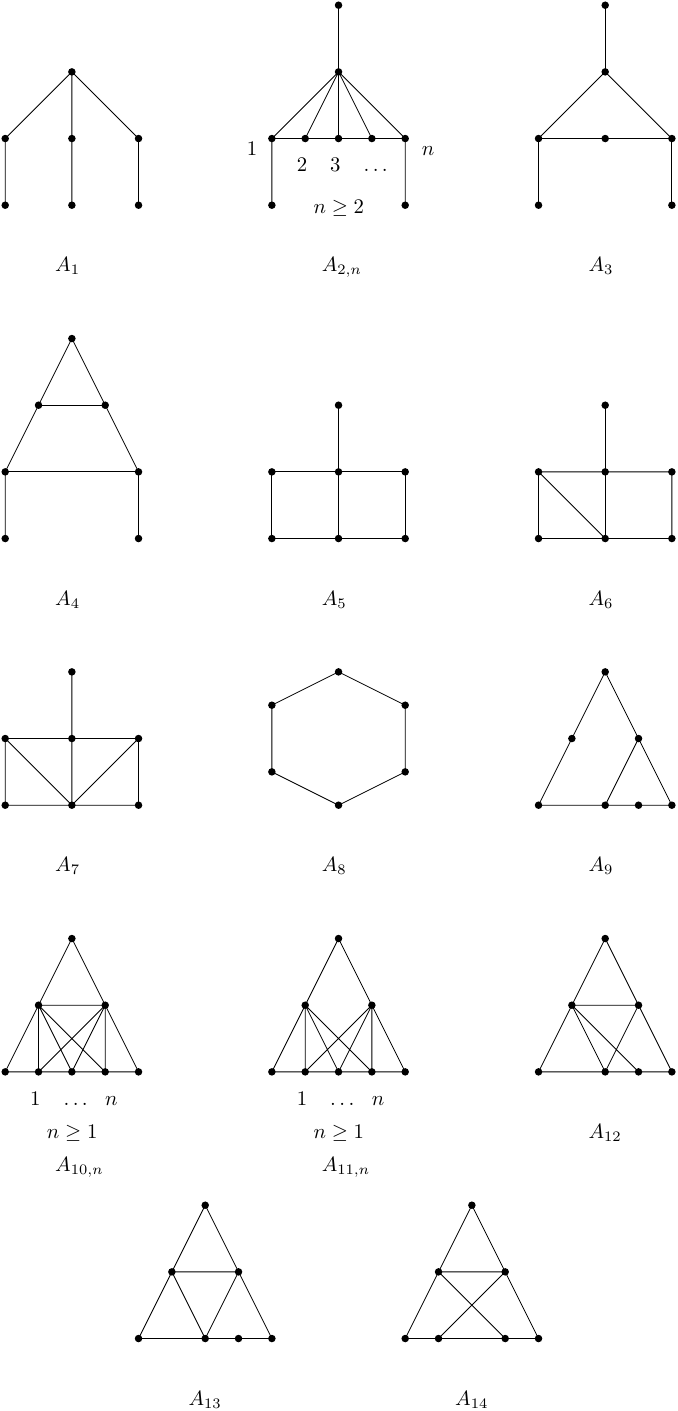}
    \caption{Minimal graphs containing an asteroid triple.}
    \label{fig:3asteroid}
\end{figure}

\begin{figure}[t]
    \centering
    \includegraphics[scale=0.7]{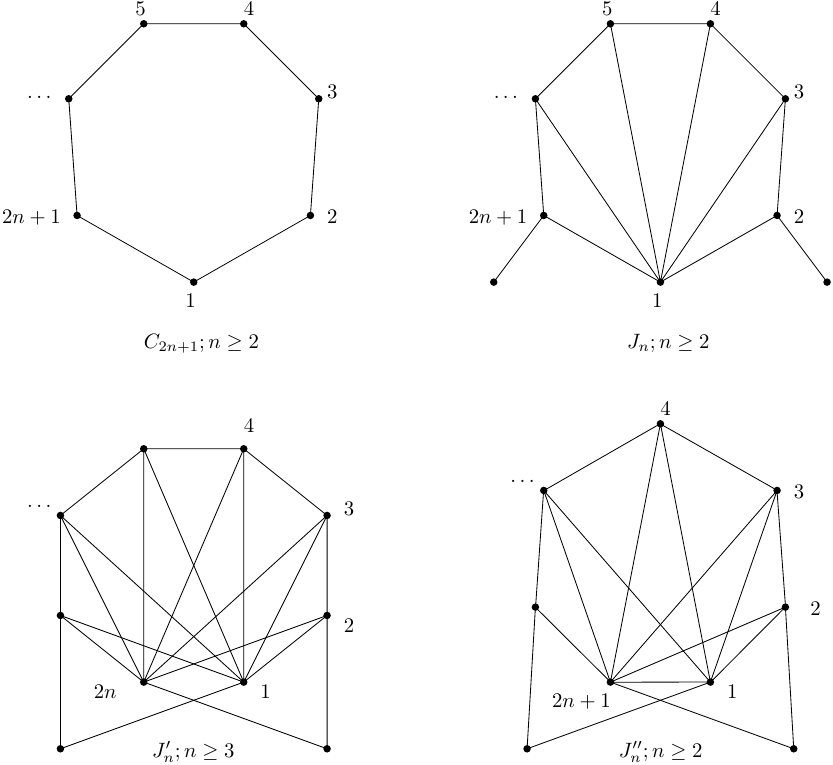}
    \caption{Complements of minimal graphs containing a $(2k+1)$-asteroid for $k>1$.}
    \label{fig:largeasteroid}
\end{figure}

\end{document}